\newtheorem{theorem}{Theorem}[section]
\newtheorem{lemma}[theorem]{Lemma}
\DeclareMathAlphabet{\mathscrbf}{OMS}{mdugm}{b}{n}
 \newcommand{\RN}[1]{
  \textup{\expandafter{(\romannumeral#1)}}%
}
\begin{document}
\title{
Efficient Pseudomode Representation and Complexity of Quantum Impurity Models}

\author{Julian Thoenniss}
\thanks{These authors contributed equally to this work.}
\affiliation{Department of Theoretical Physics,
University of Geneva, 30 Quai Ernest-Ansermet,
1205 Geneva, Switzerland}

\author{Ilya Vilkoviskiy}
\thanks{These authors contributed equally to this work.}
\affiliation{Department of Theoretical Physics,
University of Geneva, 30 Quai Ernest-Ansermet,
1205 Geneva, Switzerland}
\affiliation{Department of Physics, Princeton University, Princeton, New Jersey 08544, USA}

\author{Dmitry A. Abanin}
\affiliation{Department of Physics, Princeton University, Princeton, New Jersey 08544, USA}
\affiliation{Google Research, Brandschenkestrasse 150, 8002 Z\"urich, Switzerland}

\date{\today}

\begin{abstract}

Out-of-equilibrium fermionic quantum impurity models (QIM), describing a small interacting system coupled to a continuous fermionic bath, play an important role in condensed matter physics. Solving such models is a computationally demanding task, and a variety of computational approaches are based on finding approximate representations of the bath by a finite number of modes. In this paper, we formulate the problem of finding efficient bath representations as that of approximating a kernel of the bath's Feynman-Vernon influence functional by a sum of complex exponentials, with each term defining a fermionic pseudomode. Under mild assumptions on the analytic properties of the bath spectral density, we provide an analytic construction of pseudomodes, and prove that their number scales polylogarithmically with the maximum evolution time $T$ and the approximation error $\varepsilon$. We then demonstrate that the number of pseudomodes can be significantly reduced by an interpolative matrix decomposition (ID).
Furthermore, we present a complementary approach, based on constructing rational approximations of the bath's spectral density using the ``AAA'' algorithm, followed by compression with ID. The combination of two approaches yields a pseudomode count scaling as $N_\text{ID} \sim \log(T)\log(1/\varepsilon)$, and the agreement between the two approches suggests that the result is close to optimal. Finally, to relate our findings to QIM, we derive an explicit Liouvillian that describes the time evolution of the combined impurity-pseudomodes system. These results establish bounds on the computational resources required for solving out-of-equilibrium QIMs, providing an efficient starting point for tensor-network methods for QIMs. 

\end{abstract}
\maketitle

\section{Introduction}

Quantum impurity models (QIM), describing a small interacting system such as an electronic orbital coupled to continuous baths of non-interacting fermions, play a special role in quantum many-body physics. Despite their simplicity, they host a wealth of phenomena, including Fermi-edge singularities, Kondo effect, and non-Fermi liquid behavior~\cite{BullaRMP2008}. With the advances in experimental techniques for both solid-state and cold-atomic systems~\cite{Knap2012,Grimm2016}, non-equilibrium properties of fermionic QIM have been attracting much interest, in particular, in the context of quantum transport phenomena~\cite{Pustilnik_2004}. 

Fermionic QIMs are also a centerpiece of several computational techniques for the predictive modeling of strongly correlated materials. In these approaches, dynamical mean-field theory (DMFT)~\cite{GeorgesRMP,aoki14neqdmftreview} and density matrix embedding theory~\cite{DMEmbeddingChan2018} being prime examples, a correlated material is modeled by a quantum impurity coupled to one or more non-interacting baths, with spectral properties computed self-consistently. 

Due to the ubiquity and broad applications of QIMs, tremendous efforts have been dedicated to developing efficient numerical impurity solvers, both in and out of equilibrium. Quantum Monte  Carlo (QMC) methods for real time evolution~\cite{Muhlbacher08Realtime,Schiro09realtime,Werner09Diagrammatic,gull11NumericallyExact,MonteCarloReview,cohen11memory,cohen13neqkondo} are generally limited by the fermionic sign problem and statistical errors, although the former can be partially addressed with inchworm diagrammatic QMC~\cite{cohen15taming}.

A number of tensor-networks methods for non-equilibrium QIM  have been developed. The starting point is often to approximate a continuous bath by a discrete set of fermionic modes, followed by tensor-network compression of the time-evolved state of impurity and the discrete bath. Approaches with modes described by the Hamiltonian~\cite{prior10efficient,WolfPRB14,Nusseler20Efficient} and dissipative evolution~\cite{dorda14auxiliary,Dorda2017,Lotem20renormalized, Schwarz18Nonequilibrium} have been put forward, and different schemes for bath discretization were investigated. Further, a hierarchical equation of motion approach, combined with MPS representation, has been applied to the Anderson impurity model~\cite{DanPRB2023HEOM}. 

Recent works developed a conceptually different approach~\cite{ThoennissPRB2022, ThoennissPRB23,NgPRB23,ParkPRB24} to out-of-equilibrium fermionic QIM, based on matrix-product states (MPS) representations of the Feynman-Vernon influence functional (IF)~\cite{FeynmanVernon}---an object that arises when the bath degrees of freedom are integrated out and that captures non-Markovian properties of the bath. It was shown that the IF of a fermionic bath displays an area-law scaling of temporal entanglement~\cite{lerose2021,ThoennissPRB2022} at any non-zero temperature. This suggests that the QIM can be simulated with computational resources that scale polynomially with both the evolution time and the target error. This intuition was confirmed in numerical implementations of such IF-MPS methods applied to the single-orbital Anderson QIM~\cite{ThoennissPRB23,NgPRB23}. 

Despite recent methodological advances, rigorous results on the computational complexity of the non-equilibrium QIM remain scarce~\footnote{We note that a quasipolynomial-time classical algorithm for finding the ground state energy of QIM is available~\cite{Bravyi2016}}. Additionally, compressing an IF into MPS form with a high bond dimension, though aided by its Gaussian structure, is generally relatively resource-intensive. This complexity poses challenges for applying the method to DMFT, where solving the QIM requires handling multiple bath parameters within a self-consistency loop.

\begin{figure}
    \centering
    \begin{overpic}
        [width=\linewidth]{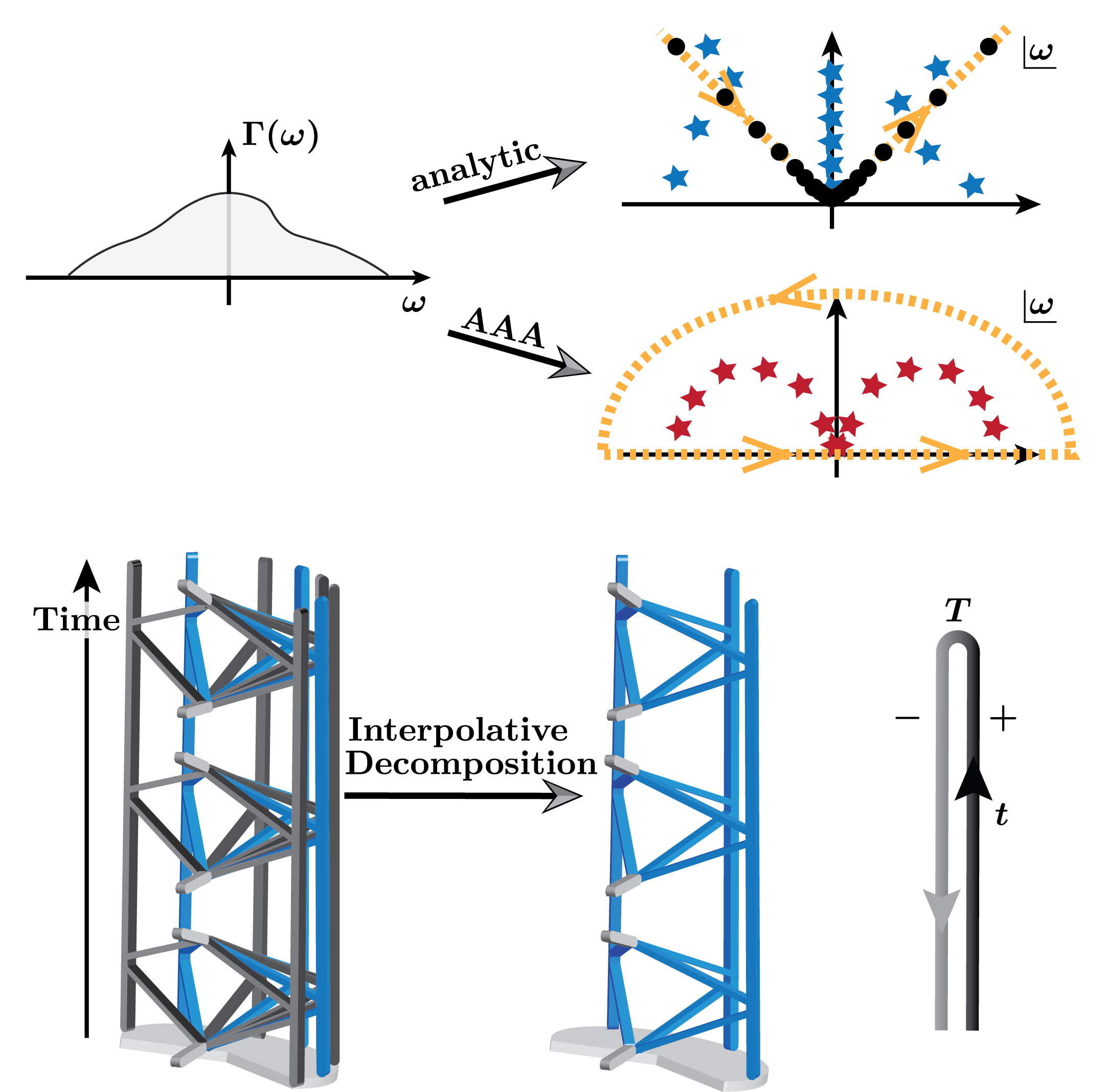}
        \put(0,97){\text{a) Bath spectral density}}
         \put(50,97){\text{b) Determine pseudomodes}}
          \put(0,50){\text{c) Compress pseudomodes}}
          \put(67,50){\text{d) Keldysh Contour}}
    \end{overpic}
    \caption{Schematic illustration of the approach used in this work. Starting from a continuous spectral density (a), we determine a set of complex modes through either analytic construction (b, top) or the AAA algorithm (b, bottom). The yellow dotted line indicates the integration contour used for the Fourier integrals that define the hybridization funtion. In the analytic construction, pseudomodes are obtained via exponential frequency parametrization along a rotated contour in the complex plane (black dots). Blue stars represent poles from the spectral density and the Fermi-Dirac distribution, which must be considered when rotating the contour. The AAA algorithm identifies pseudomodes as sets of poles and residues in the complex plane (red stars).
c) In order to compress the set of modes, we consider the hybridization function which encodes the sum of temporal correlations induced by each mode. By employing interpolative matrix decomposition, we extract a subset of pseudomodes with renormalized couplings which approximates the hybridization function with a controlled error. d) Illustration of the Keldysh contour.}
    \label{fig:schematic_figure}
\end{figure}
In this paper, we investigate the computational complexity of non-equilibrium QIM. Our approach combines approximating the continuous bath by a finite set of ``pseudomodes'' (fermionic levels coupled to the impurity, with possibly complex frequencies and couplings)~\cite{dorda14auxiliary}, with the use of influence functionals to bound errors on observables introduced by such an approximation, recently used in the context of spin-boson model~\cite{VilkoviskiyPRB2024}. In particular, we prove that, if the IF is well-approximated, then the time evolution of an impurity coupled to such a set of pseudomodes closely approximates the evolution in the original QIM problem. It is worth noting that pseudomode representations have been previously fruitfully employed in the context of bosonic baths~\cite{TamascelliPRL2018,Somoza2019,Xu2022}. 

Conceptually, our method rests on reducing the full problem to the description of kernel functions that take the form of ordinary Fourier integrals. The challenge of constructing an efficient set of pseudomodes is then reframed as the broader problem of approximating a Fourier integral with a finite sum of decaying exponentials:
\begin{equation}\label{eq:Fourier_integral_intro}
    \int \frac{d\omega}{2\pi} f(\omega) e^{i\omega t} \approx \sum_k \Gamma_k e^{i\omega_k t}, 
\end{equation} with $\omega_k,\Gamma_k \in \mathbb{C}.$
To achieve this, we employ a combination of complex analysis techniques and efficient numerical sampling algorithms that incorporate both spectral information and temporal correlations. This approach enables the construction of compact, discrete representations of general Fourier transforms.

Relating our findings back to original quantum impurity problems, we obtain a number of results regarding the approximation of different baths with pseudomodes. 
First, we prove a theorem which bounds the computational resources required to describe time evolution of a QIM up to a given time $T$ with a desired error $\varepsilon$ that is defined as time integral of the deviation between a kernel function and its approximation. This result, proven under certain assumptions regarding the analytic structure of the spectral density of the bath, states that the bath can be approximated by a number of fermionic pseudomodes that grows polylogarithmically both in $T$ and $\varepsilon^{-1}$: 
\begin{equation}\label{eq:Nbath}
N_{\rm bath}  \sim \log(T/\varepsilon)\log(1/\varepsilon). 
\end{equation}
Interestingly, such compact approximation of the bath requires pseudomodes with complex frequencies, corresponding to dissipative evolution, and with complex, unphysical couplings to the impurity level~\footnote{Nevertheless, the {\it combination} of pseudomodes closely approximates the original, physical evolution, and, moreover, it can be complemented to fully physical evolution if a dissipative decay channel is added to the impurity evolution.}.

Second, we show that the resulting pseudomode sets can be further compressed by employing interpolative matrix decomposition (ID), which identifies and renormalizes a relevant subset of pseudomodes based on their contribution to the bath hybridization function. ID has recently been fruitfully applied in order to construct pseudomodes for equilibrium problems, and has been shown to simplify the evaluation of vertex functions~\cite{KayeDiscrete,kiese24Discrete}.

Lastly, we develop a complementary numerical approach based on finding a rational approximation to the product of the bath's spectral density and thermal distribution function by means of the adaptive Anatoulas-Anderson (AAA) algorithm~\cite{NakatsukasaAAA}. The AAA algorithm---which was recently successfully used in a closely related context of HEOM approach to QIM~\cite{DanPRB2023HEOM}---yields a set of poles and residues in the complex plane that define a set of pseudomodes. Here we also find that the obtained pseudomode sets can be strongly compressed with ID to yield a more compact representation of the bath.

The combination of the above approaches allows us to construct an efficient pseudomode approximation for arbitrary spectral densities, including those where the rigorous proof does not apply or spectral information is known only as numerical data points. In this work, focused on non-equilibrium setups, we study a number of physically relevant kinds of baths (flat band, semicircle spectral density, gapped, linear spectral density) and show that zero frequency singularities, including cusps, can be overcome. Across all cases, we find the compressed number of pseudomodes to consistently scale as \begin{equation}\label{eq:Nbath_improved}
    N_\text{ID}\sim \log(T)\log(1/\varepsilon),
\end{equation} 
with quantitative agreement between both approaches.

The representations of fermionic baths, along with the accuracy guarantees we obtain, indicate the polynomial computational complexity of non-equilibrium QIM. As we will see below, both analytic and AAA approaches combined with ID compression lead to representations with a quantitatively similar number of modes $N_\text{ID}$. This suggest that the representations we obtain are nearly optimal. 

Lastly, we connect our findings to the Lindblad framework and derive an explicit Liouvillian $\mathcal{L}$ that describes the time evolution of the combined system of impurity and pseudomodes in the form,
\begin{equation}\label{eq:Lindblad_intro}
    \frac{d}{dt}\ket{\rho} = \mathcal{L}\ket{\rho}.
\end{equation}    
 It is conceivable that using MPS for time-evolution will allow to further reduce required computational resources, although we leave an exploration of this issue for future work.

The paper is organized as follows: In Sec.~\ref{sec:hyb_func}, we review the Keldysh formalism for QIM and reduce the problem of describing the bath to that of approximating a Fourier integral as sum of decaying exponentials. Sec.~\ref{sec:analytic_bound} derives the scaling law in Eq.~(\ref{eq:Nbath}). In Sec.~\ref{sec:compression}, we introduce interpolative decomposition (ID) to compress the obtained approximation. Sec.~\ref{sec:AAA} presents the AAA algorithm for generating pseudomodes, suited to more general spectral densities. In Sec.~\ref{sec:complexity_general_baths}, we compare different spectral densities, treated with both approaches, and present indication for near-optimality of the approximation. Finally, Sec.~\ref{sec:dynamics} connects the pseudomode construction to the Lindblad formalism.

\section{Pseudomode Approximation to Bath Dynamics}\label{sec:hyb_func}

In this section, we review the main equations of the Keldysh path integral approach to non-equilibrium quantum impurity problems, which form the foundation of our work. A key feature of this formalism is the natural emergence of the hybridization function $\Delta(\tau,\tau^\prime)$ which fully encodes the dynamic influence of the bath on the impurity and is the central object of this work.

We consider the quench dynamics of a localized quantum impurity: At time $t=0,$ the impurity is coupled to a fermionic bath, after which the joint system evolves according to the Hamiltonian,
\begin{align}\label{eq:Ham_full}
 H &= H_\text{imp} + H_\text{hyb} + H_\text{bath},\\
H_\text{hyb} &= \sum_k t_k\, d^\dagger c_{k} + h.c., \label{eq:H_hyb}\\
H_\text{bath} &= \sum_k \epsilon_k c_{k}^\dagger c_{k}.\label{eq:H_bath}
\end{align}
Here $d$ ($d^\dagger$) and $c_{k}$ ($c_{k}^\dagger$) are annihilation (creation) operators on the impurity and the $k^\text{th}$ bath mode, respectively. They obey the standard fermionic anti-commutation relations:
\begin{gather}
\{d^{\dagger},d\}=1 \,,\\
\{c^{\dagger}_p,c_k\}=\delta_{p,k} \,.
\end{gather}
The local interactions on the impurity are encoded in $H_\text{imp}$ which we keep unspecified for the sake of generality.

The problem of describing local quantities on the impurity, such as physical observables and temporal correlation functions, can be phrased in terms of a $0+1$ dimensional path integral, where the dynamics is fully defined by $H_\text{imp}$ and by the hybridization function $\Delta(\tau,\tau^\prime):$\\
\begin{multline}
\langle \hat{O}_{\text{imp}}(t)\rangle
\;\propto\;
\int \bigg(\prod_{\tau} d\bar{\eta}_{\tau}d\eta_{\tau}\bigg)\mathcal{O}_{\text{imp}}(\bar{\eta}_{t},\eta_{t}) \\
\times \exp\bigg(\int_\mathcal{C} d\tau  \Big[\bar{\eta}_{\tau}\partial_{\tau} \eta_{\tau}
-i \mathcal{H}_\text{imp}(\bar{\eta}_{\tau},\eta_{\tau})\Big] \bigg)\mathcal{\rho}_\text{imp}[\bar{\bm{\eta}}_{0},\bm{\eta}_{0}]\\
\times \exp\bigg(\int_\mathcal{C}d\tau \int_\mathcal{C} d\tau^\prime \bar{\eta}_{\tau} \Delta(\tau,\tau^\prime) \eta_{\tau^\prime} \bigg).
\label{eq:expect_value}
\end{multline}
Here, $\tau\in \mathcal{C}$ parametrizes the Keldysh time contour up to a final time $T,$ defined as $\mathcal{C}=(0^+\to T^+ \to T^-\to 0^-)$, see Fig.~\ref{fig:schematic_figure}d. Moreover, $\mathcal{\rho}_\text{imp}[\bar{\bm{\eta}}_{0},\bm{\eta}_{0}]$ is the density matrix kernel of the initial impurity state and the operator $\hat{O}_\text{imp}$ and its corresponding kernel $\mathcal{O}_\text{imp}$ describes the observable at time $t\in\mathcal{C}$.

The hybridization function \(\Delta(\tau, \tau^\prime)\) can be decomposed into particle and hole components,
\begin{equation}\label{eq:hybridization_function}
    \Delta(\tau, \tau^\prime) = \Delta^p(\tau, \tau^\prime) + \Delta^h(\tau, \tau^\prime).
\end{equation}
While each component has a specific structure on the Keldysh contour, they can each be expressed in terms of a kernel function with a single positive time argument \( t \geq 0 \):
\begin{align}
    \Delta^p(t) &= \int \frac{d\omega}{2\pi} \Gamma(\omega) g^p(t; \omega), \label{eq:particle_integral}\\
    \Delta^h(t) &= \int \frac{d\omega}{2\pi} \Gamma(\omega) g^h(t; \omega),\label{eq:hole_integral}
\end{align} for the particle and hole components $\Delta^p$ and $\Delta^h$, respectively. For the exact relations, see Eqs.~(\ref{eq:Keldysh_particle_relation1}--\ref{eq:Keldysh_hole_relation4}).
Here, we introduced
\begin{align}
    g^p(t; \omega) &= (1-n_\text{F}(\omega)) \, e^{i\omega t}, \label{eq:gp_definition}\\
    g^h(t; \omega) &= n_\text{F}(\omega) \, e^{i\omega t}, \label{eq:gh_definition}\\
    \Gamma(\omega) &= \sum\limits_k |t_k|^2 \delta(\omega - \epsilon_k).
\end{align}

This provides all the necessary prerequisites to summarize the conceptual foundation of this work:

\begin{enumerate}[label=(\roman*)]
\item The kernel functions $\Delta^{p}(t)$ and $\Delta^{h}(t),$ viewed as ordinary Fourier integrals for $t \in \mathbb{R}^+$, can be approximated as a sum of decaying exponentials,
\begin{equation}
    \label{eq:exponential_approx}
    \Delta(t) \approx \tilde{\Delta}(t) = \sum_{k} \Gamma_k e^{i\omega_k t}
\end{equation} with $\Gamma_k,\omega_k \in \mathbb{C}$ and $\Im(\omega_k)\geq 0.$
Constructing efficient and controlled approximations of this type is the central focus of Secs.~\ref{sec:analytic_bound}--\ref{sec:complexity_general_baths}.
\item Each term in Eq.~(\ref{eq:exponential_approx}), characterized by a tuple $(\Gamma_k, \omega_k),$ defines a dissipative fermionic pseudomode. Here, $\Gamma_k$ parametrizes the coupling to the impurity, $\Omega_k=\Re(\omega_k)$ is the oscillation frequency, and $\gamma_k=\Im(\omega_k)$ is the decay rate. In Sec.~\ref{sec:dynamics}, we leverage this identification to connect our results back to the original problem and derive an explicit Liouvillian that describes the dynamics of the combined impurity-pseudomode system.
\end{enumerate}

This naturally raises the question of the minimal number of pseudomodes \( N_\text{bath} \) required to approximate the kernel functions to a given accuracy, i.e.,
\begin{equation}\label{eq:exponential_decomposition}
\Delta(t) = \sum\limits_{k=1}^{N_\text{bath}} \Gamma_k e^{i\omega_k t} + \delta(t), \text{ with} 
\int\limits_{0}^{T} |\delta(t)| \, dt < \varepsilon,
\end{equation}
for any $t\leq T$, where $T$ is maximum simulation time. Note that with this definition of the error, $\varepsilon$ has the dimension of energy. In Sec.~\ref{sec:Bound_on_the_observables}, we will show that the actual error on observables is given by $T\varepsilon$.
The remainder of this article is dedicated to investigating various approaches to this question, encompassing both rigorous error bounds and a numerical compression algorithm.

We conclude this section by noting a subtle issue: While for $\Gamma_k \in \mathbb{R},$ a parameter tuple $(\Gamma_k, \omega_k)$ corresponds to a physical fermionic mode that is described by standard Lindblad evolution, complex $\Gamma_k \notin \mathbb{R}$ violate physicality conditions and a straight-forward Lindbladian description is not possible in this case. In Sec.~\ref{sec:dynamics} and App.~\ref{app:Lindblad_construction}, we further explore this aspect and demonstrate how the case $\Gamma_k \notin \mathbb{R}$ can be reconciled with the Lindblad framework.

\section{Analytic Error Bound}
\label{sec:analytic_bound}

In this Section, we present our first main result: an analytic bound on the number of exponential terms needed to approximate the kernel function, Eq.~(\ref{eq:exponential_approx}), with a desired accuracy. First, in Subsection~\ref{sec:bound}, we prove the corresponding theorem. Second, in Subsection~\ref{sec:Bound_on_the_observables}, we show that the impurity dynamics can be accurately reproduced with the approximate bath composed of pseudomodes, and derive an error bound for observables.

\subsection{Proof of Bound for Error Scaling}\label{sec:bound}

Our goal is to analytically derive an exponential sum representation of the kernel function in Eq.~(\ref{eq:exponential_approx}). We achieve this with the help of complex analysis techniques, relying on assumptions regarding the analytic properties of the spectral density $\Gamma(\omega)$. Conceptually, our approach in this Section generalizes the method presented in Ref.~\cite{Beylkin2010ApproximationBE}, which derives a similar exponential representation for the power function $t^{-\alpha}$, and which we used to bound the complexity of an Ohmic bosonic bath~\cite{VilkoviskiyPRB2024}.

We prove the following theorem:
\begin{theorem} \label{th:finite_approximation} For a fixed angle $0<r_{\text{max}}<\frac{\pi}{4}$, consider a spectral density $\Gamma(\omega)$ which is meromorphic in the upper half-plane. Further, assume that this function decays at least exponentially as $|\omega|\to\infty$,
\begin{equation}\label{eq:th_condition}
|\Gamma(\omega)|< \Gamma e^{-\nu|\Re(\omega)|}\,, \quad \{|\omega| \gg \Lambda\,, \ 0<\arg(\omega)<2r_{\text{max}}\} \ ,
\end{equation}
where $\Re(\omega)$ is the real part of $\omega$, and has a finite number of poles $\omega=\Omega_k$ in the sector $0<\arg(\omega)<2r_{\text{max}}$. 

Then, for any inverse temperature $\beta$, the positive (negative) part of the particle (hole) kernel function 
\begin{equation} \label{eq:cont_integral}
\Delta^p_+(t)=\int\limits_{0}^{\infty}\Gamma(\omega)(1-n_{\text{F}}(\omega))e^{i\omega t} \frac{d\omega}{2\pi}
\end{equation}
can be approximated by a finite number of exponentials:
\begin{equation}\label{eq:theorem_discrete_sum}
\Delta^p_+(t)=\sum\limits_{k=1}^{N_{\rm{bath}}} \Gamma_k e^{i\omega_k t}+\delta(t)\,,
\end{equation}
with a bounded error
\begin{equation}
\int\limits_{t=0}^{T}|\delta(t)|dt<\varepsilon,
\end{equation}
and the total number of terms scales as:
\begin{equation}\label{eq:N_bath_scaling}
N_{\rm{bath}}\sim\frac{1}{2\pi r}\log(\varepsilon^{-1})\log(T\varepsilon^{-1})\,, \quad \text{for any} \ 0<r<r_{\text{max}}
\end{equation}
\begin{proof}
In this Section, we sketch the main steps of the proof while referring the reader to Appendix \ref{sec:proof} for more details. 

Starting from Eq.~(\ref{eq:exponential_approx}), we split the integration domain into two intervals, corresponding to positive and negative frequencies. In what follows, we focus on the positive frequency interval of the particle component---which we denote as $\Delta_+^p$---noting that the treatment of negative frequencies contribution and the hole component is analogous. 

\begin{figure}[h]
\includegraphics[scale=0.6]{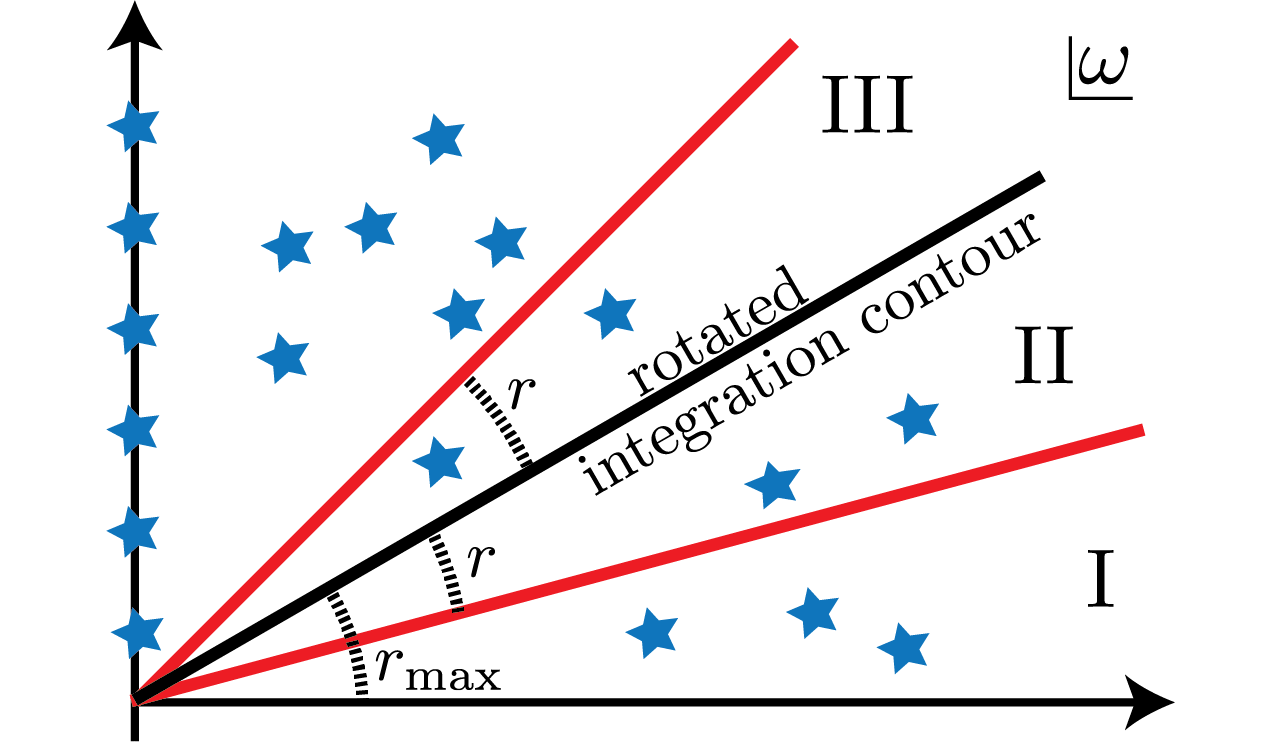}
    \caption{Integration contour and different groups of poles.}
    \label{fig:rotated_contour}
\end{figure}

Our proof consists of three steps. First, we rotate the integration contour in Eq.~\eqref{eq:cont_integral} into the complex plane. Second, we discretize the rotated integral and estimate the difference between the continuous and discretized integral. The third step is to truncate the obtained infinite summation, obtaining the desired finite sum approximation. As we will see below, the exponentials in Eq.~$\eqref{eq:theorem_discrete_sum}$ can be grouped as follows:
\begin{equation}\label{eq:three_sums_theorem}
\sum\limits_{k=1}^{N_{\text{bath}}}\Gamma_k e^{i\omega_k t}=D_1(t)+D_2(t)+\tilde{\Delta}^p_{+,\text{rotated}}(t),
\end{equation}
where each of the terms $D_1(t),D_2(t),$ and $\tilde{\Delta}^p_{+,\text{rotated}}(t),$ has a distinct interpretation.

Let us describe each of the steps in more detail. First, we rotate the integral into the complex plane, \begin{equation}
\omega\to\omega e^{ir_{\text{max}}},
\label{eq:first_variable_transform}
\end{equation}
and define:
\begin{equation}\label{eq:rotated_integral}
\Delta^p_{+,\text{rotated}}(t)=e^{ir_{\text{max}}}\int\limits_{0}^{\infty}f(t,\omega e^{ir_{\text{max}}})\frac{d\omega}{2\pi},
\end{equation}
\begin{equation}\label{eq:kernel_f}
f(t,\omega)=\Gamma(\omega)(1-n_{\text{F}}(\omega))e^{i\omega t}.
\end{equation} This step is schematically illustrated in Fig.~\ref{fig:rotated_contour}.
The rotated integral differs from the original one by the contribution of $\kappa_2$ residues, situated in regions I and II:
\begin{equation}\label{eq:three_groups_of_exponentials}
\Delta^p_+(t)-\Delta^p_{+,\text{rotated}}(t)=D_1(t)=\sum\limits_{k=1}^{\kappa_2} R_k(1-n_{\text{F}}(\Omega_k))e^{i\Omega_k t}, 
\end{equation}
where, $R_k=i\ \text{res}_{\omega=\Omega_k}\Gamma(\omega)$. This expression, which is a sum of exponentials, gives the first term in Eq.~\eqref{eq:three_groups_of_exponentials}. We note that, by the conditions of the theorem, the number of terms in $D_1(t)$ is finite.

Next, following Ref.~\cite{Beylkin2010ApproximationBE}, we discretize the complex integral. The main difficulty in approximating the kernel function $\Delta_+^p(t)$ is its possible slow polynomial decay at large times, common for many physically relevant spectral densities. In order to efficiently approximate the long-time behavior, we choose a non-uniform frequency discretization, concentrating more points around zero frequency. This is achieved by another change of variables,
\begin{equation}\label{eq:second_variable_transform}
    \omega=\Gamma e^{x},\end{equation} 
    with a uniform discretization in the variable $x$. To estimate the difference between the continuous integral of any function $\int\limits_{-\infty}^{\infty}g(x)dx$ and the discrete sum approximation $h\sum\limits_{k=-\infty}^{\infty} g(hk),$ we represent the latter as a contour integral:
\begin{equation}
h\sum\limits_{k-\infty}^{\infty}g(h k)=-\frac{i}{2}\oint g(x)\cot\frac{\pi x}{h}dx,
\end{equation}
and apply a version of Theorem 5.2 from Ref.~\cite{McNamee}. In Appendix~\ref{sec:proof}, we provide a careful analysis of the integral in Eq.~\eqref{eq:rotated_integral}, yielding the following estimate for the difference between the discretized and continuous integrals:
\begin{equation}\label{eq:Delta-Delta_discr}
\Delta^p_{+,\text{rotated}}(t)-\tilde{\Delta}^p_{+,\text{rotated}}(t)=\delta_r(t)+D_2(t).
\end{equation}
Here, $\delta_r(t)\sim e^{-\frac{2\pi r}{h}}$ is the small discrepancy, $\tilde{\Delta}^p_{+,\text{rotated}}(t)$ is the sum arising when the integral is discretized (see Eq.~\eqref{eq:rotated_integral_approx} below for an explicit expression). Further, we defined another pole contribution $D_2(t)$:
\begin{multline}\label{eq:D2}
    D_2(t)=\sum\limits_{k=\kappa_1+1}^{\kappa_2}R_k\frac{(1-n_{\text{F}}(\Omega_k))e^{-2i\pi x_k/h}}{1-e^{-2i\pi x_k/h}}e^{i\Omega_k t}-\\
-\sum\limits_{k=\kappa_2+1}^{\kappa_3}R_k\frac{\big(1-n_{\text{F}}(\Omega_k) \big)e^{2i\pi x_k/h}}{1-e^{2i\pi x_k/h}}e^{i\Omega_k t}, 
\end{multline}
involving the poles $k=\kappa_1+1,...\kappa_2$ situated in region II, as well as poles $k=\kappa_2+1,...\kappa_3$ situated in region III. Above, we introduced $x_k=\log\frac{\Omega_k}{\Gamma}$. 

The final step is to truncate the infinite sum,
\begin{multline}\label{eq:rotated_integral_approx}
\tilde{\Delta}^p_{+,\text{rotated}}(t)=\\= \frac{h}{2\pi}\sum\limits_{k=-\infty}^{\infty}\Gamma(\omega_k)\big(1-n_{\text{F}}(\omega_k)\big)e^{i\omega_k t}\omega_k,
\end{multline} with $\omega_k=\Gamma e^{hk+ir_{\text{max}}}.$
First, we note that the norm of the discrepancy $\|\delta(t)\|_{L_1}$ in Eq.~\eqref{eq:Delta-Delta_discr} is exponentially small in $\frac{1}{h}$
\begin{equation}
\|\delta(t)\|_{L_1}\sim e^{-\frac{2\pi r}{h}}.
\end{equation} 
This implies that a choice $h\sim \frac{2\pi r}{\log(\varepsilon^{-1})}$ of the summation step ensures an approximation error below $\varepsilon$. Further, the theorem conditions guarantee the exponential decay of the summand in Eq.(\ref{eq:rotated_integral_approx}) both in the limit of high frequencies $k\to \infty$ and in the limit of low frequencies $k\to -\infty$, which allows us to truncate the infinite sum. In the limit of large $k$, we use the theorem condition from Eq.~\eqref{eq:th_condition} to bound the upper summation limit $k_{\text{max}}=N-1$ as:
\begin{equation}
\Gamma e^{-\nu \Gamma \cos(r_{\text{max}})e^{hN}}\sim \varepsilon\,,\ \text{or} \ N\sim\frac{1}{h}\log\log\frac{\Gamma }{\varepsilon}.
\end{equation}
In the limit of low frequencies, one may neglect the frequency dependence of the spectral density and the Fermi distribution. The suppression comes from the $\omega_k$ prefactor in Eq.~\eqref{eq:rotated_integral_approx} and provides the lower bound cutoff $k_{\text{min}}=-M+1$:
\begin{equation}
T \Gamma(0)\Gamma e^{-h M}\cos(r_{\text{max}})\sim \varepsilon\,,\ \text{or} \ M\sim \frac{1}{h}\log\frac{T}{\varepsilon}.
\end{equation}
The factor $T$, the total evolution time, appears here because we are bounding the $L_1$ norm of the discrepancy, and the norm of exponents with small frequency is proportional to $T$. Altogether, we obtain the following scaling for the total number of exponentials:
\begin{equation}
N_{\text{bath}}=N+M+\kappa_3-2\sim \log( T\varepsilon^{-1})\log(\varepsilon^{-1}).
\end{equation}
\end{proof}
\end{theorem}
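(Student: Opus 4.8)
The plan is to turn the oscillatory Fourier integral~(\ref{eq:cont_integral}) into a rapidly convergent sum of exponentials by exploiting the assumed meromorphy and exponential decay~(\ref{eq:th_condition}) of $\Gamma(\omega)$ in the upper half-plane. The genuine obstruction is that $\Delta^p_+(t)$ may decay only polynomially as $t\to\infty$ (as is typical for physical spectral densities), so a naive equispaced quadrature in $\omega$ would need a node count growing polynomially in $T$. I would overcome this with three ingredients---a contour rotation that trades oscillation for exponential decay, a logarithmic change of variables that concentrates nodes near $\omega=0$ where the slow tail originates, and a sharp trapezoidal-rule error estimate---and then truncate the resulting doubly-infinite sum.

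First I would rotate the contour by $\omega\to\omega e^{ir_{\max}}$ with $0<r_{\max}<\pi/4$. Along the tilted ray the phase $e^{i\omega t}$ becomes $e^{i\omega t\cos r_{\max}}e^{-\omega t\sin r_{\max}}$, i.e.\ a decaying exponential, which is what ultimately tames the large-$t$ behaviour; the arc at infinity closing the deformation vanishes by~(\ref{eq:th_condition}). By the residue theorem the rotated integral differs from the original by the contributions of the finitely many poles $\Omega_k$ of $\Gamma(\omega)$ swept out, each producing a pure exponential $R_k(1-n_{\text{F}}(\Omega_k))e^{i\Omega_k t}$; these are collected into $D_1(t)$. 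The constraint $r_{\max}<\pi/4$ is used here: it guarantees $2r_{\max}<\pi/2$, so the entire procedure stays clear of the Matsubara poles of $n_{\text{F}}$ on the imaginary axis, and the Fermi factor enters only through its values $1-n_{\text{F}}(\Omega_k)$ at the poles of $\Gamma$.

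Second I would substitute $\omega=\Gamma e^{x}$, mapping the positive ray to $x\in\mathbb{R}$ and clustering quadrature points near the origin, and apply the uniform trapezoidal rule in $x$ with step $h$, yielding the sum~(\ref{eq:rotated_integral_approx}). The quantitative heart of the argument is the bound on $\int g(x)\,dx - h\sum_k g(hk)$ for $g$ analytic in a strip: writing the sum as the contour integral $-\tfrac{i}{2}\oint g(x)\cot(\pi x/h)\,dx$ (an Abel--Plana / Poisson summation device) and deforming the contour off the real axis by a height governed by $r$, one obtains a discretization discrepancy $\delta_r(t)\sim e^{-2\pi r/h}$, while picking up a second batch of residues $D_2(t)$ from poles lying in the deformation strip, weighted by $\cot(\pi x_k/h)$ factors. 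This step I expect to be the main obstacle: the estimate must be made uniform in $t\in[0,T]$ and controlled in the $L^1(0,T)$ norm, the analyticity width of $g$ in the $x$-variable must be matched against the locations $x_k=\log(\Omega_k/\Gamma)$ of the poles, and the bookkeeping of which poles fall into regions I, II, and III is exactly what fixes the precise form of $D_1$ and $D_2$.

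Finally I would truncate the sum~(\ref{eq:rotated_integral_approx}) at both ends. For large $k$ the summand is suppressed doubly-exponentially through~(\ref{eq:th_condition}), fixing an upper cutoff $N\sim h^{-1}\log\log(\Gamma/\varepsilon)$; for small $k$ the prefactor $\omega_k=\Gamma e^{hk+ir_{\max}}$ supplies geometric suppression, but because the error is measured in $L^1(0,T)$ the low-frequency exponentials contribute with weight $\sim T$, giving the cutoff $M\sim h^{-1}\log(T/\varepsilon)$. Choosing $h\sim 2\pi r/\log(\varepsilon^{-1})$ makes $\delta_r$ of order $\varepsilon$, and adding the $\kappa_3$ residue terms of $D_1$ and $D_2$ leaves the total count $N_{\text{bath}}=N+M+\kappa_3-2\sim\log(\varepsilon^{-1})\log(T\varepsilon^{-1})$, as claimed.
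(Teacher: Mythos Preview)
Your proposal is correct and follows essentially the same route as the paper: contour rotation by $r_{\max}$ collecting the residues $D_1$, the logarithmic substitution $\omega=\Gamma e^x$ followed by the trapezoidal rule with the cotangent-kernel error representation yielding $\delta_r\sim e^{-2\pi r/h}$ and the secondary residues $D_2$, and the two-sided truncation with $N\sim h^{-1}\log\log(\Gamma/\varepsilon)$ and $M\sim h^{-1}\log(T/\varepsilon)$. You also correctly flag the points requiring care---the $L^1(0,T)$ control of $\delta_r$ uniform in $t$, the role of $r_{\max}<\pi/4$ in avoiding the Matsubara poles, and the region-by-region pole accounting---which are exactly the details the paper's Appendix~\ref{sec:proof} works out.
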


Our proof relies on the analytic properties of the spectral density and its asymptotic behavior at low and high frequencies. Neither of these properties are affected by the prefactor $(1-n_{\text{F}}(\omega))$, and our proof holds uniformly for any value of $\beta$. The underlying physics, however, strongly depends on temperature, and it is known that more compact descriptions can be obtained for weakly coupled impurities and high temperatures. The advantage of our method is that it is applicable to arbitrary impurity realizations, including non-perturbative regimes. Further compression can be achieved by combining the pseudomode description with standard tensor compression techniques applied during time evolution.

\subsection{Error Bound on Observables}
\label{sec:Bound_on_the_observables}
At this stage, we assume that we have found an explicit exponential approximation of the kernel function in the form of Eq.~\eqref{eq:exponential_approx}. This yields the following decomposition of the hybridization function:
\begin{equation}
\Delta(\tau-\tau^{\prime})=\tilde{\Delta}(\tau-\tau^{\prime})+\delta(\tau-\tau^{\prime}),
\end{equation}
\begin{equation} \label{eq:delta_bound}
\int\limits_0^{T}|\delta(\tau)|d\tau<\varepsilon,
\end{equation}
where $\tilde{\Delta}(\tau-\tau^\prime)$ is described by a finite number $N_{\text{tot}}$ of pseudomodes. The explicit relation between the exponential approximation of a particle/hole kernel function pseudomodes is discussed in detail in Sec.~\ref{sec:dynamics}.

In this subsection we derive a bound on the error of expectation values of local observables, as defined in Eq.~\eqref{eq:expect_value}. Our starting point is an approximated hybridization function with error $\varepsilon$. We will prove that neglecting the small correction $\delta(\tau-\tau^\prime)$ induces an error on observables that is of order $\varepsilon T$. 

The correlator from Eq.~\eqref{eq:expect_value}
can be rewritten as follows:
\begin{multline}\label{eq:correction}
\langle \hat{\mathcal{O}}_{\text{imp}}(t)\rangle_{\tilde{\Delta}}=\\ = \langle\exp\bigg(-\int\limits_{\mathcal{C}}d\tau\int\limits_{\mathcal{C}}d\tau^\prime{\bar{\eta}_{\tau}\delta(\tau-\tau^\prime)\eta_{\tau^\prime}}\bigg)\hat{\mathcal{O}}_{\text{imp}}(\bar{\boldsymbol{\eta}}_\tau,\boldsymbol{\eta}_\tau) \rangle,
\end{multline}
where $\langle ... \rangle_{\tilde{\Delta}}$ denotes the same averaging as in Eq.~\eqref{eq:expect_value} but with respect to a hybridization function $\tilde{\Delta}$.

The expression in Eq.~\eqref{eq:correction} can be rewritten as a series:
\begin{multline}
\langle \hat{\mathcal{O}}_{\text{imp}}(t)\rangle-\langle \hat{\mathcal{O}}_{\text{imp}}(t)\rangle_{\tilde{\Delta}}=\\=\sum\limits_{n=1}^{\infty} \frac{(-1)^n}{n!}\int\limits_{\mathcal{C}}\prod\limits_{k=1}^n \bigg(d\tau_kd\tau^\prime_k\bigg)\prod\limits_{k=1}^{n}\delta(\tau_k-\tau^\prime_k)\times \\ \times \langle \prod\limits_{k=1}^n\bar{\eta}_{\tau_{k_1}}\eta_{\tau^\prime_{k_1}} \hat{\mathcal{O}}_{\text{imp}}(\bar{\boldsymbol{\eta}}_\tau,\boldsymbol{\eta}_\tau)\rangle_{\Delta}.
\end{multline}
As the hybridization function $\Delta$ describes physical evolution, we have a natural bound for the correlators in the above equation:
\begin{equation}\label{eq:correlator_bound}
\langle \prod\limits_{k=1}^n\bar{\eta}_{\tau_{k_1}}\eta_{\tau^\prime_{k_1}}\hat{\mathcal{O}}_{\text{imp}}(\bar{\boldsymbol{\eta}}_\tau,\boldsymbol{\eta}_\tau)\rangle_{\Delta}<\|\hat{\mathcal{O}}_{\text{imp}}\|,
\end{equation}
where we consider the class of observables $\hat{\mathcal{O}}_{\text{imp}}$ which are products of local fermionic operators at different times. The norm of the operator is defined in the usual operator sense:
\begin{equation}
\|\hat{\mathcal{O}}_{\text{imp}}\|=\text{max}_{\|v\|=1}\langle v|\hat{\mathcal{O}}_{\text{imp}}|v\rangle.
\end{equation}
Using this relation, together with the bound from Eq.~\eqref{eq:delta_bound}, we note that the correction in Eq.~\eqref{eq:correction} is indeed bounded:
\begin{equation}
|\langle \hat{\mathcal{O}}_{\text{imp}}(t)\rangle-\langle \hat{\mathcal{O}}_{\text{imp}}(t)\rangle_{\tilde{\Delta}}|<(e^{\varepsilon T}-1) \|\hat{\mathcal{O}}_{\text{imp}}\|.
\end{equation}
This proves the desired bound and shows that if $\varepsilon T$ is small, then all local observables are well approximated for the hybridization function $\tilde{\Delta}$.

\section{Numerical Scaling and Interpolative Compression}\label{sec:compression}

\subsection{Overview}
In this Section, we present a numerical approach to determine an optimal set of modes to approximate a kernel function to a given accuracy, see Eq.~(\ref{eq:exponential_approx}). To provide a comprehensive understanding, we begin by reviewing the interpolative matrix decomposition (ID). We then leverage ID to compress a set of modes by connecting their spectral information to the temporal correlations induced by each mode.

This approach enables us to compare the scaling obtained from the analytical construction in Sec.~\ref{sec:bound}---also validated here numerically---with that of the compressed, and arguably optimal, ensemble of modes. As one of the main results of this Section, we establish the scaling of the number of compressed modes with $T,\epsilon$ in Eq.~(\ref{eq:Nbath_improved}). 
Moreover, we provide insights into the nature of the numerical frequency renormalization induced by ID.\\

 For the numerical analysis, we introduce a discrete time grid with points: \begin{equation}\label{eq:time_grid}
t_i = i\cdot \delta t, \end{equation} where $
i \in \{0,\dots, N_t\}$ and $
\delta t = T / N_t.$ On this time grid, we define the relative error of the time series $\tilde{\Delta}(t_i)$ with respect to a reference time series, $\Delta(t_i),$ as
\begin{equation}\label{eq:numerical_error}
    \epsilon = \sum_{i=1}^{N_t} \frac{|\tilde{\Delta}(t_i) - \Delta(t_i)|}{\sum_{j=1}^{N_t}\big(|\tilde{\Delta}(t_j)| + |\Delta(t_j) |\big)}.
\end{equation} 
This definition of the numerical error differs from the error defined in   Eq.~\eqref{eq:exponential_decomposition} and used in the proof. Such definition is convenient since it yields a  dimensionless error, with a numerical value between 0 and 1, that can be used as numerical error tolerance for the compression algorithm introduced in this Section. We note, however, that the two error definitions differ by a factor given by the $L_1$ norm of the kernel function $\|\Delta(t)\|_{L_1}$. All the kernel functions considered below are bounded and decay at large times as $1/t$ or more quickly, therefore their $L_1$ norm is bounded as $\|\Delta(t)\|_{L_1}<C \log T$. Such a contribution does not change the final scaling of pseudomode number on $T,\epsilon$, only generating a subleading contribution. Unless stated otherwise, all numerical data presented in this Article uses the definition of the error in Eq.~(\ref{eq:numerical_error}), where the reference time series $\Delta(t_i)$ is the kernel function obtained by exact numerical integration.

\subsection{Interpolative Matrix Decomposition}\label{sec:interpolative_decomposition}
The interpolative decomposition (ID) \cite{GuEfficient, ChengCompression2005, LibertyRandomized, Woolfe08Fast} is the approximation of a matrix $\bm{A}\in \mathbb{C}^{m\times n}$ as the product of a matrix containing $r$ selected columns of $\bm{A}$ and a ``projection matrix'' $\bm{P} \in \mathbb{C}^{r\times n}$:
\begin{equation}\label{eq:ID}
\bm{A} \approx  \bm{A}(:,\mathcal{J}) \bm{P}, \quad \mathcal{J} \subset \{1, \dots, n\}. 
\end{equation}\\
The ID possesses a range of useful mathematical properties that have been comprehensively presented, for example, in Ref.~\cite{LibertyRandomized}. Here, we briefly summarize the key conceptual aspects that we exploit in this work.

First, the error in Eq.~(\ref{eq:ID}) is bounded as
\begin{equation}
    \left\| \bm{A} - \bm{A}(:,\mathcal{J}) \bm{P} \right\|_{2} \leq \sqrt{r(n-r)+1} \sigma_{r+1},
\end{equation}
where $\sigma_{r+1}$ is the $(r+1)^\text{th}$ largest singular value of $\bm{A}$. This motivates a comparison to singular value decomposition (SVD). For truncated SVD, the error guarantee is only slightly better, namely $\epsilon \leq \sigma_{r+1}$. Conceptually, however, ID and SVD differ in that SVD involves a change of basis, whereas ID only selects columns from the original matrix $\bm{A}$. This property is crucial for our approach, where $\bm{A}$ encodes the relationship between spectral properties and temporal correlations generated by each mode. Importantly, this physical interpretation of $\bm{A}$ is preserved during compression with ID.

Second, the \textit{interpolative} property of ID ensures that all columns with index $j \in \mathcal{J}$ are represented exactly, i.e.,
\begin{equation}\label{eq:interpolation_property}
\forall i: \, \big(\bm{A}(:,\mathcal{J}) \bm{P}\big)_{ij} = \bm{A}_{ij} \text{ if } j \in \mathcal{J},
\end{equation}
while all other entries are interpolated between these. A manifestation of this is that a subset of the columns of the projection matrix $\bm{P}$ forms the $r \times r$ identity matrix. 
Lastly, we note that the rank $r$ needed to represent $\bm{A}$ exactly is bounded by the smaller of the two matrix dimensions:
\begin{equation}\label{eq:rank_bound}
r \leq \text{min}(m,n).
\end{equation}
\begin{figure}[h] 
    \centering 
    \includegraphics[width=.95\linewidth]{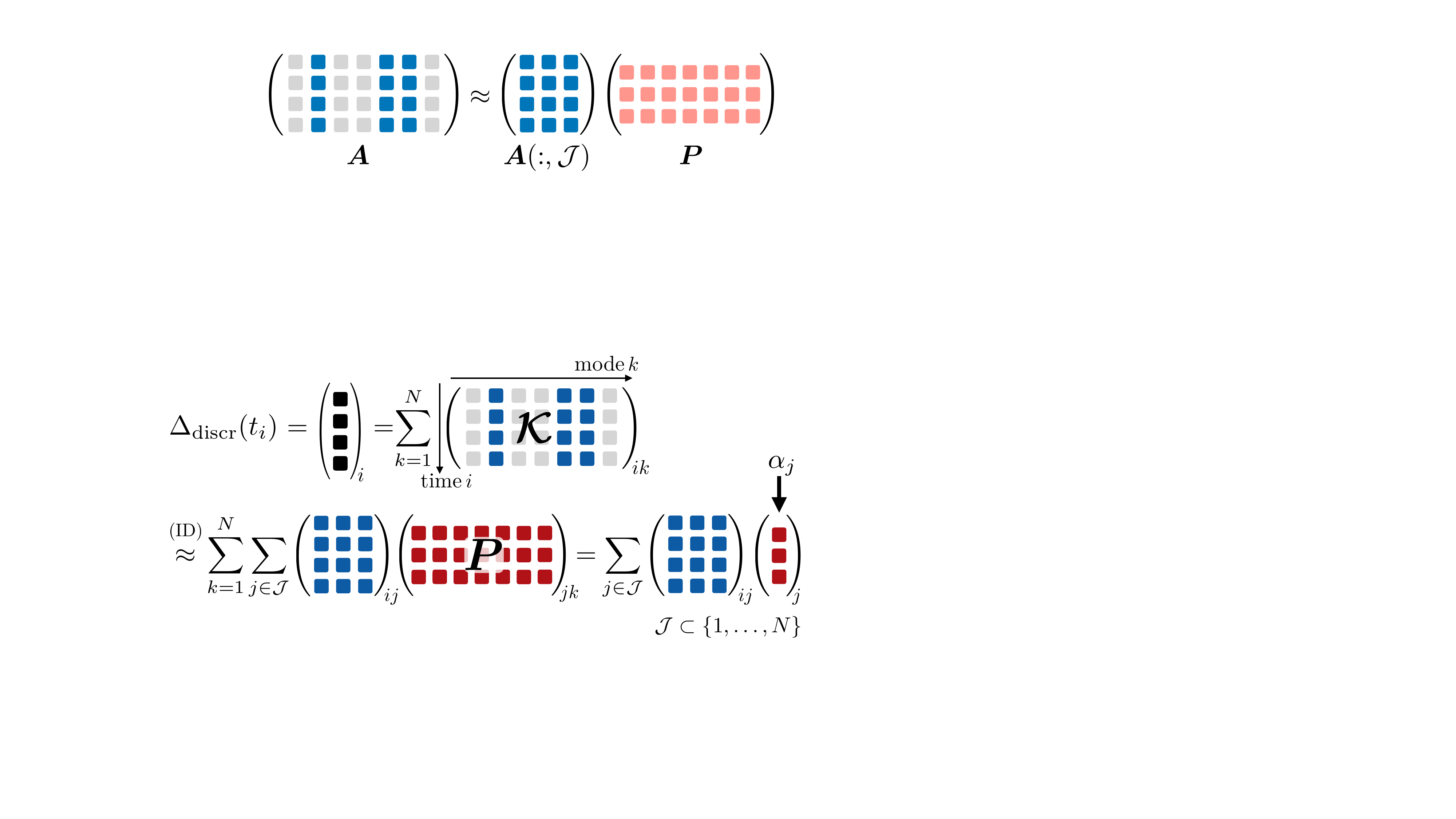} 
    \caption{Schematic illustration of the mode compression: Performing an interpolative decomposition of the kernel matrix $\mathcal{K}$ 
    selects a subset of modes $\mathcal{J}$ with renormalized couplings encoded in the weights $\alpha_j.$} 
    \label{fig:kernel_decomp} 
\end{figure}

Next, we demonstrate how ID can be used to identify an optimal set of modes that approximate the kernel functions in Eqs.~(\ref{eq:particle_integral},\ref{eq:hole_integral}) within a specified accuracy. We will show that this approach results in a renormalized spectral density for the compressed set of modes.

\subsection{Compression of Kernel Functions}

The basis for the applying ID lies in approximating the integrands of Eqs.~(\ref{eq:particle_integral}--\ref{eq:hole_integral}) using a finite set of modes, see Eq.~(\ref{eq:exponential_approx}). 
On a discrete time grid, Eq.~(\ref{eq:exponential_approx}) can be expressed as 
\begin{equation}
\tilde{\Delta}(t_i) = \sum_{k=1}^{N_\text{bath}} \bm{\mathcal{K}}_{ik},
\end{equation}
where we defined the kernel matrix with elements
\begin{equation}\label{eq:kernel_def}
    \bm{\mathcal{K}}_{ik} = \Gamma_k e^{i\omega_k t_i}.
\end{equation}
Rows and columns of $\bm{\mathcal{K}}$ correspond to time step and mode index, respectively.

By applying ID to the kernel matrix with a predefined relative error tolerance $\epsilon_\text{ID}$,
\begin{equation}\label{eq:ID_on_kernel}
\bm{\mathcal{K}}_{ik} \stackrel{\text{ID}}{\approx} \sum_{j \in \mathcal{J}} \bm{\mathcal{K}}_{ij} \bm{P}_{jk},
\end{equation}
we identify a subset of modes $\mathcal{J}\subset \{1,\dots,N_\text{bath}\}$ that suffices to approximate the kernel function with an error of $\mathcal{O}(\epsilon_\text{ID})$,
\begin{equation}
\tilde{\Delta}(t_i) \stackrel{\text{ID}}{\approx}
\sum_{j \in \mathcal{J}} \bm{\mathcal{K}}_{ij} \sum_{k=1}^{N_\text{bath}} \bm{P}_{jk} = \sum_{j \in \mathcal{J}} \bm{\mathcal{K}}_{ij} \alpha_j.
\end{equation}

Here, we defined the weights,
\begin{equation}
\alpha_j = \sum_{k=1}^{N_\text{bath}} \bm{P}_{jk},
\end{equation}
which renormalizes the couplings of the modes that have been selected by ID. This compression step is schematically illustrated in Fig.~\ref{fig:kernel_decomp}. After ID, the compressed kernel matrix is given by
\begin{equation}
    \mathcal{K}^\text{ID}_{ik} = (\Gamma_k \alpha_k) e^{i\omega_k t_i}, \text{ with } k \in \mathcal{J}.
\end{equation} 

In practice, we set the error tolerance \(\epsilon_\text{ID}\) to match the error inherent in the finite-mode approximation of \(\bm{\mathcal{K}}\). This ensures that the application of ID does not significantly increase the total error of $\tilde{\Delta}(t_i),$ relative to the exact kernel functions, Eqs.~(\ref{eq:particle_integral}--\ref{eq:hole_integral}). By doing so, we achieve stronger compression, especially when the discretization error is substantial.
Our numerical implementation for compressing kernel matrices utilizes the \texttt{SciPy} Python package~\cite{Virtanen20Scipy}, which includes implementations of the matrix decomposition algorithms described in Refs.~\cite{ChengCompression2005, LibertyRandomized}.

\subsection{Numerical Scaling and Bath Renormalization}
As a first numerical application, we follow the construction of the proof in Sec.~\ref{sec:bound}, confirm the predicted scaling numerically, and compare it to the scaling after compression with ID. Interestingly, we find the latter to be slightly more compact than the theoretically predicted one.

In order to obtain a realistic estimate for the number of modes $N_\text{bath}$ needed for a given error $\epsilon$ before compression, we begin by optimizing the lower and upper cutoffs of the frequency grid numerically.

\subsubsection{Optimizing the Kernel Matrix}

We consider $\tilde{\Delta}_+^p(t)$ from Eq.~(\ref{eq:rotated_integral_approx}), where the frequency integration contour is rotated into the complex plane by an angle $r_{\text{max}}=\pi/4$, and discretized with points according to Eqs.~(\ref{eq:first_variable_transform}--\ref{eq:second_variable_transform}):
\begin{equation}\label{eq:exp_param_numerical}
\omega_k =\Gamma e^{hk} e^{i\pi/4}.    
\end{equation} 
As in Sec.~\ref{sec:bound}, we restrict ourselves to the particle component and to frequencies with positive real part.
On a discrete time grid such as defined in Eq.~(\ref{eq:time_grid}), one has
\begin{equation}\label{eq:discr_approx}
    \tilde{\Delta}_+^p(t_i) = \sum_{k=-M}^{N} \bm{\mathcal{K}}_{ik},
\end{equation} with 
\begin{align}
\bm{\mathcal{K}}_{ik} &= \Gamma_k e^{i\omega_{k} t_i},\label{eq:kernel_matrix_flatband}\\\label{eq:couplings}
    \Gamma_k &= \frac{h}{2\pi} \omega_k \, \Gamma(\omega_k) \, \big(1-n_\text{F}(\omega_k)\big).
\end{align}
In Eq.~(\ref{eq:couplings}), the factor $h\omega_k$ is the Jacobian resulting from frequency discretization.
The high and low energy cutoffs are determined by the summation limits $M$ and $N$ in Eq.~(\ref{eq:discr_approx}), respectively. For the numerical application, we start by setting $M$ and $N$ to large values, ensuring that the frequency range covered by the discrete sum includes all relevant scales. This approach guarantees that the approximation error, $\epsilon,$ of the finite sum is solely controlled by the discretization parameter $h.$ In order to obtain a meaningful estimate of the number of modes,
\begin{equation}
    N_\text{bath} = N + M,
\end{equation} needed for a given value of $h,$ we then reduce the values of $M$ and $N$ incrementally until the relative error between the optimized and unoptimized versions of $\tilde{\Delta}_+^p(t_i)$ grows to $1\%$ of the discretization error $\epsilon,$ doing so independently for both $M$ and $N.$ 

\subsubsection{Example: Numerical Scaling for a Flat Band}

As a first illustration, we choose a wide flat band with smooth cutoffs, defined by the spectral density
\begin{equation}\label{eq:flat_band}
    \Gamma_\text{flat}^\Lambda(\omega)  = \frac{\Gamma}{(1+e^{(\omega-\Lambda)\nu})(1+e^{-(\omega+\Lambda)\nu})}.
\end{equation}
Here $\nu$ defines the sharpness of the cutoff at frequency $\omega =\pm\Lambda.$ In this Section, we fix these parameters to $\Lambda = 10^5\, \Gamma$ and $\nu = 20/\Lambda,$ respectively. We note that this spectral density satisfies the conditions necessary for Theorem~ \ref{th:finite_approximation} to hold.

\begin{figure}[h] 
    \centering 

    \begin{minipage}[b]{.48\textwidth}
    \begin{overpic}[width=\linewidth]{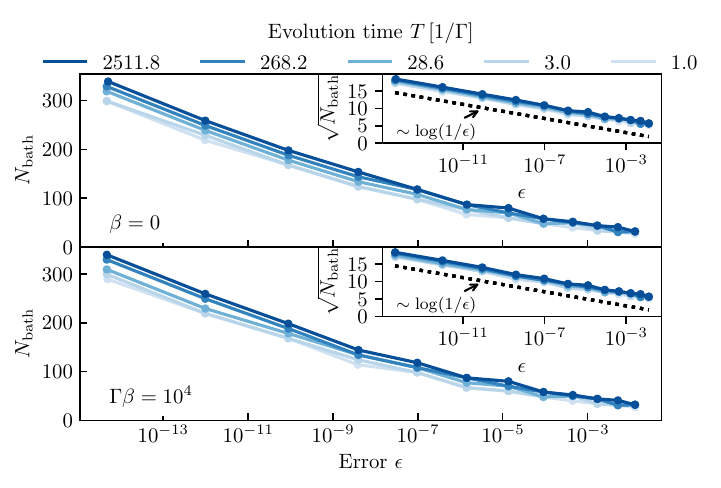}
    \put(5,65){\text{a)}}
    \end{overpic}
    \end{minipage}

    \begin{minipage}[b]{.48\textwidth} 
   \begin{overpic}[width=\linewidth]{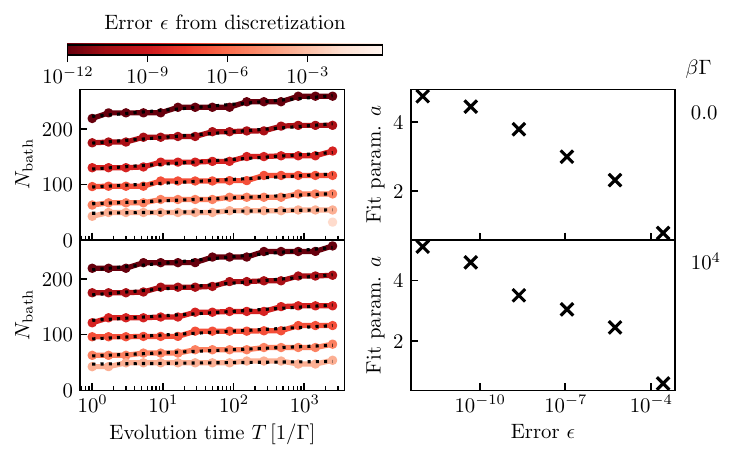} 
   \put(5,60){\text{b)}}
    \end{overpic}

\end{minipage}
    \caption{Before compression: Scaling of $N_\text{bath}$ for a {\it wide flat band}, $\Gamma_\text{flat}^\Lambda(\omega),$ with width $\Lambda/\Gamma =  10^5$ and sharpness $\nu = 20 /\Lambda$ at inverse temperatures $\beta  = 0$ (top) and $\beta = 10^4$ (bottom). This figure refers to the positive-frequency branch of the particle component, $\Delta^p_+(t)$ in Eq.~(\ref{eq:cont_integral}). a) Scaling of $N_\text{bath}$ with the error $\epsilon.$ Inset: Squareroot of same data, illustrating the scaling law $N_\text{bath}\sim \log^2(1/\epsilon).$ The dotted line serves as guide to the eye. b) Left column: Scaling of number of modes with the time for fixed error (interpolated). The curves are fitted with a function $a \log(T) + \text{const.}$ (dotted). Right column: Dependence of fit parameter $a$ on the error. } 
    \label{fig:scaling_wideband} 
    \end{figure}

\begin{figure}[h]
\centering
\begin{minipage}[b]{.48\textwidth}
    \begin{overpic}[width=\linewidth]{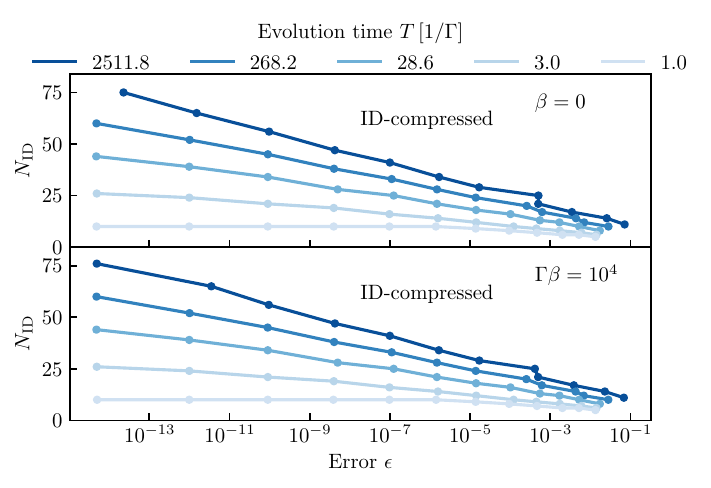}
    \put(5,65){\text{a)}}
    \end{overpic}
    \end{minipage}
    \begin{minipage}[b]{.48\textwidth} 
    \begin{overpic}[width=\linewidth]{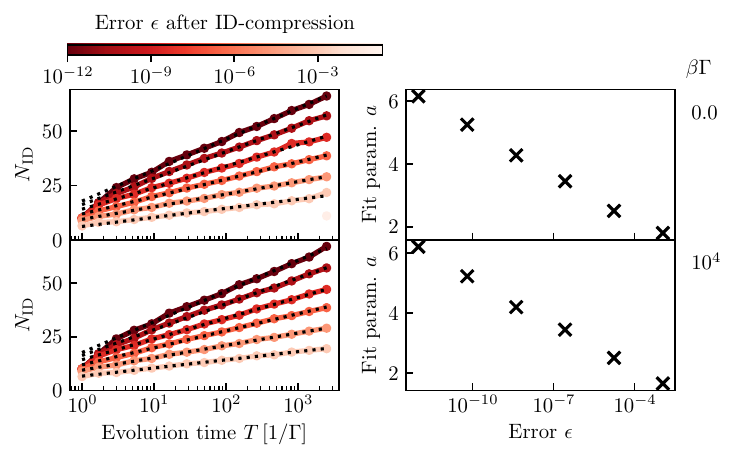} \put(5,60){\text{b)}}
    \end{overpic}
\end{minipage}
\caption{After compression: Scaling of $N_\text{ID}$ for a {\it wide flat band}. All parameters are as specified in Fig.~\ref{fig:scaling_wideband}. a) Scaling of the number of modes retained after compression, $N_\text{ID},$ with the error $\epsilon$.  b) Left column: Scaling of number of modes with time for fixed error (interpolated). The curves are fitted with a function $a \log(T) + \text{const.}$ (dotted). Right column: Dependence of the fit parameter $a$ on the error.}
  \label{fig:scaling_ID_wideband} 
\end{figure}

\paragraph{Scaling before compression.} In Fig.~\ref{fig:scaling_wideband}, we examine the scaling of $N_\text{bath}$ that we determined after employing the described optimization approach. As main result here, we confirm the scaling predicted in Sec.~\ref{sec:bound} and give an estimate for the absolute values of $N_\text{bath}.$ 

In Fig.~\ref{fig:scaling_wideband}a, we show $N_\text{bath}$ as function of the error $\epsilon$ which is defined with respect to the continuous integral along the same path in the complex plane as the discrete frequencies, see Eq.~(\ref{eq:exp_param_numerical}).
Importantly, we observe the predicted polylogarithmic dependence on the error, $
N_\text{bath} \sim \log^2(1/\epsilon),$ for different fixed values of the evolution time $T.$ This behavior, spanning many orders of magnitude, becomes especially apparent in the inset, where we show $\sqrt{N_\text{bath}}$ as a function of $\epsilon,$ which thus yields a straight line on the logarithmic error scale. We show data computed for two different values of the inverse temperatures, $\beta = 0$ and $\beta\Gamma = 10^4$ (top and bottom, respectively). In line with our discussion in Sec.~\ref{sec:bound}, we observe no significant dependence on temperature.

Next, we study the numerical scaling of $N_\text{bath}$ with evolution time $T,$ shown in Fig.~\ref{fig:scaling_wideband}b for different fixed $\epsilon.$ Since \(\epsilon\) values are evaluated {\it a posteriori}, we employ numerical interpolation to map the values of $N_\text{bath}$ onto a predefined error grid across all $T$ values. 
Our results in the left column validate the predicted scaling for a fixed \(\epsilon\), namely
    $N_\text{bath}\sim \log(T).$
We fit these curves with the function \( a \log(T) + \text{const.} \) and present the \(\epsilon\)-dependence of the fit parameter \(a\) in the right column. As predicted in Sec.~\ref{sec:bound}, our findings indicate that the parameter \(a\) exhibits logarithmic dependence on the inverse error, \(a \sim \log(1/\epsilon)\).
Again, no temperature dependence is observed in these results, such that we identify the numerical scaling {\it before compression} as:
\begin{equation}\label{eq:scaling_before_compression}
    N_\text{bath}\sim \log(T/\epsilon)\log(1/\epsilon).
\end{equation} This coincides with the analytic predicition in Eq.~(\ref{eq:N_bath_scaling}).
In terms of absolute values on the error and time scales considered, we find $\mathcal{O}(N_\text{bath}) \sim 10^2.$

\paragraph{Scaling after compression.} 
Next, we employ ID to the kernel matrix in Eq.~(\ref{eq:kernel_matrix_flatband}) and find the number of modes $N_\text{ID}$ {\it after compression} to scale as
\begin{equation}\label{eq:numerical_scaling}
    N_\text{ID}\sim \log(T) \log(1/\epsilon),
\end{equation} 
with absolute values of $\mathcal{O}(N_\text{ID}) \sim 10^1$ for the error and time scales considered. 

In Fig.~\ref{fig:scaling_ID_wideband}a, we show $N_\text{ID}$ as a function of the error $\epsilon.$ Here, three aspects are noteworthy:
\begin{enumerate}
    \item[i)] Remarkably, the scaling of $N_\text{ID}$ with $\epsilon$ is at most $\log(1/\epsilon)$ for fixed evolution time $T.$ This is to be contrasted with the scaling $\log^2(1/\epsilon)$ for the uncompressed system, as predicted analytically and numerically confirmed in Fig.~\ref{fig:scaling_wideband}a.
\item[ii)] As opposed to the curves in Fig.~\ref{fig:scaling_wideband}a, we observe a strong dependence on the evolution time $T,$ in line with the expectation that longer evolution times require more frequency points in order to achieve the same error $\epsilon.$ We study this aspect in more details below. 
\item[iii)] As the number of modes required for an exact representation is bounded according to Eq.~(\ref{eq:rank_bound}), the compressed number of modes $N_\text{ID}$ for short evolution times $T$ coincides with the number of time steps as visible for the curve at $T\Gamma=1.0$ corresponding to $N_t = 10$ time steps. 
\end{enumerate}
We also note that the significant reduction of $N_\text{ID}$ values compared to $N_\text{bath}$ occurs despite the numerical optimization of the latter described above. Moreover, the ID results are unchanged even when ID is applied to the unoptimized kernel matrix.

To determine the scaling with evolution time $T$, we proceed analogously to before: In Fig.~\ref{fig:scaling_ID_wideband}b, we find $
    N_\text{ID}\sim \log(T)$ for fixed $\epsilon,$
    we fit the resulting curves, and find a scaling of the fit parameter of $a\sim \log(1/\epsilon).$ This suggests an overall scaling of $N_\text{ID}$ as stated in Eq.~(\ref{eq:numerical_scaling}).

Given the substantial effect of compression on scaling and absolute numbers, we next examine the physical background of this compression, specifically focusing on the density of frequencies selected by ID.

\begin{figure}
\centering
\includegraphics[width = \linewidth] {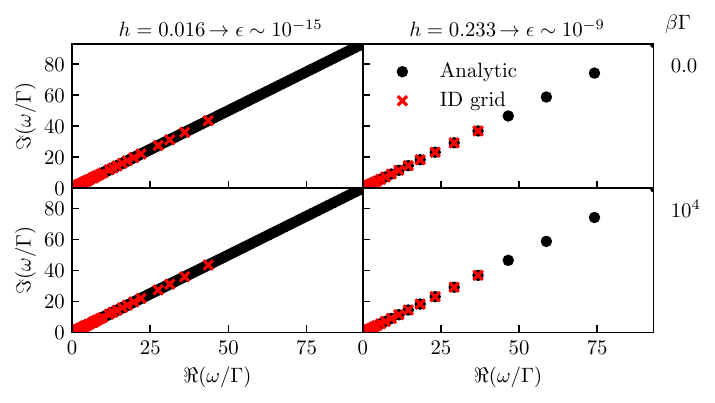}
\caption{Black dots: Complex frequencies according to the exponential frequency parametrization, Eq.~(\ref{eq:exp_param_numerical}). Red crosses: Subset of modes retained after compression with ID for an evolution time $T\Gamma \approx 2500.$}
\label{fig:explicit_grid}
\end{figure}

 \subsubsection{ID as Bath Renormalization} 
 In Fig.~\ref{fig:explicit_grid}, we visualize the frequency grid points in the complex plane: Black dots show the points of the exponential grid according to Eq.~(\ref{eq:exp_param_numerical}), and red crosses show the subset of frequencies selected by ID. The left and right columns refer to two different values of the discretization parameter $h$. This illustrates that for very fine grids (left column), the ID effectively samples only a small density of the frequency points, while for coarse grids, ID select most of the available points. In both cases, ID automatically incorporates a high energy cutoff determined by the finite time step (here $\delta t = 0.1/ \Gamma$) and does not sample frequencies beyond this point, here around $\Re(\omega) \approx 50 \Gamma.$ The low energy cutoff set by ID, not visible on this scale, is governed by $T$ and is further investigated below.

\begin{figure}
    \centering
    \begin{minipage}[b]{.48\textwidth}
    \begin{overpic}[width=\linewidth]{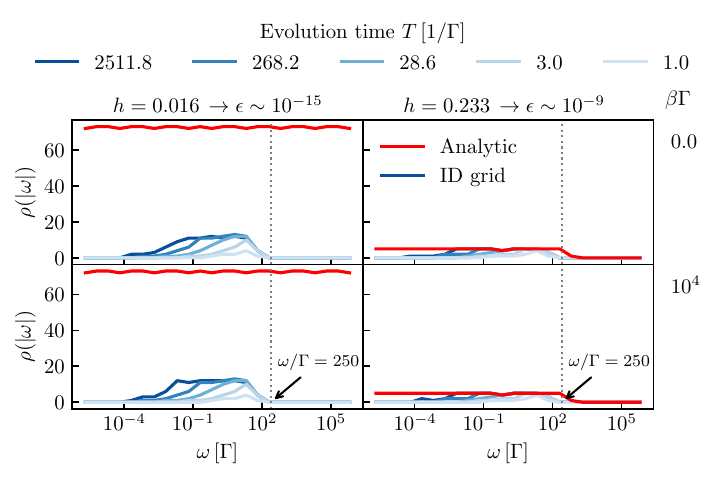}
    \put(5,62){\text{a)}}
    \end{overpic}
      \end{minipage}
       \begin{minipage}[b]{.48\textwidth}
    \begin{overpic}[width=\linewidth]{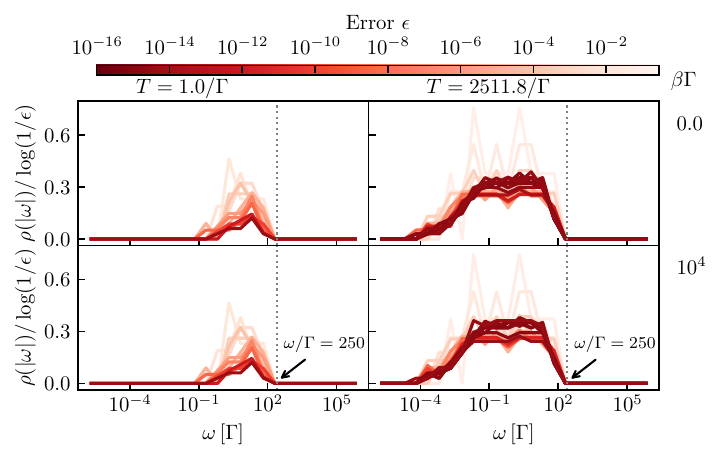}
    \put(5,60){\text{b)}}
    \end{overpic}
      \end{minipage}
      \caption{Density of frequency points, $\rho(|\omega|)$ for a {\it wide flat band}, $\Gamma_\text{flat}^\Lambda(\omega),$ with width $\Lambda/\Gamma =  10^5$ and sharpness $\nu = 20/ \Lambda$. Rows corresponds to different inverse temperatures $\beta$ as indicated. a) Frequency density as function of evolution time $T:$ Red shows the initial (time-independent) exponential grid and shades of blue show the compressed ID grid. The columns correspond to two different values of the discretization parameter $h$ as indicated. b) Frequency density of ID grid, rescaled by $\log(1/\epsilon)$ as a function of the error $\epsilon$ (shade of red), for two different evolution times $T$ (left and right column).}
      \label{fig:grid_flatband}
\end{figure}

 Important information is encoded in the distribution of selected frequency points $\rho(|\omega|),$ which we define as number of frequency points in a given interval on the {\it logarithmic} scale, \begin{equation}
     [10^{n\delta }, 10^{(n+1)\delta }],\quad  n\in \mathbb{Z}.
      \end{equation} Here, we set $\delta  = 0.5.$ 
      
  Fig.~\ref{fig:grid_flatband}a illustrates the distribution of frequency points for both pure exponential grid (red) and the ID grid (various shades of blue). Focusing initially on the left column, which represents a very fine initial grid, we observe that, for large $T,$ the ID grid also exhibits an exponential frequency spacing in a window which increases with higher values of \(T\).  
  
Remarkably, the density of points of the ID grid is significantly lower than of the initial one and coincides for all evolution times $T$. We emphasize that in the process of ``coarsening'' the grid, the ID performs a renormalization of the couplings. Thus, rather than just determining a new frequency grid, the ID should be understood as an effective bath renormalization scheme that integrates out intermediate energy scales and redistributes their weights to the modes of a coarser exponential grid.
      
      In the right column, starting from a coarse initial grid, as $T$ is increased, the ID grid exhausts the available frequency points within the range of the energy cutoffs. Lastly, Fig.~\ref{fig:grid_flatband}b shows the frequency density of the ID grid, rescaled by $\log(1/\epsilon)$ for fixed $T$ as a function of the error $\epsilon.$ While the high energy cutoff is fixed, the low energy cutoff depends on $T$ and, for fixed $T,$ converges with $\epsilon\to 0.$ This explains the milder numerical scaling after compression compared to the analytical estimate, see Eqs.~(\ref{eq:scaling_before_compression}) and (\ref{eq:N_bath_scaling}), respectively.

      The results in Fig.~\ref{fig:grid_flatband} allow us to draw the following conclusions about ID for long evolution times $T:$
\begin{enumerate}
    \item[i)] The temporal correlations encoded in the kernel functions \(\Delta(t)\) are most efficiently approximated by modes with exponential frequency spacing and renormalized couplings to the impurity.
    \item[ii)] The density of exponential frequency points, \(\rho(|\omega|)\), is governed by the error \(\epsilon\).
    \item[iii)]  The width of the frequency window is determined by the parameters of the time grid, with the final time \(T\) setting the low-energy cutoff and the time step \(\delta t\) controlling the high energy cutoff.
\end{enumerate}

    Overall, our analysis provides a clear interpretation of the projection matrix $P$ in Eq.~(\ref{eq:ID_on_kernel}): It acts as a bridge that translates between different scales. Specifically, it allows for the mapping of functions from the fine grid to the coarse grid, reflecting a change in scale, and conversely, enables the coarsened kernel matrix to be mapped back to the fine grid.

\section{ ``AAA'' Algorithm to Determine Modes}
\label{sec:AAA}

\subsection{Overview}

In the previous sections, we exploited the freedom to rotate the integration contour into the complex plane within the integrals defining the kernel functions, Eqs.~(\ref{eq:particle_integral},\ref{eq:hole_integral}). These integrals have the form of a standard Fourier transform,
\begin{equation}\label{eq:general_form}
\Delta(t) = \int \frac{d\omega}{2\pi} f(\omega) e^{i\omega t},
\end{equation} where $f(\omega)$ is determined by the spectral density $\Gamma(\omega)$ and the Fermi-Dirac distribution $n_\text{F}(\omega).$
So far, we have relied on the analytic understanding of \( f(\omega) \), allowing for an analytic continuation into the complex plane and ensuring that any poles crossed during the contour deformation can be accounted for exactly. As the Matsubara poles stemming from \( n_\text{F}(\omega) \) are located on the imaginary axis and therefore do not interfere with the contour deformation, all poles of \( f(\omega) \) relevant to the contour deformation are determined by the spectral density $\Gamma(\omega)$. 

In this Section, we give up the requirement of knowing $f(\omega)$ and its pole structure analytically, which generalizes our approach to arbitrary spectral densities $\Gamma(\omega).$ Our strategy consists of two main steps:
\begin{enumerate}
    \item Using the adaptive
Antoulas–Anderson (AAA) algorithm Ref.~\cite{NakatsukasaAAA,BerrutBarycentric}, we determine a rational approximation of $f(\omega),$ which is associated with a finite number of poles and their residues in the complex plane. By closing the integration contour in the upper half-plane, we can approximate $\Delta(t)$ as a finite sum of decaying exponentials as in Eq.~(\ref{eq:exponential_approx}).
\item By expressing this finite sum as a kernel matrix $\bm{\mathcal{K}}$ as in Eq.~(\ref{eq:kernel_def}), we can subsequently use ID to obtain a compressed ensemble of modes for the approximation of $\Delta(t)$.
\end{enumerate}

Before presenting algorithmic details and numerical results, we highlight two key aspects that are crucial for the wide applicability of this method in practical applications. First, the AAA algorithm requires as input a discrete set of sample points, \(\big(\omega_i, f(\omega_i)\big)\). This makes it directly applicable to spectral functions that are available only as numerical data, as is often the case in computational methods such as dynamical mean-field theory approaches. Second, when applied to analytical spectral densities $\Gamma(\omega),$ a careful choice of the sample points $\omega_i$ allows to accurately approximate spectral densities with cusps.

We begin by briefly sketching the main aspects of the AAA algorithm. A comprehensive presentation can be found in the original derivation, Ref.~\cite{NakatsukasaAAA}, which we closely follow here.

\subsection{AAA Algorithm}\label{sec:AAA_algorithm}
\paragraph{Construction.}
The AAA algorithm provides a stable method for determining a rational barycentric interpolation approximation of a given function \( f(z) \) \cite{BerrutBarycentric, NakatsukasaAAA}: Starting with a set of \( M \gg 1 \) distinct support points \( Z = \{z_1, \dots, z_M\} \subset \mathbb{C}\) and their corresponding function values \( F = \{f_1 = f(z_1), \dots, f_M = f(z_M)\} \), the AAA algorithm iteratively selects support points from \( Z \) to construct a rational approximation of $f(z)$ of the form:
\begin{equation}\label{eq:barycentric_formula}
    r(z) = \sum_{j \in \mathcal{J}} \frac{w_j f_j}{z - z_j} \Bigg/ \sum_{m \in \mathcal{J}} \frac{w_m}{z - z_m}.
\end{equation}
Here, \( \mathcal{J} \) is an index set updated iteratively by adding the index of each newly selected support point, and \( Z_\mathcal{J} \) denotes the subset of points \( z_j \) with \( j \in \mathcal{J} \).
The interpolation property ensures that the rational approximation \( r(z) \) matches the exact function values at all selected support points, i.e., \( r(z_j) = f(z_j) \) for all $j\in \mathcal{J}.$
 The barycentric weights \(\{w_j\}\) are optimized at each iteration in order to minimize the global error of all $f(z_i)$ with $z_i \in Z\setminus Z_\mathcal{J}$. At every step, the next support point $z_i$ is greedily selected as the one where the error \( f(z_i)- r(z_i) \) is the largest. The algorithm terminates when the error at all sample points falls below a defined error threshold. Our numerical implementation builds upon the \texttt{Bayrat} Python package from Ref.~\cite{Hofreither21Algorithm}, which we have tailored to the applications discussed in this work.

\paragraph{Constructing the Sample Grid $Z$.}
Since the accuracy of polynomial interpolation is highly sensitive to the choice of support points, we construct the set \(Z\) using composite Chebyshev grids, which are known---and in some cases proven~\cite{Gimbutas20Fast}---to be optimally suited for Lagrange interpolation. In particular, following the approach of Ref.~\cite{KayeDiscrete}, we define dyadically refined partitions towards the origin, with interval boundaries $[a_i,b_i]$ determined by $a_1 = 0$, $a_i = b_{i-1} = 2^{-(m-i+1)}$ for $i=2,\dots,m$, and $b_m = 1$. Within each interval, we construct Chebyshev grids of order $p$. By assembling different patches of this grid, we optimize the resolution of local features in the function $f(\omega)$. In practice, we choose $p \approx 60$ and $m\approx 50,$ and verify that our results are converged in these parameters. 

\paragraph{Application to pseudomode decomposition.}
To approximate the kernel functions \(\Delta^p(t)\) and \(\Delta^h(t)\) from Eqs.~(\ref{eq:particle_integral}--\ref{eq:hole_integral}) in terms of pseudomodes, we first find a rational approximation of the product of spectral density and the Fermi distribution,
\begin{align}\label{eq:integrand_particle}
    f^p(\omega)  = \Gamma(\omega) \,\big(1 - n_\text{F}(\omega) \big) &\xrightarrow{\text{AAA}} r^p(z), \\
    \label{eq:integrand_hole}
    f^h(\omega)  = \Gamma(\omega) \,n_\text{F}(\omega) &\xrightarrow{\text{AAA}} r^h(z),
\end{align}
for the particle and hole kernel functions, respectively.

For our purposes, it is useful to express the rational approximations \( r(z) \) in terms of their poles $\Omega_k$ and corresponding residues $R_k,$
\begin{equation}
r(z) = \sum_{k} \frac{R_k}{z - \Omega_k}.
\end{equation}
Substituting $f(\omega)\to r(\omega)$ in the frequency integral, Eq.~(\ref{eq:general_form}), and closing the integration contour in the upper half-plane yields the desired approximation of the kernel function as a finite sum of exponentials:
\begin{equation}
    \Delta(t) =  \sum_{k \, : \, \Im(\Omega_k) > 0} (iR_k) e^{i\Omega_k t}.
\end{equation}

Up to this point, the algorithm is fully based on the {\it spectral} content of $\Delta(t),$ namely the function $f(\omega).$ However, as we numerically demonstrate below, the resulting sets of pseudomodes can be further significantly compressed with ID, analogously to Sec.~\ref{sec:interpolative_decomposition}.
As before, this is thanks to their dissipative nature combined with the explicit incorporation of the simulation time scales $\delta t$ and $T$ through the kernel matrix $\bm{\mathcal{K}}.$ For completeness, we explicitly state the relation defining the elements of the kernel matrix for a discrete time grid with points $t_j$, and refer to Sec.~\ref{sec:AAA_ID_flatband} for a numerical illustration:
\begin{equation}\label{eq:kernel_AAA}
    \bm{\mathcal{K}}_{jk} = iR_k\, e^{i\Omega_k t_j}.
\end{equation}

\subsection{Illustration: Approximating a Flat Band}
\label{sec:AAA_ID_flatband}

To illustrate the approach combining AAA and ID, we again consider a flat band defined by $\Gamma_\text{flat}(\omega)$ in Eq.~(\ref{eq:flat_band}) of width $\Lambda = 50\Gamma,$ and sharpness $\nu = 20/\Lambda.$

  \begin{figure}

    \begin{minipage}[b] {.48\textwidth}
    \begin{overpic}[width=\linewidth]{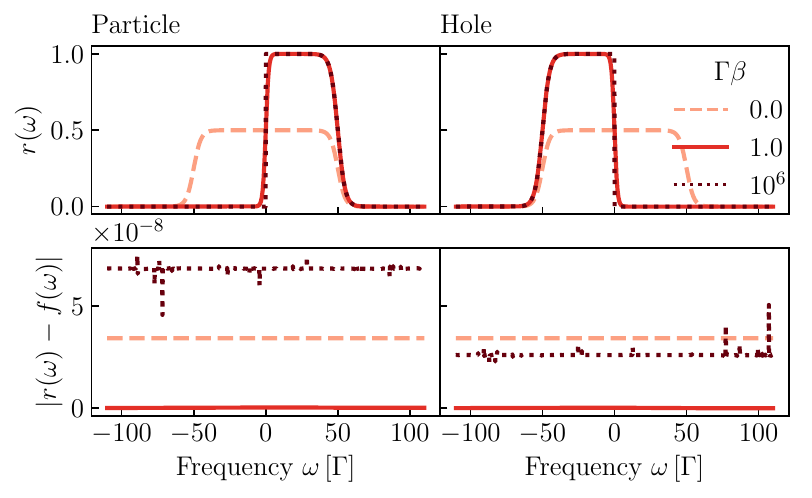} 
    \put(5,59){\text{a)}}
    \end{overpic}

    \end{minipage}
  
    \begin{minipage}[b]{.48\textwidth}
    \begin{overpic}[width=\linewidth]{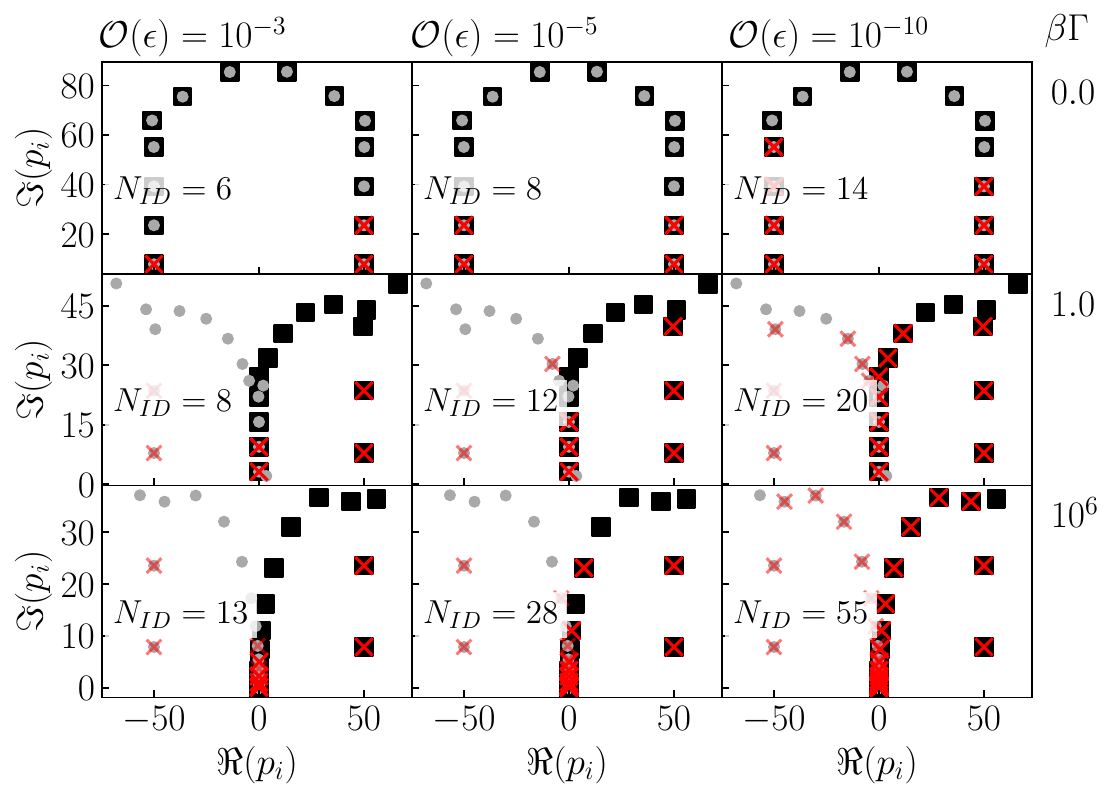} 
    \put(5,70){\text{b)}}
    \end{overpic}
    \end{minipage}
    \caption{Rational approximation from AAA for a {\it flat band}, $\Gamma_\text{flat}^\Lambda(\omega),$ with widths $\Lambda/\Gamma =  50$ and sharpness $\nu = 20/\Lambda$. a) Absolute error between the rational approximations $r(\omega)$ and the exact kernel function $f(\omega)$. The left and right column show the particle and hole component, respectively. Linestyles correspond to different inverse temperatures $\beta.$ b) Poles $p_i$ in the upper half-plane as determined by the AAA algorithm for the particle (black square) and hole (gray circle) components. The poles marked in red are retained after compression with ID. Columns corresponds to different error scales; rows correspond to different inverse temperatures $\beta$ as indicated. The respective number of pseudmodes after compression, $N_\text{ID}$, is given as inset. The ID has been performed for an evolution time $T=100/\Gamma.$}
    \label{fig:AAA_flatband}
    
\end{figure}

\subsubsection{AAA Approximation of $f(\omega)$}
\label{sec:AAA_Approximation_of_integrand}
We apply AAA according to Eqs.~(\ref{eq:integrand_particle},\ref{eq:integrand_hole}) in order to obtain a rational approximation of the functions $f^p(\omega)$ and $f^h(\omega)$ for the particle and hole component, respectively. The results are shown in Fig.~\ref{fig:AAA_flatband}a for three different values of inverse temperature $\beta$. The top row shows the rational approximation $r(\omega)$ as function of $\omega,$ while the bottom row shows the absolute error between $r(\omega)$ and the analytically known function $f(\omega),$ here converged to an error of $\mathcal{O}(\epsilon) = 10^{-8}$~\footnote{The differences in the particle and hole component reflect algorithmic subtleties related to the order of sample points and do not contradict the equivalence of the particle and hole component upon substituting $\omega \to -\omega.$}.

In Fig.~\ref{fig:AAA_flatband}b, we show the location of the poles as determined by AAA (grey squares and dots), where each row corresponds to a different value of $\beta$ as indicated. This reveals a few interesting aspects:
\begin{enumerate}
    \item In all cases, AAA correctly identifies the poles that can be derived analytically from the cutoff function in Eq.~(\ref{eq:flat_band}). They have a real part $\pm \Lambda = \pm 50\Gamma,$ and an imaginary part $\frac{2(n+1)\pi}{\nu} = \frac{2(n+1)\pi}{20/50},$ with $n\in \mathbb{N}.$
    \item For intermediate finite temperature, $\beta = 1/\Gamma,$ AAA determines the expected finite-temperature Matsubara poles stemming from the Fermi-Dirac distribution, located on the imaginary axis at positions $\Omega_n = 2(n+1)\pi$ with $n\in \mathbb{N}.$
    \item At a very low temperature, $\beta = 10^6 /\Gamma,$ AAA determines a very high density of poles around $\Re(\omega) = 0,$ reflecting that the Matsubara frequency spacing goes to zero for vanishing temperature.
    \item In all cases, the analytically known pole structures---which, in principle, extend to infinitely large imaginary parts---are determined up to some finite value on the imaginary axis, above which they are smoothly connected by further poles. This point reflects the approximate nature of the AAA algorithm.
\end{enumerate}

\subsubsection{ID Compression} 

Next, we construct the kernel matrix according to Eq.~(\ref{eq:kernel_AAA}) and apply ID. As before, we use \(\delta t = 0.1/\Gamma\) and, this time, consider a final evolution time of \(T = 100 /\Gamma\). In Fig.~\ref{fig:AAA_flatband}b, the poles retained during ID compression are marked with red crosses. The different columns correspond to varying values of the specified ID error \(\epsilon_\text{ID}\), with the effective errors in \(\Delta(t)\) on the same scale, as indicated. Interestingly, for larger errors, ID selects poles primarily in the lower spectrum, i.e., those with smaller \(\Im(\omega)\). As the error decreases, more poles are selected from higher regions of the spectrum. The inset indicates the total number of poles selected. As expected, the highest number of poles is selected at very low temperatures (last row), where many poles around \(\Re(\omega) = 0\) with small imaginary parts are determined to be relevant, highlighting the inherent complexity of low-temperature physics.

    \begin{figure}

     \includegraphics[width=\linewidth]{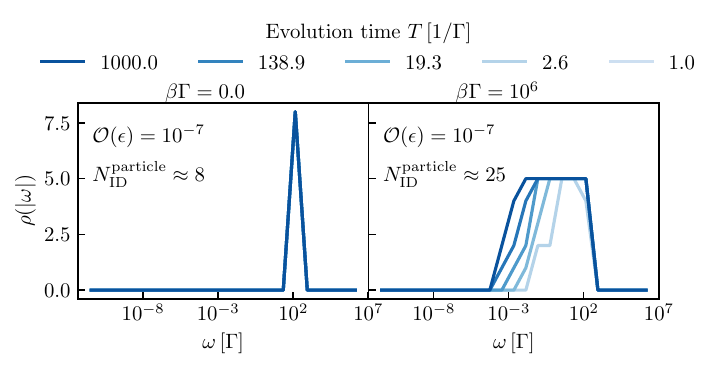}
     \caption{For a {\it flat band}, $\Gamma_\text{flat}^\Lambda(\omega),$ with widths $\Lambda/\Gamma =  50$ and sharpness $\nu = 20/\Lambda$: Density of frequency points, $\rho(|\omega|),$ for the particle component, before and after compression. For long evolution times and small temperatures, the ID grid shows exponential frequency spacing over several orders of magnitude, similarly to the construction from analytic modes, see Fig.~\ref{fig:grid_flatband}a. Columns correspond to different inverse temperatures $\beta$ as indicated.}
     \label{fig:freq_dens_AAA_flat}
\end{figure}

\subsubsection{Frequency Density} 

We conclude this Subsection by examining the density of frequencies after ID compression. In Fig.~\ref{fig:freq_dens_AAA_flat}, we present \(\rho(|\omega|)\) for \(\beta = 0\) (left) and \(\beta = 10^6/\Gamma\) (right), with different color shades indicating varying evolution times. At infinite temperature (\(\beta = 0\)), all frequencies cluster around \(\mathcal{O}(|\omega|) = 10^2\). However, for \(\beta = 10^6/\Gamma\), we observe a more intriguing behavior: as the evolution time \(T\) increases, the frequency spacing becomes exponential within a window whose lower cutoff shifts to smaller frequencies with increasing $T$. This pattern is consistent with the ID grid obtained from the analytic mode construction, as shown in Fig.~\ref{fig:grid_flatband}a. The fact that this exponential spacing emerges independently in both the analytic mode construction and the AAA-derived modes suggests that this grid choice is indeed optimal. Moreover, it underscores the effectiveness of ID in compressing modes, regardless of their origin. For a study of how the number of modes scales with error, we refer the reader to the next section.

\section{Complexity of General Baths}
\label{sec:complexity_general_baths}
\subsection{Overview}

At this stage, we have developed two complementary approaches to construct pseudomodes: i) the analytic ansatz from Sec.~\ref{sec:compression}, and ii) the AAA algorithm from Sec.~\ref{sec:AAA}, both of which are combined with ID to compress the resulting pseudomode sets. To simplify the discussion in this Section, we will refer to these methods as the ``analytic approach'' and ``AAA,'' with the understanding that all results presented refer to modes after compression.

In this Section, we examine a variety of spectral densities and compare the scaling of the number of modes \(N_\text{ID}\) obtained using both approaches. A key finding is that, after ID compression, all cases are compatible with the numerical scaling found in Sec.~\ref{sec:compression}, Eq.~(\ref{eq:numerical_scaling}). 
Absolute values of \(N_\text{ID}\) are in excellent agreement in many cases, suggesting that the resulting mode sets are close to optimal. Furthermore, we identify cases where one approach is preferable over the other, or where only one approach is applicable.

In light of the numerical scaling in Eq.~(\ref{eq:numerical_scaling}), we introduce the following \textit{rescaled number of modes} which we use as proxy for the complexity:
\begin{equation}\label{eq:complexity}
\mathcal{C} \sim\frac{N_\text{ID}}{\log(T)\log(1/\epsilon)} .
\end{equation} We evaluate this quantity as a function of error $\epsilon$ and evolution time $T$ for all cases considered in this Section and present all results on the same scale, such that the corresponding values can be compared easily across the cases.

For reference, we list the four spectral densities that we consider in this Section:
\begin{enumerate}
\item  A flat band, introduced in Eq.~(\ref{eq:flat_band}), and considered in the previous sections.
    \item A linear spectral density with exponential cutoff of frequency $\omega =\pm \Lambda:$
    \begin{equation}\label{eq:linear_spec_dens}
        \Gamma_\text{lin}^\Lambda(\omega) = |\omega| e^{-|\omega|/\Lambda}.
    \end{equation}
    \item A superposition of Gaussians, representing a generic gapped bath:
    \begin{equation}\label{eq:gaussian_spec_dens}
\Gamma_\text{gauss}^\nu(\omega) = \Gamma \sum_{\omega_0} \exp\left(-\frac{|\omega - \omega_0|^2}{\nu}\right).
 \end{equation} For the results in this Section, we consider three Gaussian peaks, located at $\omega_0/\Gamma \in \{-4, 0, 4\}.$
 \item A semicircular spectral density with smooth edges:
 \begin{equation}
     \label{eq:spec_dens_circle}
    \Gamma_\text{circle}^\Lambda(\omega) = \Gamma_{\chi}^\Lambda(\omega) \times \Gamma_\text{flat}^\Lambda(\omega),
 \end{equation} 
 where we defined a regularized semicircle:
\begin{equation}
\Gamma_{\chi}^\Lambda(\omega) = 
    \begin{cases}
        \max(\sqrt{\Lambda^2 - \omega^2}, \chi \Lambda) & \text{if } |\omega| < \Lambda, \\
        \chi \Lambda& \text{else}.
    \end{cases}
\end{equation}
\end{enumerate}
Here $\chi$ is a regularization parameter, removing the sharp cusp with infinite first derivative at $\omega=\Lambda$. The exact semicircle shape is restored for $\chi=0$.

We note that all results presented in the remainder of this Article apply to both the particle and hole components. For modes obtained via analytic construction, both components are described using the same set of frequencies. By including both components in a single kernel matrix, we ensure that the ID algorithm selects a frequency set capable of accurately representing both components. The total number of modes corresponds to the ID rank of this kernel matrix. In Sec.~\ref{sec:combining_particle_hole}, we show how a particle and hole component corresponding to the same complex frequency \(\omega_k\) can be combined into a single mode.

For pseudomodes obtained via AAA, the particle and hole components are parametrized by separate sets of frequencies, and the modes are compressed individually for each component. The total number of modes in this case is the sum of the ID ranks for both components. In this way, the absolute values of $\mathcal{C}$ are indicative of the resources required to fully approximate a continuous bath. 

\subsection{Numerical Results}

\subsubsection{Flat Band} 
\label{sec:flat_band_complexity}

\begin{figure}[h] 
    \centering 
    \includegraphics[width=\linewidth]{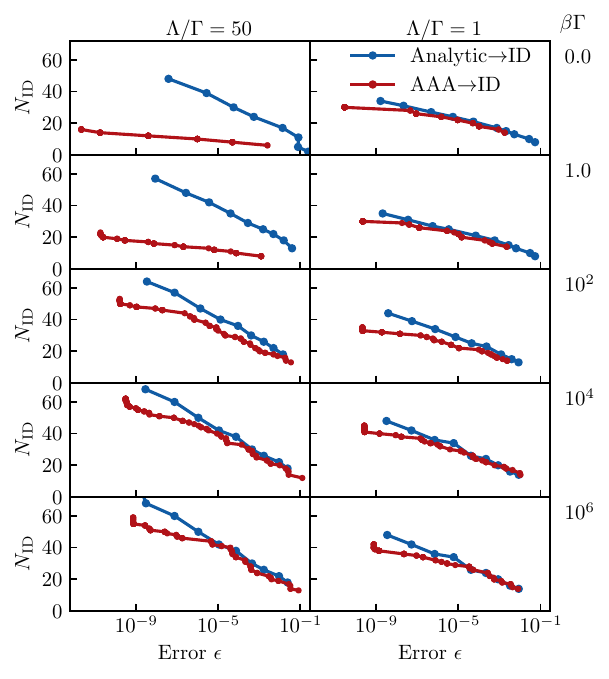} 
    \caption{For a {\it flat band}, $\Gamma_\text{flat}^\Lambda(\omega),$ with widths $\Lambda/\Gamma =  50$ and $\Lambda/\Gamma =  1$ (left and right column, respectively), and sharpness $\nu = 20/\Lambda$: Scaling of compressed number of modes $N_\text{ID}$ with the error $\epsilon$ for an evolution time $T = 100/\Gamma$ after ID-compression. Input to the compression is a set of modes obtained by the analytic construction (blue) and the AAA algorithm (red), respectively. Rows correspond to different inverse temperatures $\beta$ as indicated.} 
    \label{fig:error_scaling_wideband} 
    \end{figure}

    In Fig.~\ref{fig:error_scaling_wideband}, we present the scaling of \(N_\text{ID}\) with the error \(\epsilon\) for a flat band, Eq.~(\ref{eq:flat_band}). The Figure shows results for various inverse temperatures \(\beta\) (rows) and two different widths, \(\Lambda = 50\Gamma\) and \(\Lambda = 1\Gamma\) (columns). While the scaling is at most \(N_\text{ID} \sim \log(T)\log(1/\epsilon)\) in all cases, Fig.~\ref{fig:error_scaling_wideband} highlights key differences between the approaches: i) The AAA algorithm tailors the pole structures for each \(\beta\), offering a more efficient representation when low-temperature poles are unnecessary, as discussed in Sec.~\ref{sec:AAA_Approximation_of_integrand} and shown in Fig.~\ref{fig:AAA_flatband}b. ii) The analytic approach, with its temperature-independent construction, fixes a contour in the complex plane that may be suboptimal at high temperatures. While AAA smoothly connects the pole structure in the upper complex plane with a few poles (Fig.~\ref{fig:AAA_flatband}b, top row), the analytic contour always connects near \(\omega = 0\), close to the real axis. As seen in the left column of Fig.~\ref{fig:explicit_grid} corresponding to $\Lambda = 50\Gamma,$ if the input set of modes is chosen suboptimally, ID compression cannot effectively compress the frequency set for high temperatures. As a result, AAA offers favorable results at high temperatures. For a very narrow band, $\Lambda = \Gamma$ (see right column), the question of connecting the cutoff-poles in the complex plane concerns only a small frequency window such that the difference between the approaches becomes negligible here.
    
    Interestingly, at low temperatures where the analytic construction is ideally suited, both the slope and absolute numbers of $N_\text{ID}$ in Fig.~\ref{fig:error_scaling_wideband} are in good agreement for both methods and for both values of $\Lambda$.

    \begin{figure}[h] 
    \centering
    \includegraphics[width=\linewidth]{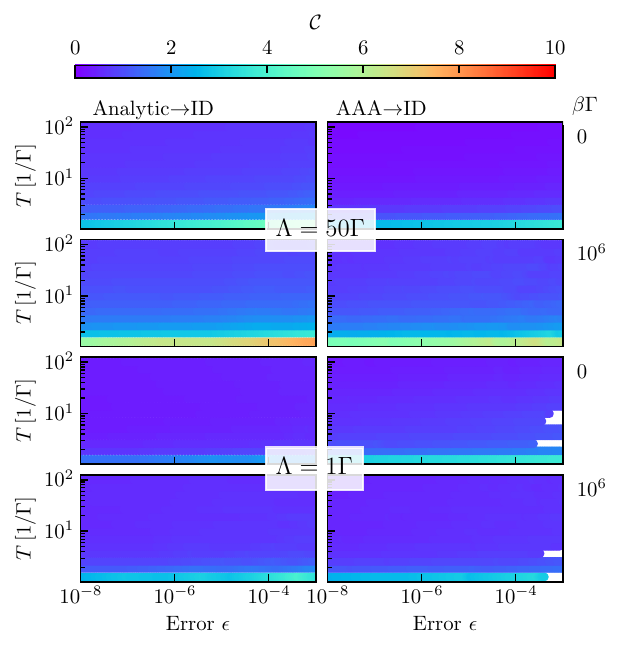} 
    \caption{For a {\it flat band}, $\Gamma_\text{flat}^\Lambda(\omega),$ with widths $\Lambda/\Gamma =  50$ and $\Lambda/\Gamma =  1$ (first two and last two rows, respectively), and sharpness $\nu = 20/\Lambda$: Rescaled number of compressed modes, $\mathcal{C},$ as measure for the numerical complexity. Input to the compression is a set of modes obtained by the analytic construction (blue) and the AAA algorithm (red), respectively. Results are shown as a function of evolution time $T$ and error $\epsilon$ for inverse temperatures $\beta = 0$ and $\beta \Gamma = 10^6$ as indicated. } 
    \label{fig:complexity_flat} 
    \end{figure}
    
In Fig.~\ref{fig:complexity_flat}, we further examine the complexity $\mathcal{C}$ as defined in Eq.~(\ref{eq:complexity}). The color tone encodes the value of $\mathcal{C}$ which we compare for two different temperatures ($\beta = 0$ and $\beta = 10^6/\Gamma$) and both approaches. In all cases we observe that $\mathcal{C}$ takes the highest values for very small $T$ caused by an initially steep rise $N_\text{ID}$ as already observed in Fig.~\ref{fig:scaling_ID_wideband}. At larger times, $\mathcal{C}$ quickly converges to a small value of the order $0.1.$ The discussed difference between the analytic approach and AAA at high temperatures and large bandwidth (see first row, left and right plot, respectively) is reflected by a slightly smaller value of $\mathcal{C}$ for AAA, indicated by slightly darker color. 

Given the conceptual differences of both approaches, the excellent agreement of both methods in scaling and absolute values of $N_\text{ID}$ is a strong indication that the determined sets of modes are close to optimal for the simulation parameters considered.

\subsubsection{Linear Spectral Density}

\begin{figure}[h]
    \centering
  
        \includegraphics[width=\linewidth]{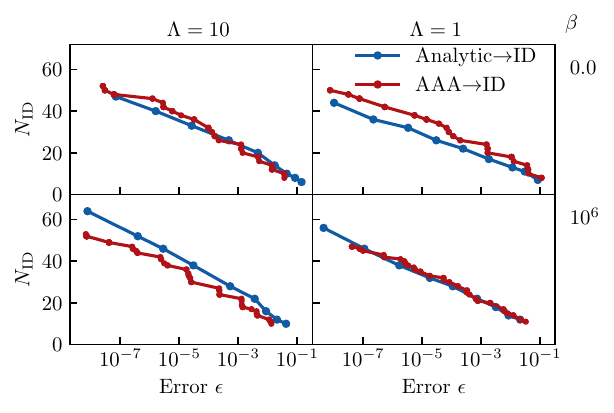}
    \caption{For a {\it linear spectral density}, $\Gamma_\text{lin}^\Lambda(\omega),$ with $\Lambda/\Gamma =  10$ and $\Lambda/\Gamma = 1$ (left and right column, respectively): Scaling of number of modes $N_\text{ID}$ with error $\epsilon$ for an evolution time $T = 100/\Gamma$. ID-compression was applied to a set of modes obtained by the analytic construction (blue) and the AAA algorithm (red), respectively.  Rows correspond to different inverse temperatures $\beta$ as indicated.}
    \label{fig:error_scaling_linear}
\end{figure}

Next, we study the linear spectral density, defined in Eq.~(\ref{eq:linear_spec_dens}). As before, we show the dependence of $N_\text{ID}$ on the error in Fig.~\ref{fig:error_scaling_linear}, here for two different cutoff frequencies, $\Lambda = 10$ and $\Lambda = 1$. We note that, for this particular spectral density, all quantities are given in arbitrary units.

We note that the sharp cusp of $\Gamma_\text{lin}(\omega)$ at $\omega = 0,$  resulting in a large number of AAA poles with $\Re(p_i)\approx 0,$ prevents a high temperatures advantage of AAA with respect to the analytic approach as observed in Fig.~\ref{fig:error_scaling_wideband}. Consequently, both AAA and the analytic approach exhibit similar behaviour for high and low temperatures. The good performance of both methods in this case is linked to the analyticity of $f(z)$ away from the imaginary axis (where it has Matsubara poles): For the analytic approach, this allows to freely rotate the integration contour without crossing any poles, and for AAA it allows to represent the spectral content of the kernel function, $f(\omega),$ with a moderate number of poles away from the imaginary axis.

As a result, similarly to before, both methods show excellent agreement in both scaling and absolute values of $N_\text{ID}.$

 \begin{figure}[h] 
    \centering 
    \includegraphics[width=\linewidth]{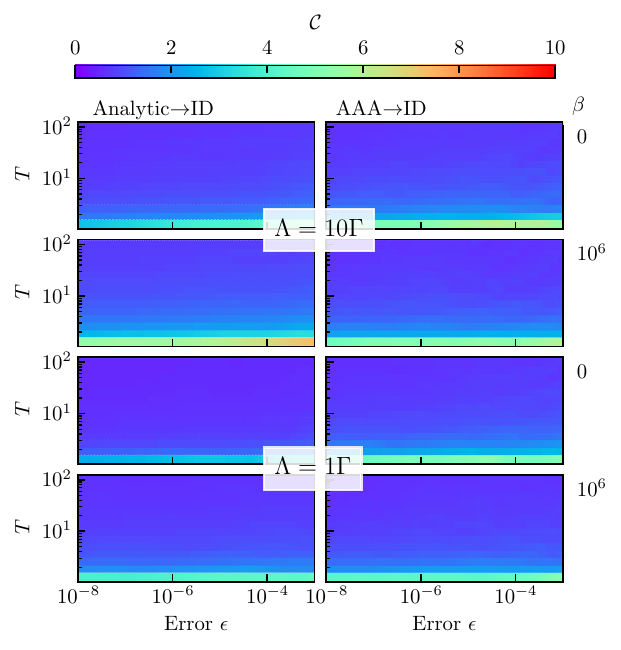} 
    \caption{For a {\it linear spectral density}, $\Gamma_\text{lin}^\Lambda(\omega),$ with $\Lambda =  10$ and $\Lambda = 1$ (first two and last two rows, respectively): Rescaled number of compressed modes, $\mathcal{C},$ as measure for the numerical complexity. Input to the compression is a set of modes obtained with i) an exponential frequency parametrization (left) and ii) the AAA algorithm (right). Results are shown as a function of evolution time $T$ and error $\epsilon$ for inverse temperatures $\beta = 0$ and $\beta = 10^6$ as indicated. All dimensionsful quantities are given in arbitrary units.} 
    \label{fig:complexity_linear} 
    \end{figure}

Evaluating the complexity $\mathcal{C},$ shown in Fig.~\ref{fig:complexity_linear}, reveals a similar dependence as before: While complexity is high for very small values $T,$ the value $\mathcal{C}$ quickly converges for larger $T$. All parameter combinations studied yield approximately the same value $\mathcal{C}\approx 0.1,$ which is in excellent agreement with the values obtained in Sec.~\ref{sec:flat_band_complexity}.
As before, the agreement between the methods in terms of absolute values $N_\text{ID},$ scaling, and complexity $\mathcal{C}$ is a strong indication for the optimality of the obtained sets of modes.

Next, we study two pathological examples where one of the two approaches outperforms the other or only one approach is applicable, demonstrating the power of our hybrid approach. 

\subsubsection{Gaussian Gapped Spectral Density}

\begin{figure}[h]
    \centering
   \begin{minipage}[b]{.48\textwidth}
        \begin{overpic}[width=\linewidth]{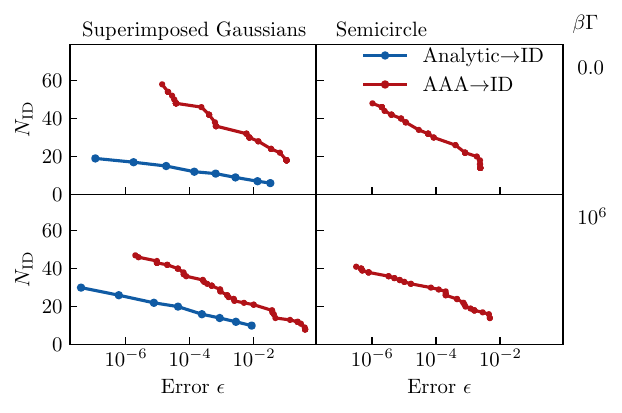}
    \put(5,62){\text{a)}}
    \end{overpic}
    \end{minipage}
    \begin{minipage}[b]{.48\textwidth}
       \begin{overpic}[width=\linewidth]{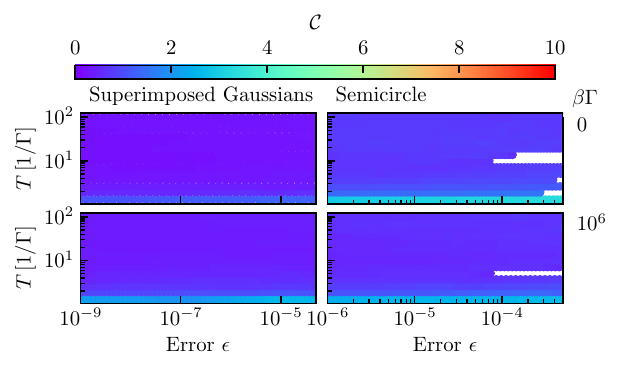}
        \put(5,53){\text{b)}}
    \end{overpic}
    \end{minipage}
    \caption{For a {\it spectral density of superimposed Gaussians}, $\Gamma_\text{gauss}^\nu(\omega),$ with $\nu =  0.05$ and a {\it semicircular spectral density}, $\Gamma_\text{circle}(\omega)$ with $\Lambda = 1$, $\chi=1/2$ (left and right columns, respectively). Results are shown for inverse temperature $\beta = 0$ and $\beta \Gamma = 10^6$. a) Scaling of compressed number of modes $N_\text{ID}$ with the error $\epsilon$ for an evolution time $T = 100/\Gamma$ after ID-compression. Input to the compression is a set of modes obtained by the analytic construction (blue) and the AAA algorithm (red), respectively.  Rows correspond to different inverse temperatures $\beta$ as indicated. b) Rescaled number of compressed modes, $\mathcal{C},$ as measure for the numerical complexity. Results are shown as a function of evolution time $T$ and error $\epsilon.$ The compression results for the i) superimposed Gaussian spectral density refer to the exponential frequency parametrization, while the ii) semicircular spectral density results refer to the AAA construction, both after compression with ID.}
    \label{fig:Gaussians_and_semcircle}
\end{figure}

First, we consider a gapped spectral density composed of three Gaussians, Eq.~(\ref{eq:gaussian_spec_dens}). In Fig.~\ref{fig:Gaussians_and_semcircle}a in the left column, we show the scaling of $N_\text{ID}$ with the error for both approaches. Interestingly, the analytic approach outperforms AAA for both, low and high temperatures. This is due to the inherent difficulty of representing a Gaussian function as rational function $r(z).$ We emphasize that the fact that the bath is gapped does not play a role here since---if a single Gaussian can be represented efficiently---also their sum has a simple representation.

In the analytic approach, however, we can exploit the fact that $\Gamma_\text{gauss}(\omega)$ is analytic in the complex plane, such that again, we can freely rotate the contour without crossing any poles. We thus find that the analytic approach allows for a very compact representation, superior to the AAA representation.

Importantly, the absolute values of $N_\text{ID}$ from AAA results are comparable to those obtained from the previous cases. Rather than viewing this spectral density as particularly challenging, it should be recognized as especially well-suited for the analytic approach, allowing for lower $N_\text{ID}$ values than in most other scenarios.

In Fig.~\ref{fig:Gaussians_and_semcircle}b, we show the complexity, only for the analytic approach, which yield a converged value $\mathcal{C} \lesssim 0.1$ that is slightly smaller, but comparable to the one from the previous cases.

\subsubsection{Semicircular Spectral Density}

Lastly, we consider a semicircular spectral density. As an exact semicircle is non-analytic and cannot be represented by rational functions, we consider a regularized semicircle Eq.~(\ref{eq:spec_dens_circle}), with the regularization parameter $\chi=1/2$.
This spectral density is not suited for the analytic approach as it does not posses a transparent pole structure which would be needed to rotate the integration contour in a controlled way. Instead, upon determining a suitable frequency grid as described in Sec.~\ref{sec:AAA_algorithm}, AAA is straight forwardly applicable and yields the scaling and complexity presented in Figs.~\ref{fig:Gaussians_and_semcircle}\,a--b (right column), again compatible with the $\log(T)\log(1/\epsilon)$ scaling. Altogether, both the absolute values of $N_\text{ID}$ and the complexity are consistent with the numerical results from all other cases.

\section{Pseudomode embedding}
\label{sec:dynamics}

Unraveling the temporal correlations encoded in the hybridization function $\Delta(\tau, \tau^\prime)$ into explicit pseudomodes enables an effective Markovian, i.e., time-local, description of the joint dynamics between the impurity and the pseudomodes. In this Section, we show how to map each mode parameter tuple $(\Gamma_k, \omega_k)$ onto the parameters of a time-evolution prescription of Lindblad type. The main result of this Section is an explicit set of Liouvillians that describes the evolution of the joint system of impurity and pseudomodes, effectively consolidating the particle and hole contributions for a given $\omega_k$ into a single pseudomode.

\subsection{Keldysh Structure}
In this Section, as throughout the paper, we carefully distinguish between the terms:
\begin{enumerate}
    \item ``Hybridization function,'' \begin{equation}
    \Delta(\tau, \tau^\prime) \approx \sum_k \Delta_k (\tau, \tau^\prime),
    \end{equation} which denotes a matrix-valued function defined on the Keldysh contour with time arguments $\tau,\tau^\prime \in \mathcal{C},$ as introduced in Eq.~(\ref{eq:expect_value}).
    
    For later convenience, we now make the Keldysh structure explicit in the notation:
    \begin{equation}
    \Delta_k(\tau,\tau^\prime) \to \Delta_{k,\alpha\beta}(t,t^\prime).\end{equation} 
    Here, $\alpha \in \{+,-\}$ for the forward and backward branch, respectively, and $t,t^\prime \in\mathbb{R}^+.$
    \item ``Kernel function,'' \begin{equation}
        \Delta(t)\approx \sum_k \Delta_k(t), \end{equation} 
        which describes an ordinary function of a single time variable $t\in \mathbb{R}^+$ that has been introduced in Eqs.~(\ref{eq:particle_integral}--\ref{eq:hole_integral}) and has been the main object studied up to this point.
\end{enumerate} Both quantities are easily distinguishable at any point, as the \textit{hybridization function} always has two time arguments, while the \textit{kernel function} always has one time argument. The dependence of the hybridization function on the kernel functions must be consistently derived from Eq.~(\ref{eq:expect_value}). We summarize the resulting relations in Eqs.~(\ref{eq:Keldysh_particle_relation1}--\ref{eq:Keldysh_hole_relation4}).

\subsection{Pseudo-Lindblad description}
\label{sec:Linblad_Dynamics}

Before deriving Liouvillians that describe the joint dynamics of impurity and pseudomodes, we establish the relationship between the hybridization function $\Delta(\tau,\tau^\prime),$ the kernel function $\Delta(t)$ and time evolution by a standard Lindblad equation. In particular, we determine the parameters of the latter such that the two-point functions that it generates on the Keldysh contour relate to the kernel function $\Delta(t)$ precisely as demanded by Eqs.~(\ref{eq:Keldysh_particle_relation1}--\ref{eq:Keldysh_hole_relation4}).

\subsubsection{Complex Couplings Through Two-Site Jump Operators} 
For illustration, we focus on the particle component first. For the single-mode Lindblad equation of the form,
\begin{equation}\label{eq:Lindblad_equation}
     \dot{\rho} = -i[H,\rho] + \gamma_k \big(2 L_k \rho L_k^\dagger - \{L_k^\dagger L_k, \rho\} \big),
\end{equation} we define jump operators:
\begin{equation}\label{eq:nonlocal_jumpoperator}
    L_k = d + \frac{\mu_k}{\gamma_k} c_k,
\end{equation} as well as an auxiliary Hamiltonian describing the pseudomode dynamics:
\begin{equation}\label{eq:H_aux}
    H_\text{aux} = \epsilon_k c_k^\dagger c_k + t_k (d^\dagger c_k + c_k^\dagger d),\text{ with } t_k,\epsilon_k \in \mathbb{R},
\end{equation}
such that $H = H_\text{imp} + H_\text{aux}$ represents the full Hamiltonian. From Eq.~(\ref{eq:Lindblad_equation}), it is straight forward to derive the amplitude of two-point functions on the backward branch:
\begin{equation}\label{eq:Lindblad_twopoint_particle}
 \Delta^p_{k,--}(t_0, t_0+t)  = (it_k -\mu_k)^2 e^{i(\epsilon_k +i\gamma_k) t}.
\end{equation} 

Interestingly, due to the two-site jump operator, the prefactor in Eq.~(\ref{eq:Lindblad_twopoint_particle}) becomes complex, which is necessary for describing a pseudomode $(\Gamma_k^p, \omega_k)$ with $\Gamma_k^p \in \mathbb{C}$. However, this construction also introduces an additional term in the Lindblad equation, Eq.~(\ref{eq:Lindblad_equation}), given by:
\begin{equation}\label{eq:unwanted_dissipation}
    \frac{\mu_k^2}{\gamma_k} \left( 2 d \rho d^\dagger - \{ d^\dagger d, \rho \} \right).
\end{equation}
This term represents local dissipation on the impurity, which is undesirable. Consequently, we remove it from the Lindblad equation manually, noting that this results in unphysical time evolution for an individual pseudomode.

\subsubsection{Determining the Lindblad parameters} The parameters in Eq.~(\ref{eq:Lindblad_twopoint_particle}) can be tuned to reproduce the known result for the two-point function on the backward branch (cf. Eq.~(\ref{eq:Keldysh_particle_relation2})),
\begin{equation}\label{eq:two_point_backward}
   \Delta^p_{k,--}(t_0, t_0+t)  =  -\Delta_k^p(t) = -\Gamma_k^p e^{i\omega_k t},
\end{equation} where $\Gamma_k^p ,\omega_k \in\mathbb{C}.$
To lighten the notation, we introduce the variable $\lambda_k^p = \sqrt{\Gamma_k^p}.$ Then,  by matching the prefactors and exponents of Eq.~(\ref{eq:Lindblad_twopoint_particle}) and Eq.~(\ref{eq:two_point_backward}), we obtain the following relationship between the Lindblad parameters $(t_k, \mu_k, \epsilon_k, \gamma_k)$ and the pseudomode parameters $(\lambda_k^p, \omega_k)$:
\begin{gather}
t_k = \Re(\lambda_k^p),\\
\mu_k = \Im(\lambda_k^p),\\
\epsilon_k = \Re(\omega_k^p),\\
\gamma_k =\Im(\omega_k^p).
\end{gather}
From this, it is obvious that for $\lambda_k^p \in \mathbb{R},$ the jump operators become local and the issue of removing local a local dissipation term does not arise.

\subsection{Deriving Liouvillians as Effective Amplitudes}
\begin{figure}[h] 
    \centering 
    \begin{overpic}[width=\linewidth]{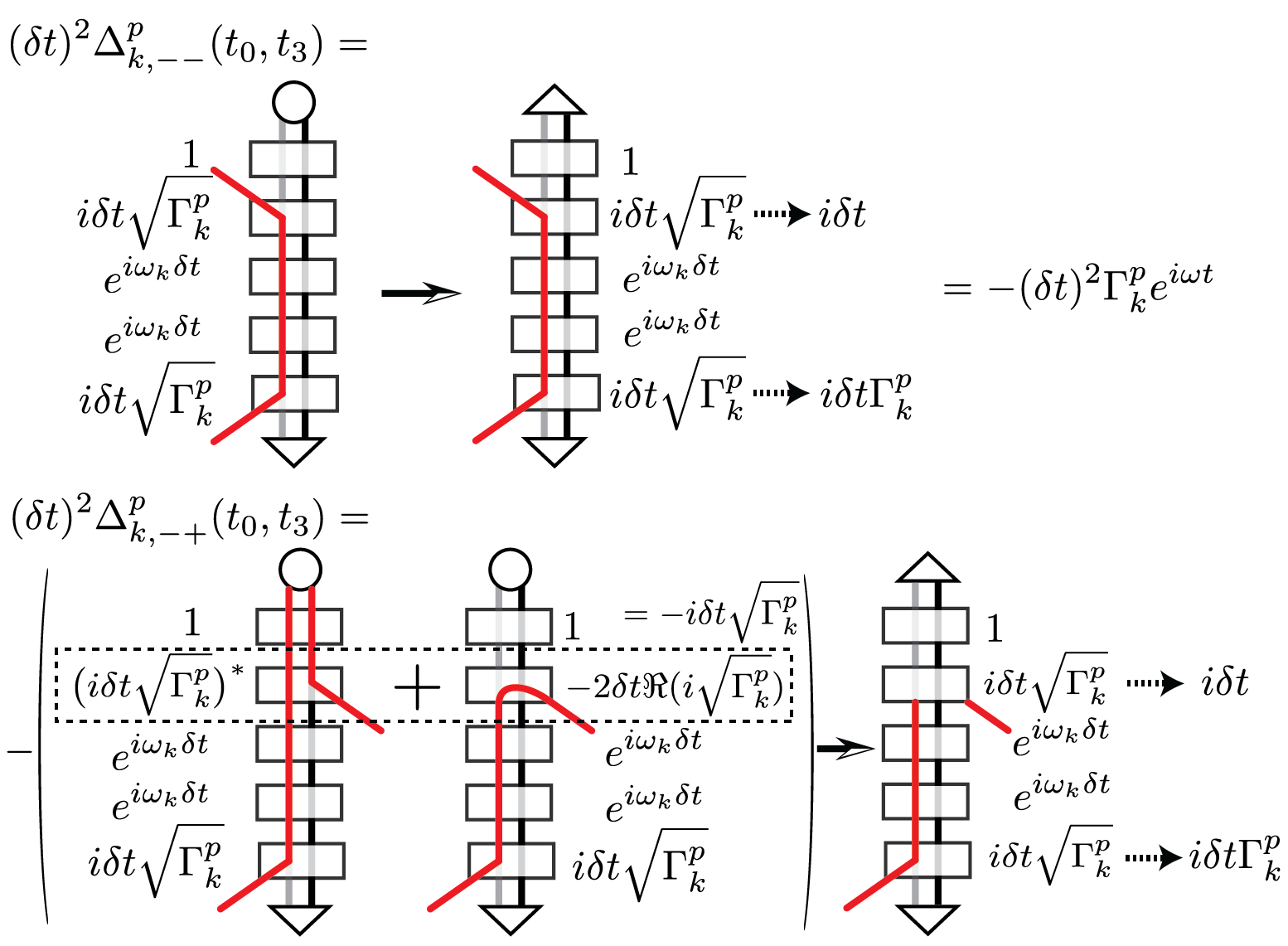} 
    \put(0,74){\text{a)}}
     \put(0,37){\text{c)}}
      \put(37,74){\text{b)}}
     \put(66,37){\text{d)}}
    \end{overpic}
    \caption{Diagrams for the particle component on the Keldysh contour. Fig. a) shows the two-point function on the backward branch, $(\delta t)^2\Delta_{k,--}^p(t,t^\prime),$ which is determined by the kernel function $\Delta_k^p(t) = \Gamma_k^p e^{i\omega_k t}$. The contribution from each time step is determined by the Lindblad equation, Eq.~(\ref{eq:Lindblad_equation}). b) The two-point function $(\delta t)^2\Delta_{-+},$ connecting the forward and backward branch, is a sum of two processes. The resulting value is the same as in a), consistent with the Keldysh relations, Eqs.~(\ref{eq:Keldysh_particle_relation1}--\ref{eq:Keldysh_hole_relation4}). Figures c) and d) show the effective amplitudes obtained by performing the trace at the final time, which is thus replaced by a projection onto the vacuum. The dashed arrows indicate how the effective amplitudes change under the gauge transformation described in the main text. } 
    \label{fig:diagrams_particle} 
\end{figure}

At this point, having determined the parameters of the Lindblad equation, we can extract the amplitudes of any time-local process. In Fig.~\ref{fig:diagrams_particle}a, we illustrate how the amplitudes accumulated at each individual time step contribute to the overall two-point function in Eq.~(\ref{eq:two_point_backward}). As this diagram is only nonzero when the initial and final state are the vacuum, we replace the condition at the final time by a projection onto the vacuum as shown in Fig.~\ref{fig:diagrams_particle}b.

Furthermore, we are now in a position to compute the \textit{effective} amplitudes for two-point functions to which multiple diagrams contribute. The determination of these effective amplitudes is the central step to deriving a Liouvillian that captures the full dynamics. While we adopt a diagrammatic approach here, a more formal treatment can be found in Eq.~(\ref{eq:Clavelli-shapiro}).

To reveal the strategy, consider the two-point function \( \Delta_{-+}^p \) shown in Fig.~\ref{fig:diagrams_particle}c, which connects the forward to the backward branch. This two-point function consists of two diagrams that share identical amplitudes at all time steps except one: in the first diagram, a complex hopping amplitude from the impurity to the pseudomode on the forward branch is present, while in the second diagram, the two-site jump operator, defined by Eq.~(\ref{eq:nonlocal_jumpoperator}), induces a decay process. Performing the trace at the final time $T$ corresponds to summing up these two amplitudes to an effective amplitude,
\begin{equation}\label{eq:effective_amplitude}
    (i\delta t \lambda_k^p)^* - 2\delta t \Re(i\lambda_k^p) = -i\delta t \lambda_k^p,
\end{equation} 
and replacing the boundary condition by a projection onto the vacuum state. Cancelling the sign in Eq.~(\ref{eq:effective_amplitude}) with the global sign in front of the diagram in Fig.~\ref{fig:diagrams_particle}c, we thus obtain the \textit{effective} amplitudes shown in Fig.~\ref{fig:diagrams_particle}d. 
Importantly, the values of the two-point functions from Figs.~\ref{fig:diagrams_particle}a--d are equivalent, as required by the Keldysh relation in Eq.~(\ref{eq:Keldysh_particle_relation3}).

By systematically evaluating the effective amplitudes for all Keldysh components, we obtain a comprehensive set of tensors, interpreted as a Liouvillian that determines the joint evolution of impurity and pseudomode. Formally, the evolution is described by an equation of the form (see Eq.~(\ref{eq:dissipator})):
\begin{equation}\label{eq:dissipator_result}
    \frac{d}{dt}\ket{\rho} =\mathcal{L}\ket{\rho},
\end{equation} 
together with the convention that the pseudomode degrees of freedom at the final time $T$ are projected onto the vacuum state,
\begin{equation}
\ket{\rho_{\text{imp}}(T)}=\, _{k}\langle 0\ket{\rho(T)}.
\end{equation}
The derivation for holes is analogous and yields another Liouvillian, as further outlined in App.~\ref{app:Lindblad_construction}.

In summary, by moving the trace to the latest time step where the trajectory on the forward and backward branch of the pseudomode differ, we obtain a set of \textit{effective} amplitudes that serve as an effective Liouvillian defining the evolution of the joint impurity-pseudomode system. 

\subsection{Combining Particle and Hole Liouvillians}
\label{sec:combining_particle_hole}

When a pseudomode from the particle component and one from the hole component share the same complex frequency $\omega_k$, their contributions to the hybridization function can be effectively combined into a single pseudomode. In this subsection, we demonstrate how the previously derived sets of Liouvillians for particles and holes can be merged. As before, we present a simple diagrammatic derivation here and refer to App.~\ref{app:dissipators_equations} for a more formal but equivalent derivation. Crucially, both methods consistently lead to the same evolution equation, Eq.~(\ref{eq:dissipator_result}).

The central idea hinges on the observation that the amplitudes can be grouped into three distinct classes, as shown in Fig.~\ref{fig:all_tensors}: (I) those that contribute solely to the particle component, (II) those that contribute solely to the hole component, and (III) those that appear in both components. Leveraging this structure, we can apply a gauge transformation separately to the set of particle amplitudes and the set of hole amplitudes, shifting all physical information into tensors that are unique to each component.

In Figs.~\ref{fig:diagrams_particle}, \ref{fig:diagrams_hole}, we illustrate with dashed arrows how this gauge transformation impacts the effective amplitudes derived earlier. Importantly, this transformation preserves the values of all two-point functions. As a result, we achieve a situation in which all tensors appearing in both the particle and hole components are independent of the pseudomode parameters and identical across both sets. Consequently, the sets of amplitudes can be directly merged, effectively unifying the descriptions of the particle and hole components into a single pseudomode. Any two-point function on the Keldysh contour can then be expressed as a combination of these tensors, and conversely, every combination of tensors with $2n$ external legs, and only continuous internal lines corresponds to a valid $2n$-point function. The final set of amplitudes is diagramatically summarized in Fig.~\ref{fig:all_tensors} and explicitly stated in Eqs.~(\ref{eq:particle+hole_dissipator}).

\section{Summary and Discussion}

In this paper, we studied efficient bath representations of quantum impurity models in terms of pseudomodes. The influence of the bath on the impurity is fully encoded by the hybridization function, whose Keldysh components can be expressed as ordinary Fourier integrals. Our approach rests on the approximation of the latter in terms of decaying exponentials that correspond to pseudomodes, see Eq.~(\ref{eq:Fourier_integral_intro}). The combination of all pseudomodes represents an auxiliary bath that approximates the influence that the original continuous bath exerts on the impurity.

First, we proved a rigorous error bound on the pseudomode number (\ref{eq:Nbath}) for spectral densities that decay at least exponentially for $|\omega|\to \infty.$ The proof used an exponential frequency parametrization in the complex plane to discretize the Fourier integral in Eq.~(\ref{eq:Fourier_integral_intro}), thus providing an explicit pseudomode construction. We demonstrated that these pseudomodes could be further compressed using ID, which effectively selects a compact subset of modes and renormalizes their couplings. Notably, we found that this procedure parametrically improves the scaling of pseudomode number, see Eq.(\ref{eq:Nbath_improved}). 
Further, to generalize our numerical approach to arbitrary spectral densities, e.g.  when they are defined numerically for a set of points, we employed the AAA algorithm to construct a rational approximation $r(\omega)$ of the spectral content of the Fourier integral, $f(\omega)$. This approach, combined with ID compression, while providing a different way to construct pseudomodes, yielded analogous scaling (\ref{eq:Nbath_improved}) for various spectral densities. 

Overall, the two approaches are complementary: the AAA method offers greater flexibility, particularly with numerically sampled function values, while the analytic method requires explicit knowledge of the pole structure of \(f(\omega)\). However, the AAA method's effectiveness depends on \(f(\omega)\) being amenable to rational function approximation. We showed that Gaussian functions, with their rapidly decaying tails, present challenges for the AAA method. In contrast, the analytic approach excels for Gaussians, as they are analytic throughout the complex plane, allowing contour rotation without crossing poles. 

Finally, from the approximation of the kernel functions, we derive an explicit Liouvillian \(\mathcal{L}\) that governs the time evolution of the joint impurity-pseudomode system, expressed as
\begin{equation}\label{eq:Lindblad_summary}
    \frac{d}{dt}\ket{\rho} = \mathcal{L}\ket{\rho}.
\end{equation}
Importantly, while the time evolution of each individual pseudomode mode is unphysical due to their complex couplings $\Gamma_k\in \mathbb{C},$ their combination approximates the influence of the exact, physical bath. Since the only approximation in this work is the finite-sum representation in Eq.~(\ref{eq:Fourier_integral_intro}), the error bound on observables derived in Sec.~\ref{sec:Bound_on_the_observables} directly applies to the Liouvillian formulation. This completes the formal framework for constructing optimal bath representations, establishing a rigorous upper bound on the simulation complexity of non-equilibrium QIM. 

Turning towards applications, we expect that specific QIM problems may exhibit even lower complexity compared to the general bound. Combining the optimal pseudomode formulation with additional tensor-network compression of the full state is a promising approach to describing the long-time behavior of QIMs, which will be explored in future work. This direction aligns with recent advances in representing the Feynman-Vernon influence function as an effective temporal matrix product state~\cite{ThoennissPRB2022,ThoennissPRB23,NgPRB23,ParkPRB24}, but may have lower computational cost. 


Looking ahead, several other applications of the method can be envisioned. One intriguing direction is the evaluation of diagrammatic many-body relations, such as Schwinger-Dyson equations, for computing non-equilibrium Green's functions. For equilibrium systems, a similar approach has been recently developed in ~\cite{KayeDiscrete,kiese24Discrete}. In this context, our method could significantly reduce computational resource demands by replacing nested time integrals with sums over a moderate number of modes.

Lastly, the challenge of approximating Fourier integrals by finite sums of decaying exponentials extends beyond quantum impurity problems, with promising applications in fields that depend on efficient Fourier representations. This approach could potentially enhance spectral methods for partial and stochastic differential equations and streamline the computation of bosonic Green's functions, where accurate treatment of frequency-dependent phenomena is essential for understanding many-body interactions and collective excitations.

{\it Note added---} During final stages of preparing this manuscript for submission, a related paper~\cite{ParkPseudomodesArxiv2024} appeared, where a different algorithm was used to numerically construct a set of pseudomodes. The conclusions of this work appear to be generally consistent with the numerical analysis part of our paper.

\begin{acknowledgements}
We thank Jason Kaye, Benedikt Kloss, and Olivier Parcollet for insightful discussions during the development of this work. Furthermore, we thank Gunhee Park for interesting conversations on related topics, and Alessio Lerose and Michael Sonner for numerous stimulating interactions during this and related projects.
This work was supported by the European Research Council (ERC) under the European Union's Horizon 2020 research and innovation program (grant agreement No. 864597).

\end{acknowledgements}

\appendix
\section{Derivation of Liouvillians}\label{app:Lindblad_construction}

 \subsection{Keldysh Structure}
 The dependence of the hybridization function $\Delta(\tau,\tau^\prime)$---defined on the Keldysh contour---on the kernel function must be consistently derived from Eq.~(\ref{eq:expect_value}).
For the Hamiltonian of the form (\ref{eq:Ham_full}--\ref{eq:H_bath}), we define two operators $F,F^\dagger$ as:
\begin{equation}
F=\sum\limits_k t_k c_k \,,\quad F^\dagger=\sum\limits_k t_k c^\dagger_k.
\end{equation}
The hybridization function might be calculated as a bath two-point correlation function of the $F,F^\dagger$ operators:
\begin{equation}
\Delta_{\tau,\tau^\prime}=\text{tr}_{\text{bath}}\left(\mathcal{P}F(\tau)F^\dagger(\tau^\prime)\rho_{\text{bath}}(0)\right),
\end{equation}
where $\mathcal{P}$ means the ordering along the Keldysh contour. Omitting the details of the calculation here, we state the resulting relations which we use in Sec.~\ref{sec:dynamics}.
Using the decomposition into a particle and hole component from Eq.~(\ref{eq:hybridization_function}), we find for the particle component:
\begin{align}
  \Delta_{++}^p(t,t^\prime) =& -\Theta(t-t^\prime)\big(\Delta^p(t-t^\prime)\big)^*,\label{eq:Keldysh_particle_relation1}\\
  \Delta_{--}^p(t,t^\prime) =& -\Theta(t^\prime-t)\Delta^p(t^\prime-t) ,\label{eq:Keldysh_particle_relation2}\\
  \Delta_{-+}^p(t,t^\prime) =& \Delta_{++}^p(t,t^\prime) + \Delta_{--}^p(t,t^\prime),\label{eq:Keldysh_particle_relation3}\\
  \Delta_{+-}^p(t,t^\prime) =&   \,0,\label{eq:Keldysh_particle_relation4}
  \end{align}
and for the hole component:

\begin{align}
  \Delta_{++}^h(t,t^\prime) =& \,\Theta(t^\prime-t)\Delta^h(t-t^\prime),\label{eq:Keldysh_hole_relation1}\\
  \Delta_{--}^h(t,t^\prime) =& \,\Theta(t-t^\prime)\big(\Delta^h(t^\prime-t) \big)^*,\label{eq:Keldysh_hole_relation2}\\
  \Delta_{+-}^h(t,t^\prime) =& \,\Delta_{++}^h(t,t^\prime) + \Delta_{--}^h(t,t^\prime),\label{eq:Keldysh_hole_relation3}\\
  \Delta_{-+}^h(t,t^\prime) =&  \, 0.\label{eq:Keldysh_hole_relation4}
  \end{align}

\subsection{Details of Diagrammatic Derivation}
\label{app:dissipators_diagrams}

Here, we comment on the derivation of Liouvillian for the hole component which is largely analogous to the particle case. The Lindblad equation in this case is given by \begin{equation}
    \dot{\rho} = -i[H,\rho] + \gamma_k \big(2 L^\dagger_k \rho L_k - \{L_k L_k^\dagger, \rho\} \big).
\end{equation}  Again, computing two-point functions, this time on the forward branch, yields
\begin{equation}
    \Delta_{++}^h(t_0, t_0+t) =- (-it_k-\mu_k)^2e^{i(\epsilon_k+i\gamma_k)t}.
\end{equation} Matching the prefactor and exponent with Eq.~(\ref{eq:Keldysh_hole_relation1}), analogously to the particle case,
yields the defining equations for the evolution parameters:
\begin{align}\label{eq:Lindblad_params1_hole}
\epsilon_k + i\gamma_k &= \omega_k,\\
   -it_k - \mu_k  &= i\sqrt{\Gamma_k^h}.\label{eq:Lindblad_params2_hole}
    \end{align} These fully determine the Lindblad dynamics. 
    
\begin{figure}[h] 
    \centering 
    \begin{overpic}[width=\linewidth]{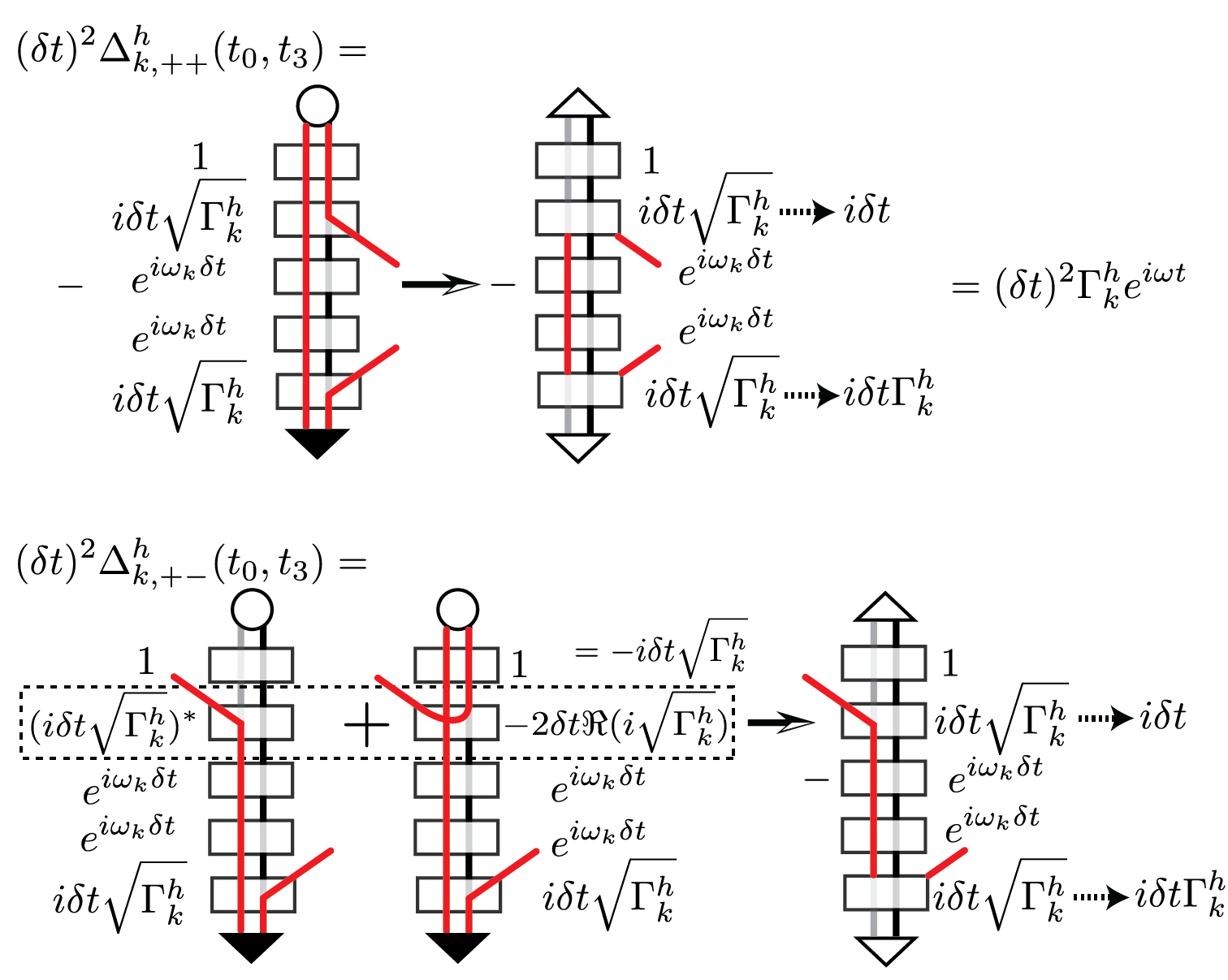} 
     \put(0,81){\text{a)}}
     \put(0,39){\text{c)}}
      \put(37,81){\text{b)}}
     \put(66,39){\text{d)}}
    \end{overpic}
    \caption{Diagrams for the hole component on the Keldysh contour as defined in the lower right corner. a) The figure shows the two-point function on the backward branch, $(\delta t)^2\Delta_{++}^h(t,t^\prime),$ which is determined by the kernel function $\Delta_k^h(t) = \Gamma_k^h e^{i\omega_k t}$.  b) The two-point function $\Delta_{k,+-}^h$ connecting the forward and backward branch, is the sum of two processes. The resulting value is the same as in a), consistent with the Keldysh relations, Eqs.~(\ref{eq:Keldysh_particle_relation1}--\ref{eq:Keldysh_hole_relation4}). Figures c) and d) show the effective amplitudes obtained by performing the trace at the final time, which is thus replaced by a projection onto the vacuum. The dashed arrows indicate how the effective amplitudes change under the gauge transformation described in the main text. . } 
    \label{fig:diagrams_hole} 
\end{figure}
In Fig.~\ref{fig:diagrams_hole}, we repeat the same diagrammatic derivation employed in Fig.~\ref{fig:diagrams_particle} to derive effective amplitudes that are interpreted as a Liouvillian time evolution.

\begin{figure}[h]
    \centering
    \begin{overpic}[width=0.9\linewidth]{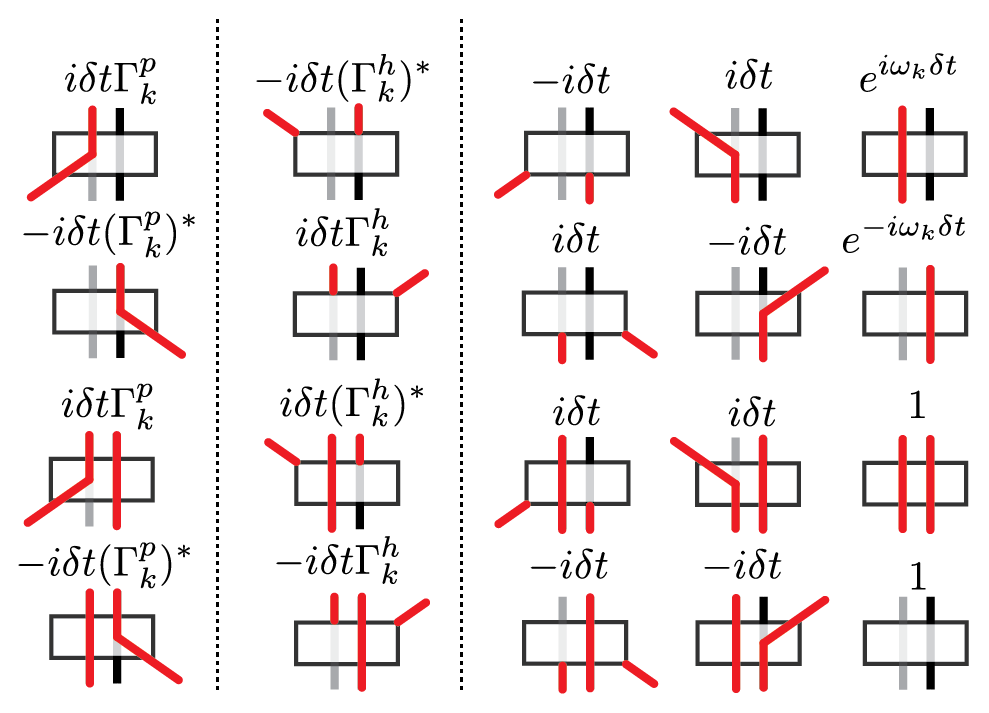}
    \put(3,68){\text{I Particle}}
    \put(26,68){\text{II Hole}}
     \put(50,68){\text{III Both}}
    \end{overpic}
    \caption{Overview of all tensors---derived as effective amplitudes---that describe the joint evolution of impurity and pseudomode. Tensors in class (I) encode amplitudes from the particle component, tensors in class (II) encode amplitudes from the hole component, and tensors in class (III) do not encode physical information. These tensors coincide with the ones derived in Eq.~(\ref{eq:particle+hole_dissipator}).}
    \label{fig:all_tensors}
\end{figure}

Finally, in Fig.~\ref{fig:all_tensors}, we summarize the full Liouvillian, obtained after merging the particle and hole components with the use of two gauge transformations that move all physical content of the Liouvillian to those that contribute solely to the particle and hole contributions, respectively. For details, we refer to the main text, Sec.~\ref{sec:dynamics}, as well as to App.~\ref{app:dissipators_equations} which contains a more formal derivation of these Liouvillians.

\subsection{Derivation of the effective Liouvillians using Operator-State Duality}
\label{app:dissipators_equations}
In this appendix, we complement the diagrammatic derivation with a formal analytic derivation of the particle pseudomode embedding for a given exponential spectral density defined by Eq.~\eqref{eq:two_point_backward} and the condition to explicitly obey the Keldysh relations, Eqs.~(\ref{eq:Keldysh_particle_relation1}--\ref{eq:Keldysh_particle_relation4}).

As in the main text, in order to lighten notation, we introduce the variable
    \begin{equation}\label{eq:lambda_particle}
        \lambda_k^p \equiv t_k + i \mu_k = \sqrt{\Gamma_k^p},
    \end{equation}
    such that the kernel function of the particle component can be expressed as,
    \begin{equation}
     -\Delta_{k,--}^p(t_0, t_0+t) =   \Delta_k^p(t) = (\lambda_k^p)^2 e^{i\omega_k t} = \Gamma_k^p e^{i\omega_k t} .
    \end{equation}
    For holes, we introduce:
    \begin{equation}\label{eq:lambda_hole}
        \lambda_k^h = - t_k +i\mu_k = \sqrt{\Gamma_k^h},
    \end{equation} resulting in,
    \begin{equation}
    \Delta_{++}^h(t_0, t_0 + t) = \Delta_k^h(t) =  (\lambda_k^h)^2 e^{i\omega_kt} = \Gamma_k^h e^{i\omega_kt}. 
    \end{equation}

\paragraph{Derivation for particle contribution}
Our starting point is the Eqs.~(\ref{eq:Lindblad_equation}--\ref{eq:H_aux}). We start by performing the so-called operator-state duality by introducing the two types of fermions $c_{k},\bar{c}_k$ as well as two copies of impurity fermions $d,\bar{d}$ corresponding to a forward and backward branch respectively; and map the density matrix into a state via the formula:
\begin{equation}
\rho \to |\rho\rangle=\rho \sum\limits_i|i\rangle|i\rangle \,
\end{equation}
where $|i\rangle$ forms a basis of states in the total Hilbert space. As a result, we obtain the evolution of a $|\rho\rangle$ state:
\begin{gather}\label{eq:dissipator}
\frac{d}{dt}|\rho\rangle=\tilde{\mathcal{L}} |\rho\rangle,
\end{gather}
with the Liouvillian $\tilde{\mathcal{L}}$ in superoperator form
\begin{equation}\label{eq:Limp_Lk_Lint}
\tilde{\mathcal{L}}=\mathcal{L}_{\text{imp}}+\tilde{\mathcal{L}}^p_{k}+\tilde{\mathcal{L}}^p_{int},
\end{equation}
where each of the three terms is explicitly given as:
\begin{gather}\label{eq:D_imp}
\mathcal{L}_{\text{imp}}=-i (H_{\text{imp}}-\bar{H}_{\text{imp}})-\frac{\mu_k^2}{\gamma_k}(d^\dagger d + \bar{d}^\dagger\bar{d})+\\ \nonumber + 2\frac{\mu_k^2}{\gamma_k}d\bar{d}\\
\tilde{\mathcal{L}}^p_k=(-i\epsilon_k-\gamma_k)c^{\dagger}_kc_k+(i\epsilon_k-\gamma_k)\bar{c}^{\dagger}_k\bar{c}_k+2\gamma c_k\bar{c}_k\\ \label{eq:D_int}
\tilde{\mathcal{L}}^p_{\text{int}}=-i (\lambda_k^p)^{\star} (c^\dagger_k d+d^\dagger c_k)+i\lambda_k^p(\bar{c}^\dagger_k \bar{d}+\bar{d}^\dagger \bar{c}_k)+\\ \nonumber +2\mu_k c_k \bar{d}+2\mu_k d\bar{c}_k,
\end{gather}
where $\lambda_k^p$ is defined in Eq.~(\ref{eq:lambda_particle}). We note that the individual terms in Eq.~\eqref{eq:Limp_Lk_Lint} are not physical Liouvillians as only their combination provides an approximation to the physical evolution.
The Eqs.~(\ref{eq:D_imp}--\ref{eq:D_int}) directly follow from \eqref{eq:Lindblad_equation} with the conventions:
\begin{gather}
c_k\rho \to c_k|\rho\rangle \,,\quad c_k^\dagger\rho \to c_k^\dagger|\rho\rangle, \\
\rho c_k \to \bar{c}_k^\dagger|\rho\rangle \,,\quad \rho d_k^\dagger \to \bar{d}_k|\rho\rangle,\\
d_k\rho \to d_k|\rho\rangle \,,\quad d_k^\dagger\rho \to d_k^\dagger|\rho\rangle, \\
\rho d_k \to \bar{d}_k^\dagger|\rho\rangle \,,\quad \rho d_k^\dagger \to \bar{d}_k|\rho\rangle.
\end{gather}
Our prescription is to evolve the initial density matrix 
\begin{equation}
\rho(0)=\rho_{\text{imp}}\otimes |0\rangle \langle 0|
\end{equation}
according to Eq.~\eqref{eq:physical-Dissipator} and then take trace over the auxiliary fermions. The trace of a density matrix is equal to the expectation value in the state picture:
\begin{equation}\label{eq:trace_k}
\text{tr}_k \rho(T)=\,  _k\langle 0,0| e^{-c_k\bar{c}_k}|\rho(T)\rangle.
\end{equation}
To perform a trick from Fig.~\ref{fig:diagrams_particle} and relate the trace computation to vacuum averaging, we use a known idea from \cite{CLAVELLI1973490}. We first note that the state $|\rho(T)\rangle$ may be considered as an initial state acted by an evolution operator $\mathcal{E}_T=e^{\tilde{\mathcal{L}}T}$.
\begin{equation}
|\rho(T)\rangle=\mathcal{E}_T|\rho_{\text{imp}}\rangle|0,0\rangle_k.
\end{equation}
Then, we can use the commutation relation:
\begin{equation}\label{eq:Clavelli-shapiro}
e^{-c_k\bar{c}_k}\mathcal{E}_T(c_k,\bar{c}_k|c_k^\dagger,\bar{c}_k^{\dagger})=\mathcal{E}_T(c_k,\bar{c}_k|c_k^\dagger-\bar{c}_k,\bar{c}_k^{\dagger}+c_k)e^{-c_k\bar{c}_k},
\end{equation}
and move the $e^{-c_k\bar{c}_k}$ operator to the right until it hits the vacuum. As a result, $c_k^\dagger,\bar{c}_k^\dagger$ operators will be transformed as:
\begin{gather}
c_k^\dagger\to c_k^\dagger-\bar{c_k},\\
\bar{c}_k^\dagger\to \bar{c}_k^\dagger+c_k,
\end{gather}
and we get new effective Liouvillians:
\begin{gather}\label{eq:physical-Dissipator}
\mathcal{L}^p_k=(-i\epsilon_k-\gamma_k)c^{\dagger}_kc_k+(i\epsilon_k-\gamma_k)\bar{c}^{\dagger}_k\bar{c}_k\\
\mathcal{L}^p_{\text{int}}=-i (\lambda_k^p)^{\star} (c^\dagger_k d+d^\dagger c_k)+i \lambda_k^p(\bar{c}^\dagger_k \bar{d}+\bar{d}^\dagger \bar{c}_k)+\\ \nonumber +i(\lambda_k^p)^{\star} c_k \bar{d}+i\lambda_k^p d\bar{c}_k.
\end{gather}
Note that after this transformation the impurity density matrix at moment $T$ is given by an overlap with vacuum instead of a trace in Eq.~\eqref{eq:trace_k}:
\begin{equation}
|\rho_{\text{imp}}(T)\rangle=\, _k\langle 0| \rho(T)\rangle.
\end{equation}
Here, we introduced the shorthand notation $_k\!\bra{0}$ for the vacuum state in the pseudomode space $_k \!\bra{0}\overset{\text{def}}{=}\,_k \!\bra{0,0}$.
After a gauge transformation
\begin{gather}
c_k\to c_k/(\lambda_k^p)^\star\,,\quad c_k^\dagger\to (\lambda_k^p)^{\star} c_k^\dagger\,,\\
\bar{c}_k\to \bar{c}_k/\lambda_k^p\,,\quad \bar{c}_k^\dagger\to \lambda_k^p \,\bar{c}_k^\dagger\,,
\end{gather}
we reproduce the following operators:
\begin{gather}\label{eq:D_k}
\mathcal{L}^p_k=(-i\epsilon_k-\gamma_k)c^{\dagger}_kc_k+(i\epsilon_k-\gamma_k)\bar{c}^{\dagger}_k\bar{c}_k,\\
\mathcal{L}^p_{\text{int}}=-i ((\Gamma_k^p)^\star c^\dagger_k d+d^\dagger c_k)+i (\Gamma_k\bar{c}^\dagger_k \bar{d}+\bar{d}^\dagger \bar{c}_k)+\\ \nonumber +i c_k \bar{d}+i d\bar{c}_k,
\end{gather}
where we switched to the $\Gamma_k^p$ notations defined in Eq.~\eqref{eq:lambda_particle}. These Liouvillians exactly reproduce the diagrammatics from Fig.~\ref{fig:diagrams_particle}.

The derivation for holes is analogical. Remarkably, both particle and hole contributions could be united into a single mode with the interaction part of the Liouvillian:
\begin{gather}\label{eq:particle+hole_dissipator}
\mathcal{L}_{\text{int}}=-i ((\Gamma^p_k)^\star c^\dagger_k+\bar{c}_k) d-id^\dagger(c_k-\bar{c}^\dagger_k\Gamma_k^h) + \\ \nonumber +i(\Gamma_k^p\bar{c}^\dagger_k+c_k) \bar{d}+i\bar{d}^\dagger(\bar{c}_k-(\Gamma_k^h)^\star c_k^\dagger)\,,
\end{gather}
which exactly reproduce the tensors summarized in Fig.~\ref{fig:all_tensors}.

Putting everything together, we obtain a physical evolution:
\begin{equation}\label{eq:final_dissiparors}
\frac{d}{dt}|\rho\rangle=(\mathcal{L}_{\text{imp}}+\mathcal{L}_{\text{int}}+\mathcal{L}_k) |\rho\rangle,
\end{equation}
where $\mathcal{L}_{\text{int}}$ is given in Eq.~\eqref{eq:particle+hole_dissipator}, $\mathcal{L}_{k}=\mathcal{L}^p_{k}$ from Eq.~\eqref{eq:D_k} and $\mathcal{L}_{\text{imp}}$ is given by Eq.~\eqref{eq:D_imp}. As stated at the beginning of this section, the impurity evolution contains an extra dissipation term with decay rate $\mu_k^2/\gamma_k$. In practice, we drop this term, thereby giving up physicality of the time evolution. Formally, this corresponds to substituting:
\begin{equation}
\mathcal{L}_{\text{imp}}=-i(H_{\text{imp}}-\bar{H}_{\text{imp}}).
\end{equation}

\section{Proof of the Theorem}
\label{sec:proof}
Here we provide a detailed proof of the theorem in Section \ref{sec:analytic_bound}, which gives an analytical decomposition of the kernel function into a finite number of exponentials, along with the accompanying error bounds.

We start with an overview of our strategy, which consists of the following steps: 

\noindent (i) Rotating the integration contour by an angle $r_{\rm max}$ (black rotated contour in Fig.~\ref{fig:rotated_contour}), which generates a contribution from the residues of the poles of $\Gamma(\omega)$ situated in sectors I and II (see Fig.~\ref{fig:rotated_contour}). Importantly, this contribution is a sum of exponentials, $e^{i\Omega_k t}$, where $\Omega_k$ are the poles; 

\noindent (ii) The integral over the rotated contour is discretized into an infinite sum of exponentials, with a tunable discretization parameter $h$, which contributes to the overall approximation error;

\noindent (iii) The difference between the integral and the sum is represented using a modification of known result (see McNamee Stenger and Whitney,  Ref.~\cite{McNamee}, theorem 5.2), yielding contributions from poles in regions II and III, and also an error that is controlled by the choice of angle $r$, see below.

\noindent (iv) Finally, the sum of the exponentials arising when the integral over the rotated contour is discretized, is truncated, and the error of this truncation, controlled by parameter $h$, is bounded.

To facilitate the analysis, we split the proof into two parts. First, we prove an auxiliary theorem for steps (i)-(iii) above, and, second, a separate lemma proving the truncation error bound, step (iv). We start with the auxiliary theorem.

\begin{theorem}\label{exp_th}
Consider a spectral density $\Gamma(\omega)$ that is meromorphic in the upper half-place. Fix an angle $0<r_{\text{max}}<\frac{\pi}{4}$ and assume that $\Gamma(\omega)$ decays at least exponentially at infinity,
\begin{equation}\label{eq:high_freqs_cutoff}
|\Gamma(\omega)|< \Gamma e^{-\nu|\Re(\omega)|}\,, \quad \{|\omega| \gg \Lambda\,, \ 0<\arg(\omega)<2r_{\text{max}}\} \ ;
\end{equation}
further, asumme that $\Gamma(\omega)$ has a finite number of poles $\omega=\Omega_k$ in the sector $0<\arg(\omega)<2r_{\text{max}}$.

Then, for any $0<r<r_{\text{max}}$, the kernel function
\begin{equation}\label{g_integral}
\Delta^p_+(t)=\int\limits_{0}^{\infty}\Gamma(\omega)(1-n_{\text{F}}(\omega))e^{i\omega t} \frac{d\omega}{2\pi}
\end{equation}
can be approximated as an infinite number of exponentials,
\begin{equation}\label{eq:Dp_exponentials}  \Delta_+^p(t)=D_1(t)+D_2(t)+\tilde{\Delta}_+^p(t)+\delta_r(t), 
\end{equation}
where $D_{1,2}(t)$ and $\tilde{\Delta}_+^p(t)$ are sums of exponentials, and $\delta_r(t)$ is the error that depends on the choice of $r$, bound for which is provided below. The three sums of exponentials have a different origin; the first contribution from poles $\Omega_k$ in regions I and II arises when the integration contour is rotated, see Fig.~\ref{fig:rotated_contour}
\begin{equation}\label{eq:D1}
    D_1(t)=\sum\limits_{k=1}^{\kappa_2}R_k(1-n_{\text{F}}(\Omega_k))e^{i\Omega_k t},   
\end{equation}
where $R_k= i {\rm Res}_{\Omega_k}(\Gamma(\omega))$ are the residues of the poles. 
The second contribution, 
\begin{multline}\label{eq:D2_appendix}
    D_2(t)=\sum\limits_{k=\kappa_1+1}^{\kappa_2}R_k\frac{(1-n_{\text{F}}(\Omega_k))e^{-2i\pi x_k/h}}{1-e^{-2i\pi x_k/h}}e^{i\Omega_k t}-\\
-\sum\limits_{k=\kappa_2+1}^{\kappa_3}R_k\frac{\big(1-n_{\text{F}}(\Omega_k) \big)e^{2i\pi x_k/h}}{1-e^{2i\pi x_k/h}}e^{i\Omega_k t}, 
\end{multline}
involves poles $k=\kappa_1+1,...\kappa_2$ situated in region II, as well as poles $k=\kappa_2+1,...\kappa_3$ situated in region III. In the above equation, we introduced $x_k=\log\frac{\Omega_k}{\Gamma}$.  
  The third contribution stems from approximating the integral over the rotated contour by an infinite discrete sum. This sum, in turn, reads
\begin{equation}\label{eq:DeltaTilde}     
\tilde{\Delta}^p_+(t)=\frac{h}{2\pi}\sum\limits_{k=-\infty}^{\infty}(1-n_{\text{F}}(\omega_k))e^{i\omega t}\Gamma(\omega_k)\omega_k\,,\quad \omega_k=\Gamma e^{h k+ir}, 
\end{equation}
where $h$ is the discretization parameter and $\Lambda$ is a high-frequency cutoff.

Finally, the error $\delta_r(t)$ is bounded as:
\begin{equation}\label{A_bound}
|\delta_r(t)|<\frac{\mathcal{A}_r(t)e^{-2\pi r/h}}{1-e^{-2\pi r/h}} \,, 
\end{equation}
\begin{equation}
\label{eq:A_r}
\mathcal{A}_r(t) =2 \,{\max}_{v=r_{\text{max}}\pm r} \int\limits_{0}^{\infty}  |\Gamma(\omega e^{iv})(1-n_{\text{F}}(\omega e^{iv}))e^{i\omega e^{iv}t}|\frac{d\omega}{2\pi}
\end{equation}
\end{theorem}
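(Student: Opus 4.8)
The plan is to obtain the three exponential sums in Eq.~\eqref{eq:Dp_exponentials} as by-products of two successive contour manipulations---a rigid rotation and a logarithmic-grid discretization---taking the poles $\Omega_k$ and the decay estimate \eqref{eq:high_freqs_cutoff} as the only analytic input. The guiding observation is that the Matsubara poles of $1-n_{\text{F}}(\omega)$ lie on the imaginary axis at argument $\pi/2 > 2r_{\text{max}}$, hence outside the sector swept by every deformation below, so only the poles of $\Gamma$ contribute. For step (i), I write $f(\omega) = \Gamma(\omega)(1-n_{\text{F}}(\omega))e^{i\omega t}$ and close the wedge between the positive real axis and the ray $\arg\omega = r_{\text{max}}$ with an arc of radius $R\to\infty$; the arc vanishes because $e^{i\omega t}$ decays for $\Im\omega>0$, $t>0$, while \eqref{eq:high_freqs_cutoff} makes $|\Gamma|$ exponentially small there. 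The residue theorem then gives $\Delta^p_+ - \Delta^p_{+,\text{rotated}} = D_1$, collecting the poles with $0<\arg\Omega_k<r_{\text{max}}$ (regions I and II, $k=1,\dots,\kappa_2$) with weight $R_k = i\,\mathrm{Res}_{\Omega_k}\Gamma$, i.e.\ exactly Eq.~\eqref{eq:D1}.

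For steps (ii)--(iii) I substitute $\omega=\Gamma e^{x}$ and $y = x - ir_{\text{max}}$, so the rotated integral becomes $\int_{\mathbb R}\tilde g(y)\,dy$ with $\tilde g(y) = \tfrac1{2\pi}f(\Gamma e^{y+ir_{\text{max}}})\,\Gamma e^{y+ir_{\text{max}}}$, the uniform grid $y=hk$ reproducing the nodes $\omega_k = \Gamma e^{hk+ir_{\text{max}}}$ of $\tilde{\Delta}^p_+$ in Eq.~\eqref{eq:DeltaTilde}. I then represent the discrete sum by the cotangent (Abel--Plana) identity $h\sum_k \tilde g(hk) = -\tfrac i2 \oint \tilde g(y)\cot(\pi y/h)\,dy$ over a pair of lines $L_\pm$ hugging the real $y$-axis, and use $\cot(\pi y/h) = \mp i \mp 2i\,e^{\pm 2\pi i y/h}/(1-e^{\pm 2\pi i y/h})$ on $L_\pm$: the constant $\mp i$ pieces reassemble the continuous rotated-contour integral, leaving only the exponentially small $e^{\pm 2\pi i y/h}$ corrections to account for.

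Now I push $L_+$ up to $\Im y = r$ and $L_-$ down to $\Im y = -r$; because $0<r<r_{\text{max}}$, these lines correspond to rays $\arg\omega = r_{\text{max}}\pm r$ that remain inside the sector where $\Gamma$ is meromorphic and decays. The shift of $L_-$ crosses the poles with $r_{\text{max}}-r<\arg\Omega_k<r_{\text{max}}$ (region II, $k=\kappa_1+1,\dots,\kappa_2$) and that of $L_+$ the poles with $r_{\text{max}}<\arg\Omega_k<r_{\text{max}}+r$ (region III, $k=\kappa_2+1,\dots,\kappa_3$); their residues, weighted by $e^{\mp 2\pi i x_k/h}/(1-e^{\mp 2\pi i x_k/h})$ with $x_k=\log(\Omega_k/\Gamma)$, assemble the two sums of $D_2$ in Eq.~\eqref{eq:D2_appendix}, whereas region I lies below both contours and never contributes. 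What is left is $\delta_r(t) = -\tfrac i2 \int_{L_+\cup L_-}\tilde g(y)\,(\cot(\pi y/h)\pm i)\,dy$ evaluated on the shifted lines---the version of Theorem~5.2 of Ref.~\cite{McNamee} adapted to a meromorphic integrand. Since $|e^{\pm 2\pi i y/h}| = e^{-2\pi r/h}$ on $\Im y = \pm r$, one has $|\cot(\pi y/h)\pm i|\le 2e^{-2\pi r/h}/(1-e^{-2\pi r/h})$ uniformly along each line; pulling this prefactor out and mapping $y\mapsto\omega = \omega' e^{i(r_{\text{max}}\pm r)}$ turns the two line integrals of $|\tilde g|$ into the ray integrals of Eq.~\eqref{eq:A_r}, bounded by $\mathcal A_r(t)$. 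This produces exactly Eq.~\eqref{A_bound}, and summing $D_1$, $D_2$, $\tilde{\Delta}^p_+$ and $\delta_r$ reconstitutes Eq.~\eqref{eq:Dp_exponentials}.

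I expect the principal obstacle to be the pole bookkeeping through the two deformations: verifying that no $\Omega_k$ or Matsubara pole sits on a contour or on the grid, that the arc at infinity vanishes uniformly in $t\ge0$ despite the weak decay of $e^{i\omega t}$ near the real axis, and---most delicately---that the cotangent kernel's own poles at $y=hk$ stay cleanly separated from the integrand's poles, so that the shift of $L_\pm$ yields precisely the factors $e^{\mp 2\pi i x_k/h}/(1-e^{\mp 2\pi i x_k/h})$ with no spurious grid contribution. Making the adapted McNamee--Stenger--Whitney estimate uniform in $t$, so that the time-integrated ($L_1$) bound required by Theorem~\ref{th:finite_approximation} follows, is the technical crux.
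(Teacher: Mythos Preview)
Your proposal is correct and follows essentially the same route as the paper: rotate the contour by $r_{\text{max}}$ to produce $D_1$, pass to the logarithmic variable, represent the discrete sum via the cotangent kernel and split $\cot$ into its constant and exponentially small parts, then shift the two lines by $\pm ir$ to extract $D_2$ and bound the remainder $\delta_r$. The only cosmetic difference is that the paper changes variables before rotating, whereas you rotate in $\omega$ first and then substitute; the pole bookkeeping, the Abel--Plana/McNamee identity, and the final estimate are identical.
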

\begin{proof}
First, we make a change of variables $\omega=\Gamma e^x$ and define
\begin{equation}\label{eq:ftx}
f(t,x)=\Gamma(\omega(x))(1-n_{\text{F}}(\omega(x))e^{i\omega(x) t} \Gamma e^{x},
\end{equation}
such that
\begin{equation}
\Delta^p_+(t)=\int\limits_{-\infty}^{\infty} f(t,x) \frac{dx}{2\pi}.
\end{equation}
Then we rotate the integration contour in \eqref{g_integral}, 
$x\to x+ir_{\text{max}}$ into the complex plane, as illustrated in Fig.~\ref{fig:rotated_contour}. By the assumption of the theorem, the spectral density may have poles $\Omega_k
$, $k=1,...\kappa_2$ in regions I and II, which we have to take into account when rotating the contour: 
\begin{equation}\label{g_rotated}
\Delta^p_+(t)=\int\limits_{-\infty}^{\infty}f(t,x+i r_{\text{max}}) dx+\sum\limits_{k=1}^{\kappa_2}R_k(1-n_{\text{F}}(\Omega_k))e^{i\Omega_k t}.
\end{equation}

Next, we aim to approximate the integral over rotated contour by a discrete sum, \eqref{eq:DeltaTilde}. To that end, we first express the discrete sum via the contour integral with the cotangent kernel:
\begin{equation}
\frac{h}{2\pi}\sum\limits_{k-\infty}^{\infty}f(t,h k+ir_{\text{max}})=-\frac{i}{2}\oint\cot(\frac{\pi x}{h})f(t,x+ir_{\text{max}})\frac{dx}{2\pi}.
\end{equation}
It is convenient to rewrite $\frac{i}{2}\cot(\frac{\pi x}{h})=\frac{1}{2}+\frac{e^{2\pi i x/h}}{1-e^{2\pi i x/h}}$ on the upper segment of the contour, and $\frac{i}{2}\cot(\frac{\pi x}{h})=-\frac{1}{2}-\frac{e^{-2\pi i x/h}}{1-e^{-2\pi i x/h}}$ on the lower segment of the contour. The two $\frac{1}{2}$ contributions of a $\cot(\frac{\pi x}{h})$ combine to a continuous integral; thus, we obtain the following estimate for the difference between the discrete sum and the continuous integral:
\begin{equation}\label{discr-continuous}
\int\limits_{-\infty}^{\infty} f(t,x+ir_{\text{max}})\frac{dx}{2\pi}- \frac{h}{2\pi}\sum\limits_{k-\infty}^{\infty}f(t,h k+ir_{\text{max}})=\delta_0(t),
\end{equation}
\begin{multline}
\delta_0(t)=-\int\limits_{-\infty+i0}^{\infty+i0} \frac{e^{2\pi i x/h}}{1-e^{2\pi ix/h}}f(t,x+ir_{\text{max}})\frac{dx}{2\pi}-\\-\int\limits_{-\infty-i0}^{\infty-i0}\frac{e^{-2\pi i x/h}}{1-e^{-2\pi ix/h}}f(t,x+ir_{\text{max}})\frac{dx}{2\pi}.
\end{multline}
Now, for any positive $r<r_{\text{max}}$ we may deform the upper contour up, $x\to x+ir$, and the bottom contour down, $x\to x-ir$. The only obstructions to deform the contours are the poles of a spectral density $\Gamma(\omega)$, situated in region II for the lower contour, and in region III for the upper contour. Including the contribution from these poles, we get:
\begin{multline}\label{delta_0-delta_r}
\delta_0(t)-\delta_r(t)=\sum\limits_{k=\kappa_1+1}^{\kappa_2}R_k\frac{(1-n_{\text{F}}(\Omega_k))e^{-2\pi i x_k/h}}{1-e^{-2\pi i x_k/h}}e^{i\Omega_k t}-\\-\sum\limits_{k=\kappa_2+1}^{\kappa_3}R_k\frac{(1-n_{\text{F}}(\Omega_k))e^{2\pi i x_k/h}}{1-e^{2\pi i x_k/h}}e^{i\Omega_k t},
\end{multline}
where
\begin{multline}
\delta_r(t)=-\int\limits_{-\infty}^{\infty} \frac{e^{-2\pi (r-i x)/h}}{1-e^{-2\pi (r-ix)/h}}f(t,x+ir_{\text{max}}+ir)\frac{dx}{2\pi}-\\-\int\limits_{-\infty}^{\infty} \frac{e^{-2\pi (r+i x)/h}}{1-e^{-2\pi (r+ix)/h}}f(t,x+ir_{\text{max}}-ir)\frac{dx}{2\pi}.
\end{multline}
We can estimate the absolute value of the integrals in this equation by an integral of an absolute value, which implies the stated bound:
\begin{equation} \label{delta_r}
|\delta_r(t)|<\frac{\mathcal{A}_r(t)e^{-2\pi r/h}}{1-e^{-2\pi r/h}},
\end{equation}
\begin{equation}
\mathcal{A}_r(t) = 2\ \text{max}_{v=r_{\text{max}}\pm r} \int\limits_{-\infty}^{\infty}  |f(t,x+iv)|\frac{dx}{2\pi}
\end{equation}
Combining equations \eqref{g_rotated},\eqref{discr-continuous},\eqref{delta_0-delta_r},\eqref{delta_r} we finally obtain the main statement of the theorem \eqref{exp_th}.
\end{proof}

According to the discussion in Sec.~\ref{sec:Bound_on_the_observables}, our aim is to bound the $L_1$ norm on the discrepancy in our approximation. To derive such a bound using the above result, where the error is related to the function $\mathcal{A}_r(t)$ and the angle $r$, we now discuss the behavior of $\mathcal{A}_r(t)$, depending on the spectral density.
 The function $\mathcal{A}_r(t)$ is bounded for any $t$, and decays to zero in the limit $t\to\infty$. The large $t$ asymptotic of the $\mathcal{A}_r(t)$ is determined by the decaying exponent $e^{-\omega t \sin(v)}$ in $f(t,x+iv)$, see Eq.~(\ref{eq:ftx}). We note that to estimate the asymptotic behavior, we switch back to $\omega$ integration variable. Integrating by parts Eq.~\eqref{A_bound}, we obtain: 
 \begin{equation}
\mathcal{A}_r(t)=\text{max}_{v=r_{\text{max}}\pm ir}\frac{1}{2\pi t\sin(v)}\Gamma(0)+o(\frac{1}{t}) \,,\quad \text{for} \ t\to \infty. 
 \end{equation}
We see that for the spectral density non-vanishing at zero, the $L_1$ norm of $\mathcal{A}$ is bounded as 
\begin{equation}\label{eq:norm_A_bound}
\|\mathcal{A}_r(t)\|_{L_1}< C\log(T).
\end{equation}
If the spectral density does vanish at zero, the integral over $t$ converges, and one could get an even better estimate $\|\mathcal{A}\|_{L_1}<C$. In both cases,  we can safely use an upper-bound \eqref{eq:norm_A_bound}.

Now we are in a position to estimate the discretization step $h$, required to achieve the desired error $\varepsilon$ of the approximation of the kernel function (so far, with the untruncated infinite sum arising when the integral is discretized) as:
\begin{equation}
h=\frac{2\pi r}{ \log(C\log(T)\varepsilon^{-1})}\sim \frac{2\pi r}{ \log(\varepsilon^{-1})}.
\end{equation}

We have thus completed steps (i)-(iii) of the proof. We perform the last step (iv), estimate of the error arising when the infinite sum \eqref{eq:DeltaTilde} is truncated by proving the following Lemma:
\begin{lemma}
The infinite sum
\begin{gather}\label{discrete_sum}
\tilde{\Delta}^p_+(t)=\frac{h}{2\pi}\sum\limits_{k=-\infty}^{\infty}(1-n_{\text{F}}(\omega_k))e^{i\omega_k t}\Gamma(\omega_k)\omega_k\,,\\\omega_k=\Gamma e^{h k+ir_{\text{max}}},
\end{gather}
can be truncated to a finite number of terms $N_{\text{bath}}$, scaling as: $N_{\text{bath}}\sim h^{-1}\log( T\varepsilon^{-1})$.
\end{lemma}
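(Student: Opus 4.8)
The plan is to control the $L_1$ norm over $t\in[0,T]$ of the truncation error, i.e. the tail of the sum \eqref{discrete_sum} that is discarded when we keep only the terms with $-M<k<N$. Since each summand is an exponential $e^{i\omega_k t}$ weighted by a smooth prefactor, I would split the discarded part into a high-frequency tail ($k\ge N$) and a low-frequency tail ($k\le -M$), bound each separately, and choose $N$ and $M$ so that each contribution stays below $\varepsilon/2$. For every term the elementary estimate $\int_0^T|e^{i\omega_k t}|\,dt=\int_0^T e^{-\Im(\omega_k)t}\,dt\le\min\{T,\,1/\Im(\omega_k)\}$ converts a pointwise bound on the prefactor into an $L_1$ bound; note that $\Im(\omega_k)=\Gamma e^{hk}\sin(r_{\text{max}})>0$ since $0<r_{\text{max}}<\pi/4$, so all exponentials decay in $t$.

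For the high-frequency tail I would use the decay hypothesis \eqref{eq:high_freqs_cutoff}. Because $\Re(\omega_k)=\Gamma e^{hk}\cos(r_{\text{max}})$ grows exponentially in $k$, the bound $|\Gamma(\omega_k)|<\Gamma e^{-\nu\Gamma\cos(r_{\text{max}})e^{hk}}$ decays \emph{doubly} exponentially, while the remaining factors $|1-n_{\text{F}}(\omega_k)|$ and the Jacobian $|\omega_k|=\Gamma e^{hk}$ grow only polynomially or singly exponentially and are easily dominated. The resulting sum over $k\ge N$ is then controlled by its first term, so requiring $\Gamma e^{-\nu\Gamma\cos(r_{\text{max}})e^{hN}}\sim\varepsilon$ yields the upper cutoff $N\sim h^{-1}\log\log(\Gamma/\varepsilon)$.

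For the low-frequency tail I would use that $\omega_k\to 0$ as $k\to-\infty$, so $\Gamma(\omega_k)\to\Gamma(0)$ and $1-n_{\text{F}}(\omega_k)$ tends to a bounded constant; crucially, the rotated contour $\omega_k=\Gamma e^{hk+ir_{\text{max}}}$ stays away from the imaginary axis and hence from the Matsubara poles, so the thermal factor is uniformly bounded there for any $\beta$. The only genuine $k$-dependence left is the Jacobian $|\omega_k|=\Gamma e^{hk}$. Here the decay in $t$ is negligible over $[0,T]$ (indeed $\Im(\omega_k)T\ll1$ in this regime), so the $L_1$ estimate contributes the factor $T$ through $\int_0^T|e^{i\omega_k t}|\,dt\le T$. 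Summing the geometric series $\sum_{k\le-M}e^{hk}\sim h^{-1}e^{-hM}$ and absorbing the $h/2\pi$ prefactor leaves a bound $\sim T\,\Gamma(0)\,\Gamma\,e^{-hM}$; setting this $\sim\varepsilon$ gives the lower cutoff $M\sim h^{-1}\log(T/\varepsilon)$. This is where the total evolution time $T$ enters the count, and it is the step I expect to be the main obstacle, since it requires verifying that the slowly-decaying small-frequency exponentials are the dominant source of error and that replacing $\Gamma(\omega_k)$ and $n_{\text{F}}(\omega_k)$ by their near-origin values is legitimate uniformly in $\beta$.

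Finally I would add the two cutoffs, $N_{\text{bath}}=N+M\sim h^{-1}\bigl[\log\log(\Gamma/\varepsilon)+\log(T/\varepsilon)\bigr]$. Since the iterated logarithm is subleading compared with $\log(T/\varepsilon)$, the low-frequency cutoff dominates and we obtain $N_{\text{bath}}\sim h^{-1}\log(T\varepsilon^{-1})$, as claimed. Combined with the earlier choice $h\sim 2\pi r/\log(\varepsilon^{-1})$, this reproduces the advertised scaling $N_{\text{bath}}\sim (2\pi r)^{-1}\log(\varepsilon^{-1})\log(T\varepsilon^{-1})$.
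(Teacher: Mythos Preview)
Your proposal is correct and follows essentially the same route as the paper's proof: split the discarded terms into high- and low-frequency tails, use the exponential decay hypothesis \eqref{eq:high_freqs_cutoff} to obtain $N\sim h^{-1}\log\log(\Gamma/\varepsilon)$, and use the Jacobian $|\omega_k|=\Gamma e^{hk}$ together with the $L_1$ bound $\int_0^T|e^{i\omega_kt}|\,dt\le T$ to obtain $M\sim h^{-1}\log(T/\varepsilon)$, with the iterated logarithm subleading. The only cosmetic difference is that the paper majorates the high-frequency sum by an integral before estimating, whereas you control it directly by its first term; both yield the same scaling.
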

\begin{proof}
We would like to truncate the sum such that it runs from $k=-M+1$ to $k=N-1$. Our goal is therefore to estimate the two sums that we are neglecting:
\begin{equation}
    \frac{h}{2\pi}\big|\sum\limits_{k=-\infty}^{-M} f(t,hk)\big|\leq\frac{h}{2\pi}\sum\limits_{k=-\infty}^{-M} |f(t,hk)|,
\end{equation}
\begin{equation}
    \frac{h}{2\pi}\big|\sum\limits_{k=N}^{\infty} f(t,hk)\big|\leq \frac{h}{2\pi}\sum\limits_{k=N}^{\infty} |f(t,hk)|.
\end{equation}
Let us first estimate the second sum:
\begin{equation}
\frac{h}{2\pi}\sum\limits_{k=N}^{\infty}e^{- \Re(\omega_k) t }|\Gamma(\omega_k)\big(1-n_{\text{F}}(\omega_k)\big)\omega_k|.
\end{equation} 
At large frequencies, we could estimate $|1-n_{\text{F}}(\omega_k)|<1$. For the rest of the summand, we may use the high-frequency decay of the spectral density \eqref{eq:high_freqs_cutoff}, and majorate the monotonically decreasing sum by the integral:
\begin{equation}
\frac{h}{2\pi}\sum\limits_{k=N}^{\infty}e^{-\Re(\omega_k) (t+\nu)}|\omega_k|<\int\limits_{x=hN}^{\infty}e^{-\Re(\omega(x)) (t+\nu)}|\omega(x)|\frac{dx}{2\pi},
\end{equation}
where $\omega(x)=\Gamma e^{x+ir_{\text{max}}}$. Direct computation of the integral provides:
\begin{equation}
\frac{h}{2\pi}\sum\limits_{k=N}^{\infty} |f(t,hk)| \leq \frac{1}{2\pi}\frac{\Gamma}{t+\nu}e^{-\Re(\omega_{\text{max}})\nu}.
\end{equation}
The condition that the $L_1$ norm of this function is smaller than $\varepsilon$ leads to the following estimate:
\begin{equation}
\Re(\omega_{\text{max}})=\nu^{-1}\log\left(\frac{\Gamma\varepsilon^{-1}}{2\pi}\log\frac{T+\nu}{\nu}\right)\simeq \nu^{-1}\log\big(\Gamma\varepsilon^{-1}\big).
\end{equation}
This provides for the cutoff:
\begin{equation}
\label{eq:M_cutoff}
N=\frac{1}{h}\log\Big(\frac{\Re(\omega_{\text{max}})}{\Gamma \cos(r_{\text{max}})}\Big)\simeq\frac{1}{h}\log\big(\log(\Gamma\varepsilon^{-1})\big).
\end{equation}
For the lower cutoff, one can neglect the frequency dependence of the spectral density and Fermi distribution:
\begin{multline}
    \frac{h}{2\pi}\sum\limits_{-\infty}^{-M}|e^{i \omega_k t} \Gamma(\omega_k\big(1-n_{\text{F}}(\omega_k)\big)\omega_k|\simeq\\ \simeq \sum\limits_{k=M}^{\infty}\frac{h}{4\pi}\Gamma\Gamma^\prime\cos(r_{\text{max}}) e^{-h k},
\end{multline}
where we introduced $\Gamma^\prime=\Gamma(0)$. Eventually, we arrive at a similar estimate for $M$:
\begin{equation} \label{x_min_exp}
M h =\log\frac{\Gamma\Gamma^\prime\cos(r_{\text{max}}) T}{4 \pi \varepsilon}.
\end{equation}
The total number of terms therefore scales as
\begin{equation}\label{eq:N+M}
N_{\text{bath}}=N+M-2\sim \frac{1}{2\pi r} \log(\varepsilon^{-1}) \log(T\varepsilon^{-1}).
\end{equation}
This can be improved for spectral densities that vanish at zero frequencies. If $\Gamma(\omega)\simeq \Gamma^{\prime{1-s}}\omega^{s}+o(\omega)$, $Mh$ scales as:
\begin{equation}
M h \simeq\frac{1}{s+1}\log\left(\frac{\Gamma^\prime\Gamma \cos(r_{\text{max}}) T}{4\pi \varepsilon}\frac{\Gamma^s}{\Gamma^{\prime s}}\right)\sim \frac{1}{s+1}\log\frac{T}{\varepsilon}.
\end{equation}
If $\Gamma(\omega)$ vanishes at zero faster than any power of $\omega$, the $Mh$ could be even finite, however, the most generic scaling is provided in Eq.~\eqref{eq:N+M}.
\end{proof}
This finishes the proof of theorem \eqref{th:finite_approximation}, the case of negative frequencies and the hole kernel function is analogical.

\bibliography{MyBib}

\begin{thebibliography}{51}%
\makeatletter
\providecommand \@ifxundefined [1]{%
 \@ifx{#1\undefined}
}%
\providecommand \@ifnum [1]{%
 \ifnum #1\expandafter \@firstoftwo
 \else \expandafter \@secondoftwo
 \fi
}%
\providecommand \@ifx [1]{%
 \ifx #1\expandafter \@firstoftwo
 \else \expandafter \@secondoftwo
 \fi
}%
\providecommand \natexlab [1]{#1}%
\providecommand \enquote  [1]{``#1''}%
\providecommand \bibnamefont  [1]{#1}%
\providecommand \bibfnamefont [1]{#1}%
\providecommand \citenamefont [1]{#1}%
\providecommand \href@noop [0]{\@secondoftwo}%
\providecommand \href [0]{\begingroup \@sanitize@url \@href}%
\providecommand \@href[1]{\@@startlink{#1}\@@href}%
\providecommand \@@href[1]{\endgroup#1\@@endlink}%
\providecommand \@sanitize@url [0]{\catcode `\\12\catcode `\$12\catcode
  `\&12\catcode `\#12\catcode `\^12\catcode `\_12\catcode `\%12\relax}%
\providecommand \@@startlink[1]{}%
\providecommand \@@endlink[0]{}%
\providecommand \url  [0]{\begingroup\@sanitize@url \@url }%
\providecommand \@url [1]{\endgroup\@href {#1}{\urlprefix }}%
\providecommand \urlprefix  [0]{URL }%
\providecommand \Eprint [0]{\href }%
\providecommand \doibase [0]{https://doi.org/}%
\providecommand \selectlanguage [0]{\@gobble}%
\providecommand \bibinfo  [0]{\@secondoftwo}%
\providecommand \bibfield  [0]{\@secondoftwo}%
\providecommand \translation [1]{[#1]}%
\providecommand \BibitemOpen [0]{}%
\providecommand \bibitemStop [0]{}%
\providecommand \bibitemNoStop [0]{.\EOS\space}%
\providecommand \EOS [0]{\spacefactor3000\relax}%
\providecommand \BibitemShut  [1]{\csname bibitem#1\endcsname}%
\let\auto@bib@innerbib\@empty
\bibitem [{\citenamefont {Bulla}\ \emph {et~al.}(2008)\citenamefont {Bulla},
  \citenamefont {Costi},\ and\ \citenamefont {Pruschke}}]{BullaRMP2008}%
  \BibitemOpen
  \bibfield  {author} {\bibinfo {author} {\bibfnamefont {R.}~\bibnamefont
  {Bulla}}, \bibinfo {author} {\bibfnamefont {T.~A.}\ \bibnamefont {Costi}},\
  and\ \bibinfo {author} {\bibfnamefont {T.}~\bibnamefont {Pruschke}},\
  }\bibfield  {title} {\bibinfo {title} {Numerical renormalization group method
  for quantum impurity systems},\ }\href
  {https://doi.org/10.1103/RevModPhys.80.395} {\bibfield  {journal} {\bibinfo
  {journal} {Rev. Mod. Phys.}\ }\textbf {\bibinfo {volume} {80}},\ \bibinfo
  {pages} {395} (\bibinfo {year} {2008})}\BibitemShut {NoStop}%
\bibitem [{\citenamefont {Knap}\ \emph {et~al.}(2012)\citenamefont {Knap},
  \citenamefont {Shashi}, \citenamefont {Nishida}, \citenamefont {Imambekov},
  \citenamefont {Abanin},\ and\ \citenamefont {Demler}}]{Knap2012}%
  \BibitemOpen
  \bibfield  {author} {\bibinfo {author} {\bibfnamefont {M.}~\bibnamefont
  {Knap}}, \bibinfo {author} {\bibfnamefont {A.}~\bibnamefont {Shashi}},
  \bibinfo {author} {\bibfnamefont {Y.}~\bibnamefont {Nishida}}, \bibinfo
  {author} {\bibfnamefont {A.}~\bibnamefont {Imambekov}}, \bibinfo {author}
  {\bibfnamefont {D.~A.}\ \bibnamefont {Abanin}},\ and\ \bibinfo {author}
  {\bibfnamefont {E.}~\bibnamefont {Demler}},\ }\bibfield  {title} {\bibinfo
  {title} {Time-dependent impurity in ultracold fermions: Orthogonality
  catastrophe and beyond},\ }\href {https://doi.org/10.1103/PhysRevX.2.041020}
  {\bibfield  {journal} {\bibinfo  {journal} {Phys. Rev. X}\ }\textbf {\bibinfo
  {volume} {2}},\ \bibinfo {pages} {041020} (\bibinfo {year}
  {2012})}\BibitemShut {NoStop}%
\bibitem [{\citenamefont {Cetina}\ \emph {et~al.}(2016)\citenamefont {Cetina},
  \citenamefont {Jag}, \citenamefont {Lous}, \citenamefont {Fritsche},
  \citenamefont {Walraven}, \citenamefont {Grimm}, \citenamefont {Levinsen},
  \citenamefont {Parish}, \citenamefont {Schmidt}, \citenamefont {Knap},\ and\
  \citenamefont {Demler}}]{Grimm2016}%
  \BibitemOpen
  \bibfield  {author} {\bibinfo {author} {\bibfnamefont {M.}~\bibnamefont
  {Cetina}}, \bibinfo {author} {\bibfnamefont {M.}~\bibnamefont {Jag}},
  \bibinfo {author} {\bibfnamefont {R.~S.}\ \bibnamefont {Lous}}, \bibinfo
  {author} {\bibfnamefont {I.}~\bibnamefont {Fritsche}}, \bibinfo {author}
  {\bibfnamefont {J.~T.~M.}\ \bibnamefont {Walraven}}, \bibinfo {author}
  {\bibfnamefont {R.}~\bibnamefont {Grimm}}, \bibinfo {author} {\bibfnamefont
  {J.}~\bibnamefont {Levinsen}}, \bibinfo {author} {\bibfnamefont {M.~M.}\
  \bibnamefont {Parish}}, \bibinfo {author} {\bibfnamefont {R.}~\bibnamefont
  {Schmidt}}, \bibinfo {author} {\bibfnamefont {M.}~\bibnamefont {Knap}},\ and\
  \bibinfo {author} {\bibfnamefont {E.}~\bibnamefont {Demler}},\ }\bibfield
  {title} {\bibinfo {title} {Ultrafast many-body interferometry of impurities
  coupled to a fermi sea},\ }\href {https://doi.org/10.1126/science.aaf5134}
  {\bibfield  {journal} {\bibinfo  {journal} {Science}\ }\textbf {\bibinfo
  {volume} {354}},\ \bibinfo {pages} {96} (\bibinfo {year} {2016})},\ \Eprint
  {https://arxiv.org/abs/https://www.science.org/doi/pdf/10.1126/science.aaf5134}
  {https://www.science.org/doi/pdf/10.1126/science.aaf5134} \BibitemShut
  {NoStop}%
\bibitem [{\citenamefont {Pustilnik}\ and\ \citenamefont
  {Glazman}(2004)}]{Pustilnik_2004}%
  \BibitemOpen
  \bibfield  {author} {\bibinfo {author} {\bibfnamefont {M.}~\bibnamefont
  {Pustilnik}}\ and\ \bibinfo {author} {\bibfnamefont {L.}~\bibnamefont
  {Glazman}},\ }\bibfield  {title} {\bibinfo {title} {Kondo effect in quantum
  dots},\ }\href {https://doi.org/10.1088/0953-8984/16/16/R01} {\bibfield
  {journal} {\bibinfo  {journal} {Journal of Physics: Condensed Matter}\
  }\textbf {\bibinfo {volume} {16}},\ \bibinfo {pages} {R513} (\bibinfo {year}
  {2004})}\BibitemShut {NoStop}%
\bibitem [{\citenamefont {Georges}\ \emph {et~al.}(1996)\citenamefont
  {Georges}, \citenamefont {Kotliar}, \citenamefont {Krauth},\ and\
  \citenamefont {Rozenberg}}]{GeorgesRMP}%
  \BibitemOpen
  \bibfield  {author} {\bibinfo {author} {\bibfnamefont {A.}~\bibnamefont
  {Georges}}, \bibinfo {author} {\bibfnamefont {G.}~\bibnamefont {Kotliar}},
  \bibinfo {author} {\bibfnamefont {W.}~\bibnamefont {Krauth}},\ and\ \bibinfo
  {author} {\bibfnamefont {M.~J.}\ \bibnamefont {Rozenberg}},\ }\bibfield
  {title} {\bibinfo {title} {Dynamical mean-field theory of strongly correlated
  fermion systems and the limit of infinite dimensions},\ }\href
  {https://doi.org/10.1103/RevModPhys.68.13} {\bibfield  {journal} {\bibinfo
  {journal} {Rev. Mod. Phys.}\ }\textbf {\bibinfo {volume} {68}},\ \bibinfo
  {pages} {13} (\bibinfo {year} {1996})}\BibitemShut {NoStop}%
\bibitem [{\citenamefont {Aoki}\ \emph {et~al.}(2014)\citenamefont {Aoki},
  \citenamefont {Tsuji}, \citenamefont {Eckstein}, \citenamefont {Kollar},
  \citenamefont {Oka},\ and\ \citenamefont {Werner}}]{aoki14neqdmftreview}%
  \BibitemOpen
  \bibfield  {author} {\bibinfo {author} {\bibfnamefont {H.}~\bibnamefont
  {Aoki}}, \bibinfo {author} {\bibfnamefont {N.}~\bibnamefont {Tsuji}},
  \bibinfo {author} {\bibfnamefont {M.}~\bibnamefont {Eckstein}}, \bibinfo
  {author} {\bibfnamefont {M.}~\bibnamefont {Kollar}}, \bibinfo {author}
  {\bibfnamefont {T.}~\bibnamefont {Oka}},\ and\ \bibinfo {author}
  {\bibfnamefont {P.}~\bibnamefont {Werner}},\ }\bibfield  {title} {\bibinfo
  {title} {Nonequilibrium dynamical mean-field theory and its applications},\
  }\href {https://doi.org/10.1103/RevModPhys.86.779} {\bibfield  {journal}
  {\bibinfo  {journal} {Rev. Mod. Phys.}\ }\textbf {\bibinfo {volume} {86}},\
  \bibinfo {pages} {779} (\bibinfo {year} {2014})}\BibitemShut {NoStop}%
\bibitem [{\citenamefont {Kretchmer}\ and\ \citenamefont
  {Chan}(2018)}]{DMEmbeddingChan2018}%
  \BibitemOpen
  \bibfield  {author} {\bibinfo {author} {\bibfnamefont {J.~S.}\ \bibnamefont
  {Kretchmer}}\ and\ \bibinfo {author} {\bibfnamefont {G.~K.-L.}\ \bibnamefont
  {Chan}},\ }\bibfield  {title} {\bibinfo {title} {{A real-time extension of
  density matrix embedding theory for non-equilibrium electron dynamics}},\
  }\href {https://doi.org/10.1063/1.5012766} {\bibfield  {journal} {\bibinfo
  {journal} {The Journal of Chemical Physics}\ }\textbf {\bibinfo {volume}
  {148}},\ \bibinfo {pages} {054108} (\bibinfo {year} {2018})},\ \Eprint
  {https://arxiv.org/abs/https://pubs.aip.org/aip/jcp/article-pdf/doi/10.1063/1.5012766/15538332/054108\_1\_online.pdf}
  {https://pubs.aip.org/aip/jcp/article-pdf/doi/10.1063/1.5012766/15538332/054108\_1\_online.pdf}
  \BibitemShut {NoStop}%
\bibitem [{\citenamefont {M\"uhlbacher}\ and\ \citenamefont
  {Rabani}(2008)}]{Muhlbacher08Realtime}%
  \BibitemOpen
  \bibfield  {author} {\bibinfo {author} {\bibfnamefont {L.}~\bibnamefont
  {M\"uhlbacher}}\ and\ \bibinfo {author} {\bibfnamefont {E.}~\bibnamefont
  {Rabani}},\ }\bibfield  {title} {\bibinfo {title} {Real-time path integral
  approach to nonequilibrium many-body quantum systems},\ }\href
  {https://doi.org/10.1103/PhysRevLett.100.176403} {\bibfield  {journal}
  {\bibinfo  {journal} {Phys. Rev. Lett.}\ }\textbf {\bibinfo {volume} {100}},\
  \bibinfo {pages} {176403} (\bibinfo {year} {2008})}\BibitemShut {NoStop}%
\bibitem [{\citenamefont {Schir\'o}\ and\ \citenamefont
  {Fabrizio}(2009)}]{Schiro09realtime}%
  \BibitemOpen
  \bibfield  {author} {\bibinfo {author} {\bibfnamefont {M.}~\bibnamefont
  {Schir\'o}}\ and\ \bibinfo {author} {\bibfnamefont {M.}~\bibnamefont
  {Fabrizio}},\ }\bibfield  {title} {\bibinfo {title} {Real-time diagrammatic
  monte carlo for nonequilibrium quantum transport},\ }\href
  {https://doi.org/10.1103/PhysRevB.79.153302} {\bibfield  {journal} {\bibinfo
  {journal} {Phys. Rev. B}\ }\textbf {\bibinfo {volume} {79}},\ \bibinfo
  {pages} {153302} (\bibinfo {year} {2009})}\BibitemShut {NoStop}%
\bibitem [{\citenamefont {Werner}\ \emph {et~al.}(2009)\citenamefont {Werner},
  \citenamefont {Oka},\ and\ \citenamefont {Millis}}]{Werner09Diagrammatic}%
  \BibitemOpen
  \bibfield  {author} {\bibinfo {author} {\bibfnamefont {P.}~\bibnamefont
  {Werner}}, \bibinfo {author} {\bibfnamefont {T.}~\bibnamefont {Oka}},\ and\
  \bibinfo {author} {\bibfnamefont {A.~J.}\ \bibnamefont {Millis}},\ }\bibfield
   {title} {\bibinfo {title} {Diagrammatic monte carlo simulation of
  nonequilibrium systems},\ }\href {https://doi.org/10.1103/PhysRevB.79.035320}
  {\bibfield  {journal} {\bibinfo  {journal} {Phys. Rev. B}\ }\textbf {\bibinfo
  {volume} {79}},\ \bibinfo {pages} {035320} (\bibinfo {year}
  {2009})}\BibitemShut {NoStop}%
\bibitem [{\citenamefont {Gull}\ \emph
  {et~al.}(2011{\natexlab{a}})\citenamefont {Gull}, \citenamefont {Reichman},\
  and\ \citenamefont {Millis}}]{gull11NumericallyExact}%
  \BibitemOpen
  \bibfield  {author} {\bibinfo {author} {\bibfnamefont {E.}~\bibnamefont
  {Gull}}, \bibinfo {author} {\bibfnamefont {D.~R.}\ \bibnamefont {Reichman}},\
  and\ \bibinfo {author} {\bibfnamefont {A.~J.}\ \bibnamefont {Millis}},\
  }\bibfield  {title} {\bibinfo {title} {Numerically exact long-time behavior
  of nonequilibrium quantum impurity models},\ }\href
  {https://doi.org/10.1103/PhysRevB.84.085134} {\bibfield  {journal} {\bibinfo
  {journal} {Phys. Rev. B}\ }\textbf {\bibinfo {volume} {84}},\ \bibinfo
  {pages} {085134} (\bibinfo {year} {2011}{\natexlab{a}})}\BibitemShut
  {NoStop}%
\bibitem [{\citenamefont {Gull}\ \emph
  {et~al.}(2011{\natexlab{b}})\citenamefont {Gull}, \citenamefont {Millis},
  \citenamefont {Lichtenstein}, \citenamefont {Rubtsov}, \citenamefont
  {Troyer},\ and\ \citenamefont {Werner}}]{MonteCarloReview}%
  \BibitemOpen
  \bibfield  {author} {\bibinfo {author} {\bibfnamefont {E.}~\bibnamefont
  {Gull}}, \bibinfo {author} {\bibfnamefont {A.~J.}\ \bibnamefont {Millis}},
  \bibinfo {author} {\bibfnamefont {A.~I.}\ \bibnamefont {Lichtenstein}},
  \bibinfo {author} {\bibfnamefont {A.~N.}\ \bibnamefont {Rubtsov}}, \bibinfo
  {author} {\bibfnamefont {M.}~\bibnamefont {Troyer}},\ and\ \bibinfo {author}
  {\bibfnamefont {P.}~\bibnamefont {Werner}},\ }\bibfield  {title} {\bibinfo
  {title} {Continuous-time monte carlo methods for quantum impurity models},\
  }\href {https://doi.org/10.1103/RevModPhys.83.349} {\bibfield  {journal}
  {\bibinfo  {journal} {Rev. Mod. Phys.}\ }\textbf {\bibinfo {volume} {83}},\
  \bibinfo {pages} {349} (\bibinfo {year} {2011}{\natexlab{b}})}\BibitemShut
  {NoStop}%
\bibitem [{\citenamefont {Cohen}\ and\ \citenamefont
  {Rabani}(2011)}]{cohen11memory}%
  \BibitemOpen
  \bibfield  {author} {\bibinfo {author} {\bibfnamefont {G.}~\bibnamefont
  {Cohen}}\ and\ \bibinfo {author} {\bibfnamefont {E.}~\bibnamefont {Rabani}},\
  }\bibfield  {title} {\bibinfo {title} {Memory effects in nonequilibrium
  quantum impurity models},\ }\href
  {https://doi.org/10.1103/PhysRevB.84.075150} {\bibfield  {journal} {\bibinfo
  {journal} {Phys. Rev. B}\ }\textbf {\bibinfo {volume} {84}},\ \bibinfo
  {pages} {075150} (\bibinfo {year} {2011})}\BibitemShut {NoStop}%
\bibitem [{\citenamefont {Cohen}\ \emph {et~al.}(2013)\citenamefont {Cohen},
  \citenamefont {Gull}, \citenamefont {Reichman}, \citenamefont {Millis},\ and\
  \citenamefont {Rabani}}]{cohen13neqkondo}%
  \BibitemOpen
  \bibfield  {author} {\bibinfo {author} {\bibfnamefont {G.}~\bibnamefont
  {Cohen}}, \bibinfo {author} {\bibfnamefont {E.}~\bibnamefont {Gull}},
  \bibinfo {author} {\bibfnamefont {D.~R.}\ \bibnamefont {Reichman}}, \bibinfo
  {author} {\bibfnamefont {A.~J.}\ \bibnamefont {Millis}},\ and\ \bibinfo
  {author} {\bibfnamefont {E.}~\bibnamefont {Rabani}},\ }\bibfield  {title}
  {\bibinfo {title} {Numerically exact long-time magnetization dynamics at the
  nonequilibrium kondo crossover of the anderson impurity model},\ }\href
  {https://doi.org/10.1103/PhysRevB.87.195108} {\bibfield  {journal} {\bibinfo
  {journal} {Phys. Rev. B}\ }\textbf {\bibinfo {volume} {87}},\ \bibinfo
  {pages} {195108} (\bibinfo {year} {2013})}\BibitemShut {NoStop}%
\bibitem [{\citenamefont {Cohen}\ \emph {et~al.}(2015)\citenamefont {Cohen},
  \citenamefont {Gull}, \citenamefont {Reichman},\ and\ \citenamefont
  {Millis}}]{cohen15taming}%
  \BibitemOpen
  \bibfield  {author} {\bibinfo {author} {\bibfnamefont {G.}~\bibnamefont
  {Cohen}}, \bibinfo {author} {\bibfnamefont {E.}~\bibnamefont {Gull}},
  \bibinfo {author} {\bibfnamefont {D.~R.}\ \bibnamefont {Reichman}},\ and\
  \bibinfo {author} {\bibfnamefont {A.~J.}\ \bibnamefont {Millis}},\ }\bibfield
   {title} {\bibinfo {title} {Taming the dynamical sign problem in real-time
  evolution of quantum many-body problems},\ }\href
  {https://doi.org/10.1103/PhysRevLett.115.266802} {\bibfield  {journal}
  {\bibinfo  {journal} {Phys. Rev. Lett.}\ }\textbf {\bibinfo {volume} {115}},\
  \bibinfo {pages} {266802} (\bibinfo {year} {2015})}\BibitemShut {NoStop}%
\bibitem [{\citenamefont {Prior}\ \emph {et~al.}(2010)\citenamefont {Prior},
  \citenamefont {Chin}, \citenamefont {Huelga},\ and\ \citenamefont
  {Plenio}}]{prior10efficient}%
  \BibitemOpen
  \bibfield  {author} {\bibinfo {author} {\bibfnamefont {J.}~\bibnamefont
  {Prior}}, \bibinfo {author} {\bibfnamefont {A.~W.}\ \bibnamefont {Chin}},
  \bibinfo {author} {\bibfnamefont {S.~F.}\ \bibnamefont {Huelga}},\ and\
  \bibinfo {author} {\bibfnamefont {M.~B.}\ \bibnamefont {Plenio}},\ }\bibfield
   {title} {\bibinfo {title} {Efficient simulation of strong system-environment
  interactions},\ }\href {https://doi.org/10.1103/PhysRevLett.105.050404}
  {\bibfield  {journal} {\bibinfo  {journal} {Phys. Rev. Lett.}\ }\textbf
  {\bibinfo {volume} {105}},\ \bibinfo {pages} {050404} (\bibinfo {year}
  {2010})}\BibitemShut {NoStop}%
\bibitem [{\citenamefont {Wolf}\ \emph {et~al.}(2014)\citenamefont {Wolf},
  \citenamefont {McCulloch},\ and\ \citenamefont {Schollw\"ock}}]{WolfPRB14}%
  \BibitemOpen
  \bibfield  {author} {\bibinfo {author} {\bibfnamefont {F.~A.}\ \bibnamefont
  {Wolf}}, \bibinfo {author} {\bibfnamefont {I.~P.}\ \bibnamefont
  {McCulloch}},\ and\ \bibinfo {author} {\bibfnamefont {U.}~\bibnamefont
  {Schollw\"ock}},\ }\bibfield  {title} {\bibinfo {title} {Solving
  nonequilibrium dynamical mean-field theory using matrix product states},\
  }\href {https://doi.org/10.1103/PhysRevB.90.235131} {\bibfield  {journal}
  {\bibinfo  {journal} {Phys. Rev. B}\ }\textbf {\bibinfo {volume} {90}},\
  \bibinfo {pages} {235131} (\bibinfo {year} {2014})}\BibitemShut {NoStop}%
\bibitem [{\citenamefont {N\"u\ss{}eler}\ \emph {et~al.}(2020)\citenamefont
  {N\"u\ss{}eler}, \citenamefont {Dhand}, \citenamefont {Huelga},\ and\
  \citenamefont {Plenio}}]{Nusseler20Efficient}%
  \BibitemOpen
  \bibfield  {author} {\bibinfo {author} {\bibfnamefont {A.}~\bibnamefont
  {N\"u\ss{}eler}}, \bibinfo {author} {\bibfnamefont {I.}~\bibnamefont
  {Dhand}}, \bibinfo {author} {\bibfnamefont {S.~F.}\ \bibnamefont {Huelga}},\
  and\ \bibinfo {author} {\bibfnamefont {M.~B.}\ \bibnamefont {Plenio}},\
  }\bibfield  {title} {\bibinfo {title} {Efficient simulation of open quantum
  systems coupled to a fermionic bath},\ }\href
  {https://doi.org/10.1103/PhysRevB.101.155134} {\bibfield  {journal} {\bibinfo
   {journal} {Phys. Rev. B}\ }\textbf {\bibinfo {volume} {101}},\ \bibinfo
  {pages} {155134} (\bibinfo {year} {2020})}\BibitemShut {NoStop}%
\bibitem [{\citenamefont {Dorda}\ \emph {et~al.}(2014)\citenamefont {Dorda},
  \citenamefont {Nuss}, \citenamefont {von~der Linden},\ and\ \citenamefont
  {Arrigoni}}]{dorda14auxiliary}%
  \BibitemOpen
  \bibfield  {author} {\bibinfo {author} {\bibfnamefont {A.}~\bibnamefont
  {Dorda}}, \bibinfo {author} {\bibfnamefont {M.}~\bibnamefont {Nuss}},
  \bibinfo {author} {\bibfnamefont {W.}~\bibnamefont {von~der Linden}},\ and\
  \bibinfo {author} {\bibfnamefont {E.}~\bibnamefont {Arrigoni}},\ }\bibfield
  {title} {\bibinfo {title} {Auxiliary master equation approach to
  nonequilibrium correlated impurities},\ }\href
  {https://doi.org/10.1103/PhysRevB.89.165105} {\bibfield  {journal} {\bibinfo
  {journal} {Phys. Rev. B}\ }\textbf {\bibinfo {volume} {89}},\ \bibinfo
  {pages} {165105} (\bibinfo {year} {2014})}\BibitemShut {NoStop}%
\bibitem [{\citenamefont {Dorda}\ \emph {et~al.}(2017)\citenamefont {Dorda},
  \citenamefont {Sorantin}, \citenamefont {von~der Linden},\ and\ \citenamefont
  {Arrigoni}}]{Dorda2017}%
  \BibitemOpen
  \bibfield  {author} {\bibinfo {author} {\bibfnamefont {A.}~\bibnamefont
  {Dorda}}, \bibinfo {author} {\bibfnamefont {M.}~\bibnamefont {Sorantin}},
  \bibinfo {author} {\bibfnamefont {W.}~\bibnamefont {von~der Linden}},\ and\
  \bibinfo {author} {\bibfnamefont {E.}~\bibnamefont {Arrigoni}},\ }\bibfield
  {title} {\bibinfo {title} {Optimized auxiliary representation of
  non-markovian impurity problems by a lindblad equation},\ }\href
  {https://doi.org/10.1088/1367-2630/aa6ccc} {\bibfield  {journal} {\bibinfo
  {journal} {New Journal of Physics}\ }\textbf {\bibinfo {volume} {19}},\
  \bibinfo {pages} {063005} (\bibinfo {year} {2017})}\BibitemShut {NoStop}%
\bibitem [{\citenamefont {Lotem}\ \emph {et~al.}(2020)\citenamefont {Lotem},
  \citenamefont {Weichselbaum}, \citenamefont {von Delft},\ and\ \citenamefont
  {Goldstein}}]{Lotem20renormalized}%
  \BibitemOpen
  \bibfield  {author} {\bibinfo {author} {\bibfnamefont {M.}~\bibnamefont
  {Lotem}}, \bibinfo {author} {\bibfnamefont {A.}~\bibnamefont {Weichselbaum}},
  \bibinfo {author} {\bibfnamefont {J.}~\bibnamefont {von Delft}},\ and\
  \bibinfo {author} {\bibfnamefont {M.}~\bibnamefont {Goldstein}},\ }\bibfield
  {title} {\bibinfo {title} {Renormalized lindblad driving: A numerically exact
  nonequilibrium quantum impurity solver},\ }\href
  {https://doi.org/10.1103/PhysRevResearch.2.043052} {\bibfield  {journal}
  {\bibinfo  {journal} {Phys. Rev. Research}\ }\textbf {\bibinfo {volume}
  {2}},\ \bibinfo {pages} {043052} (\bibinfo {year} {2020})}\BibitemShut
  {NoStop}%
\bibitem [{\citenamefont {Schwarz}\ \emph {et~al.}(2018)\citenamefont
  {Schwarz}, \citenamefont {Weymann}, \citenamefont {von Delft},\ and\
  \citenamefont {Weichselbaum}}]{Schwarz18Nonequilibrium}%
  \BibitemOpen
  \bibfield  {author} {\bibinfo {author} {\bibfnamefont {F.}~\bibnamefont
  {Schwarz}}, \bibinfo {author} {\bibfnamefont {I.}~\bibnamefont {Weymann}},
  \bibinfo {author} {\bibfnamefont {J.}~\bibnamefont {von Delft}},\ and\
  \bibinfo {author} {\bibfnamefont {A.}~\bibnamefont {Weichselbaum}},\
  }\bibfield  {title} {\bibinfo {title} {Nonequilibrium steady-state transport
  in quantum impurity models: A thermofield and quantum quench approach using
  matrix product states},\ }\href
  {https://doi.org/10.1103/PhysRevLett.121.137702} {\bibfield  {journal}
  {\bibinfo  {journal} {Phys. Rev. Lett.}\ }\textbf {\bibinfo {volume} {121}},\
  \bibinfo {pages} {137702} (\bibinfo {year} {2018})}\BibitemShut {NoStop}%
\bibitem [{\citenamefont {Dan}\ \emph {et~al.}(2023)\citenamefont {Dan},
  \citenamefont {Xu}, \citenamefont {Stockburger}, \citenamefont {Ankerhold},\
  and\ \citenamefont {Shi}}]{DanPRB2023HEOM}%
  \BibitemOpen
  \bibfield  {author} {\bibinfo {author} {\bibfnamefont {X.}~\bibnamefont
  {Dan}}, \bibinfo {author} {\bibfnamefont {M.}~\bibnamefont {Xu}}, \bibinfo
  {author} {\bibfnamefont {J.~T.}\ \bibnamefont {Stockburger}}, \bibinfo
  {author} {\bibfnamefont {J.}~\bibnamefont {Ankerhold}},\ and\ \bibinfo
  {author} {\bibfnamefont {Q.}~\bibnamefont {Shi}},\ }\bibfield  {title}
  {\bibinfo {title} {Efficient low-temperature simulations for fermionic
  reservoirs with the hierarchical equations of motion method: Application to
  the anderson impurity model},\ }\href
  {https://doi.org/10.1103/PhysRevB.107.195429} {\bibfield  {journal} {\bibinfo
   {journal} {Phys. Rev. B}\ }\textbf {\bibinfo {volume} {107}},\ \bibinfo
  {pages} {195429} (\bibinfo {year} {2023})}\BibitemShut {NoStop}%
\bibitem [{\citenamefont {Thoenniss}\ \emph
  {et~al.}(2023{\natexlab{a}})\citenamefont {Thoenniss}, \citenamefont
  {Lerose},\ and\ \citenamefont {Abanin}}]{ThoennissPRB2022}%
  \BibitemOpen
  \bibfield  {author} {\bibinfo {author} {\bibfnamefont {J.}~\bibnamefont
  {Thoenniss}}, \bibinfo {author} {\bibfnamefont {A.}~\bibnamefont {Lerose}},\
  and\ \bibinfo {author} {\bibfnamefont {D.~A.}\ \bibnamefont {Abanin}},\
  }\bibfield  {title} {\bibinfo {title} {Nonequilibrium quantum impurity
  problems via matrix-product states in the temporal domain},\ }\href
  {https://doi.org/10.1103/PhysRevB.107.195101} {\bibfield  {journal} {\bibinfo
   {journal} {Phys. Rev. B}\ }\textbf {\bibinfo {volume} {107}},\ \bibinfo
  {pages} {195101} (\bibinfo {year} {2023}{\natexlab{a}})}\BibitemShut
  {NoStop}%
\bibitem [{\citenamefont {Thoenniss}\ \emph
  {et~al.}(2023{\natexlab{b}})\citenamefont {Thoenniss}, \citenamefont
  {Sonner}, \citenamefont {Lerose},\ and\ \citenamefont
  {Abanin}}]{ThoennissPRB23}%
  \BibitemOpen
  \bibfield  {author} {\bibinfo {author} {\bibfnamefont {J.}~\bibnamefont
  {Thoenniss}}, \bibinfo {author} {\bibfnamefont {M.}~\bibnamefont {Sonner}},
  \bibinfo {author} {\bibfnamefont {A.}~\bibnamefont {Lerose}},\ and\ \bibinfo
  {author} {\bibfnamefont {D.~A.}\ \bibnamefont {Abanin}},\ }\bibfield  {title}
  {\bibinfo {title} {Efficient method for quantum impurity problems out of
  equilibrium},\ }\href {https://doi.org/10.1103/PhysRevB.107.L201115}
  {\bibfield  {journal} {\bibinfo  {journal} {Phys. Rev. B}\ }\textbf {\bibinfo
  {volume} {107}},\ \bibinfo {pages} {L201115} (\bibinfo {year}
  {2023}{\natexlab{b}})}\BibitemShut {NoStop}%
\bibitem [{\citenamefont {Ng}\ \emph {et~al.}(2023)\citenamefont {Ng},
  \citenamefont {Park}, \citenamefont {Millis}, \citenamefont {Chan},\ and\
  \citenamefont {Reichman}}]{NgPRB23}%
  \BibitemOpen
  \bibfield  {author} {\bibinfo {author} {\bibfnamefont {N.}~\bibnamefont
  {Ng}}, \bibinfo {author} {\bibfnamefont {G.}~\bibnamefont {Park}}, \bibinfo
  {author} {\bibfnamefont {A.~J.}\ \bibnamefont {Millis}}, \bibinfo {author}
  {\bibfnamefont {G.~K.-L.}\ \bibnamefont {Chan}},\ and\ \bibinfo {author}
  {\bibfnamefont {D.~R.}\ \bibnamefont {Reichman}},\ }\bibfield  {title}
  {\bibinfo {title} {Real-time evolution of anderson impurity models via tensor
  network influence functionals},\ }\href
  {https://doi.org/10.1103/PhysRevB.107.125103} {\bibfield  {journal} {\bibinfo
   {journal} {Phys. Rev. B}\ }\textbf {\bibinfo {volume} {107}},\ \bibinfo
  {pages} {125103} (\bibinfo {year} {2023})}\BibitemShut {NoStop}%
\bibitem [{\citenamefont {Park}\ \emph
  {et~al.}(2024{\natexlab{a}})\citenamefont {Park}, \citenamefont {Ng},
  \citenamefont {Reichman},\ and\ \citenamefont {Chan}}]{ParkPRB24}%
  \BibitemOpen
  \bibfield  {author} {\bibinfo {author} {\bibfnamefont {G.}~\bibnamefont
  {Park}}, \bibinfo {author} {\bibfnamefont {N.}~\bibnamefont {Ng}}, \bibinfo
  {author} {\bibfnamefont {D.~R.}\ \bibnamefont {Reichman}},\ and\ \bibinfo
  {author} {\bibfnamefont {G.~K.-L.}\ \bibnamefont {Chan}},\ }\bibfield
  {title} {\bibinfo {title} {Tensor network influence functionals in the
  continuous-time limit: Connections to quantum embedding, bath discretization,
  and higher-order time propagation},\ }\href
  {https://doi.org/10.1103/PhysRevB.110.045104} {\bibfield  {journal} {\bibinfo
   {journal} {Phys. Rev. B}\ }\textbf {\bibinfo {volume} {110}},\ \bibinfo
  {pages} {045104} (\bibinfo {year} {2024}{\natexlab{a}})}\BibitemShut
  {NoStop}%
\bibitem [{\citenamefont {Feynman}\ and\ \citenamefont
  {Vernon}(1963)}]{FeynmanVernon}%
  \BibitemOpen
  \bibfield  {author} {\bibinfo {author} {\bibfnamefont {R.}~\bibnamefont
  {Feynman}}\ and\ \bibinfo {author} {\bibfnamefont {F.}~\bibnamefont
  {Vernon}},\ }\bibfield  {title} {\bibinfo {title} {The theory of a general
  quantum system interacting with a linear dissipative system},\ }\href
  {https://doi.org/https://doi.org/10.1016/0003-4916(63)90068-X} {\bibfield
  {journal} {\bibinfo  {journal} {Annals of Physics}\ }\textbf {\bibinfo
  {volume} {24}},\ \bibinfo {pages} {118 } (\bibinfo {year}
  {1963})}\BibitemShut {NoStop}%
\bibitem [{\citenamefont {Lerose}\ \emph {et~al.}(2021)\citenamefont {Lerose},
  \citenamefont {Sonner},\ and\ \citenamefont {Abanin}}]{lerose2021}%
  \BibitemOpen
  \bibfield  {author} {\bibinfo {author} {\bibfnamefont {A.}~\bibnamefont
  {Lerose}}, \bibinfo {author} {\bibfnamefont {M.}~\bibnamefont {Sonner}},\
  and\ \bibinfo {author} {\bibfnamefont {D.~A.}\ \bibnamefont {Abanin}},\
  }\bibfield  {title} {\bibinfo {title} {Scaling of temporal entanglement in
  proximity to integrability},\ }\href
  {https://doi.org/10.1103/PhysRevB.104.035137} {\bibfield  {journal} {\bibinfo
   {journal} {Phys. Rev. B}\ }\textbf {\bibinfo {volume} {104}},\ \bibinfo
  {pages} {035137} (\bibinfo {year} {2021})}\BibitemShut {NoStop}%
\bibitem [{Note1()}]{Note1}%
  \BibitemOpen
  \bibinfo {note} {We note that a quasipolynomial-time classical algorithm for
  finding the ground state energy of QIM is available~\cite
  {Bravyi2016}}\BibitemShut {NoStop}%
\bibitem [{\citenamefont {Vilkoviskiy}\ and\ \citenamefont
  {Abanin}(2024)}]{VilkoviskiyPRB2024}%
  \BibitemOpen
  \bibfield  {author} {\bibinfo {author} {\bibfnamefont {I.}~\bibnamefont
  {Vilkoviskiy}}\ and\ \bibinfo {author} {\bibfnamefont {D.~A.}\ \bibnamefont
  {Abanin}},\ }\bibfield  {title} {\bibinfo {title} {Bound on approximating
  non-markovian dynamics by tensor networks in the time domain},\ }\href
  {https://doi.org/10.1103/PhysRevB.109.205126} {\bibfield  {journal} {\bibinfo
   {journal} {Phys. Rev. B}\ }\textbf {\bibinfo {volume} {109}},\ \bibinfo
  {pages} {205126} (\bibinfo {year} {2024})}\BibitemShut {NoStop}%
\bibitem [{\citenamefont {Tamascelli}\ \emph {et~al.}(2018)\citenamefont
  {Tamascelli}, \citenamefont {Smirne}, \citenamefont {Huelga},\ and\
  \citenamefont {Plenio}}]{TamascelliPRL2018}%
  \BibitemOpen
  \bibfield  {author} {\bibinfo {author} {\bibfnamefont {D.}~\bibnamefont
  {Tamascelli}}, \bibinfo {author} {\bibfnamefont {A.}~\bibnamefont {Smirne}},
  \bibinfo {author} {\bibfnamefont {S.~F.}\ \bibnamefont {Huelga}},\ and\
  \bibinfo {author} {\bibfnamefont {M.~B.}\ \bibnamefont {Plenio}},\ }\bibfield
   {title} {\bibinfo {title} {Nonperturbative treatment of non-markovian
  dynamics of open quantum systems},\ }\href
  {https://doi.org/10.1103/PhysRevLett.120.030402} {\bibfield  {journal}
  {\bibinfo  {journal} {Phys. Rev. Lett.}\ }\textbf {\bibinfo {volume} {120}},\
  \bibinfo {pages} {030402} (\bibinfo {year} {2018})}\BibitemShut {NoStop}%
\bibitem [{\citenamefont {Somoza}\ \emph {et~al.}(2019)\citenamefont {Somoza},
  \citenamefont {Marty}, \citenamefont {Lim}, \citenamefont {Huelga},\ and\
  \citenamefont {Plenio}}]{Somoza2019}%
  \BibitemOpen
  \bibfield  {author} {\bibinfo {author} {\bibfnamefont {A.~D.}\ \bibnamefont
  {Somoza}}, \bibinfo {author} {\bibfnamefont {O.}~\bibnamefont {Marty}},
  \bibinfo {author} {\bibfnamefont {J.}~\bibnamefont {Lim}}, \bibinfo {author}
  {\bibfnamefont {S.~F.}\ \bibnamefont {Huelga}},\ and\ \bibinfo {author}
  {\bibfnamefont {M.~B.}\ \bibnamefont {Plenio}},\ }\bibfield  {title}
  {\bibinfo {title} {Dissipation-assisted matrix product factorization},\
  }\href {https://doi.org/10.1103/PhysRevLett.123.100502} {\bibfield  {journal}
  {\bibinfo  {journal} {Phys. Rev. Lett.}\ }\textbf {\bibinfo {volume} {123}},\
  \bibinfo {pages} {100502} (\bibinfo {year} {2019})}\BibitemShut {NoStop}%
\bibitem [{\citenamefont {Xu}\ \emph {et~al.}(2022)\citenamefont {Xu},
  \citenamefont {Yan}, \citenamefont {Shi}, \citenamefont {Ankerhold},\ and\
  \citenamefont {Stockburger}}]{Xu2022}%
  \BibitemOpen
  \bibfield  {author} {\bibinfo {author} {\bibfnamefont {M.}~\bibnamefont
  {Xu}}, \bibinfo {author} {\bibfnamefont {Y.}~\bibnamefont {Yan}}, \bibinfo
  {author} {\bibfnamefont {Q.}~\bibnamefont {Shi}}, \bibinfo {author}
  {\bibfnamefont {J.}~\bibnamefont {Ankerhold}},\ and\ \bibinfo {author}
  {\bibfnamefont {J.~T.}\ \bibnamefont {Stockburger}},\ }\bibfield  {title}
  {\bibinfo {title} {Taming quantum noise for efficient low temperature
  simulations of open quantum systems},\ }\href
  {https://doi.org/10.1103/PhysRevLett.129.230601} {\bibfield  {journal}
  {\bibinfo  {journal} {Phys. Rev. Lett.}\ }\textbf {\bibinfo {volume} {129}},\
  \bibinfo {pages} {230601} (\bibinfo {year} {2022})}\BibitemShut {NoStop}%
\bibitem [{Note2()}]{Note2}%
  \BibitemOpen
  \bibinfo {note} {Nevertheless, the {\protect \it combination} of pseudomodes
  closely approximates the original, physical evolution, and, moreover, it can
  be complemented to fully physical evolution if a dissipative decay channel is
  added to the impurity evolution.}\BibitemShut {Stop}%
\bibitem [{\citenamefont {Kaye}\ \emph {et~al.}(2022)\citenamefont {Kaye},
  \citenamefont {Chen},\ and\ \citenamefont {Parcollet}}]{KayeDiscrete}%
  \BibitemOpen
  \bibfield  {author} {\bibinfo {author} {\bibfnamefont {J.}~\bibnamefont
  {Kaye}}, \bibinfo {author} {\bibfnamefont {K.}~\bibnamefont {Chen}},\ and\
  \bibinfo {author} {\bibfnamefont {O.}~\bibnamefont {Parcollet}},\ }\bibfield
  {title} {\bibinfo {title} {Discrete lehmann representation of imaginary time
  green's functions},\ }\href {https://doi.org/10.1103/PhysRevB.105.235115}
  {\bibfield  {journal} {\bibinfo  {journal} {Phys. Rev. B}\ }\textbf {\bibinfo
  {volume} {105}},\ \bibinfo {pages} {235115} (\bibinfo {year}
  {2022})}\BibitemShut {NoStop}%
\bibitem [{\citenamefont {Kiese}\ \emph {et~al.}(2024)\citenamefont {Kiese},
  \citenamefont {Strand}, \citenamefont {Chen}, \citenamefont {Wentzell},
  \citenamefont {Parcollet},\ and\ \citenamefont {Kaye}}]{kiese24Discrete}%
  \BibitemOpen
  \bibfield  {author} {\bibinfo {author} {\bibfnamefont {D.}~\bibnamefont
  {Kiese}}, \bibinfo {author} {\bibfnamefont {H.~U.~R.}\ \bibnamefont
  {Strand}}, \bibinfo {author} {\bibfnamefont {K.}~\bibnamefont {Chen}},
  \bibinfo {author} {\bibfnamefont {N.}~\bibnamefont {Wentzell}}, \bibinfo
  {author} {\bibfnamefont {O.}~\bibnamefont {Parcollet}},\ and\ \bibinfo
  {author} {\bibfnamefont {J.}~\bibnamefont {Kaye}},\ }\href
  {https://arxiv.org/abs/2405.06716} {\bibinfo {title} {Discrete lehmann
  representation of three-point functions}} (\bibinfo {year} {2024}),\ \Eprint
  {https://arxiv.org/abs/2405.06716} {arXiv:2405.06716 [physics.comp-ph]}
  \BibitemShut {NoStop}%
\bibitem [{\citenamefont {Nakatsukasa}\ \emph {et~al.}(2018)\citenamefont
  {Nakatsukasa}, \citenamefont {S\`{e}te},\ and\ \citenamefont
  {Trefethen}}]{NakatsukasaAAA}%
  \BibitemOpen
  \bibfield  {author} {\bibinfo {author} {\bibfnamefont {Y.}~\bibnamefont
  {Nakatsukasa}}, \bibinfo {author} {\bibfnamefont {O.}~\bibnamefont
  {S\`{e}te}},\ and\ \bibinfo {author} {\bibfnamefont {L.~N.}\ \bibnamefont
  {Trefethen}},\ }\bibfield  {title} {\bibinfo {title} {The aaa algorithm for
  rational approximation},\ }\href {https://doi.org/10.1137/16M1106122}
  {\bibfield  {journal} {\bibinfo  {journal} {SIAM Journal on Scientific
  Computing}\ }\textbf {\bibinfo {volume} {40}},\ \bibinfo {pages} {A1494}
  (\bibinfo {year} {2018})}\BibitemShut {NoStop}%
\bibitem [{\citenamefont {Beylkin}\ and\ \citenamefont
  {Monz{\'o}n}(2010)}]{Beylkin2010ApproximationBE}%
  \BibitemOpen
  \bibfield  {author} {\bibinfo {author} {\bibfnamefont {G.}~\bibnamefont
  {Beylkin}}\ and\ \bibinfo {author} {\bibfnamefont {L.}~\bibnamefont
  {Monz{\'o}n}},\ }\bibfield  {title} {\bibinfo {title} {Approximation by
  exponential sums revisited},\ }\href
  {https://api.semanticscholar.org/CorpusID:18481955} {\bibfield  {journal}
  {\bibinfo  {journal} {Applied and Computational Harmonic Analysis}\ }\textbf
  {\bibinfo {volume} {28}},\ \bibinfo {pages} {131} (\bibinfo {year}
  {2010})}\BibitemShut {NoStop}%
\bibitem [{\citenamefont {McNamee}\ \emph {et~al.}(1971)\citenamefont
  {McNamee}, \citenamefont {Stenger},\ and\ \citenamefont {Whitney}}]{McNamee}%
  \BibitemOpen
  \bibfield  {author} {\bibinfo {author} {\bibfnamefont {J.}~\bibnamefont
  {McNamee}}, \bibinfo {author} {\bibfnamefont {F.}~\bibnamefont {Stenger}},\
  and\ \bibinfo {author} {\bibfnamefont {E.~L.}\ \bibnamefont {Whitney}},\
  }\bibfield  {title} {\bibinfo {title} {Whittaker's cardinal function in
  retrospect},\ }\href {http://www.jstor.org/stable/2005140} {\bibfield
  {journal} {\bibinfo  {journal} {Mathematics of Computation}\ }\textbf
  {\bibinfo {volume} {25}},\ \bibinfo {pages} {141} (\bibinfo {year}
  {1971})}\BibitemShut {NoStop}%
\bibitem [{\citenamefont {Gu}\ and\ \citenamefont
  {Eisenstat}(1996)}]{GuEfficient}%
  \BibitemOpen
  \bibfield  {author} {\bibinfo {author} {\bibfnamefont {M.}~\bibnamefont
  {Gu}}\ and\ \bibinfo {author} {\bibfnamefont {S.~C.}\ \bibnamefont
  {Eisenstat}},\ }\bibfield  {title} {\bibinfo {title} {Efficient algorithms
  for computing a strong rank-revealing qr factorization},\ }\href@noop {}
  {\bibfield  {journal} {\bibinfo  {journal} {SIAM J. Sci. Comput.}\ }\textbf
  {\bibinfo {volume} {17}},\ \bibinfo {pages} {848–869} (\bibinfo {year}
  {1996})}\BibitemShut {NoStop}%
\bibitem [{\citenamefont {Cheng}\ \emph {et~al.}(2005)\citenamefont {Cheng},
  \citenamefont {Gimbutas}, \citenamefont {Martinsson},\ and\ \citenamefont
  {Rokhlin}}]{ChengCompression2005}%
  \BibitemOpen
  \bibfield  {author} {\bibinfo {author} {\bibfnamefont {H.}~\bibnamefont
  {Cheng}}, \bibinfo {author} {\bibfnamefont {Z.}~\bibnamefont {Gimbutas}},
  \bibinfo {author} {\bibfnamefont {P.~G.}\ \bibnamefont {Martinsson}},\ and\
  \bibinfo {author} {\bibfnamefont {V.}~\bibnamefont {Rokhlin}},\ }\bibfield
  {title} {\bibinfo {title} {On the compression of low rank matrices},\ }\href
  {https://doi.org/10.1137/030602678} {\bibfield  {journal} {\bibinfo
  {journal} {SIAM J. Sci. Comput.}\ }\textbf {\bibinfo {volume} {26}},\
  \bibinfo {pages} {1389–1404} (\bibinfo {year} {2005})}\BibitemShut
  {NoStop}%
\bibitem [{\citenamefont {Liberty}\ \emph {et~al.}(2007)\citenamefont
  {Liberty}, \citenamefont {Woolfe}, \citenamefont {Martinsson}, \citenamefont
  {Rokhlin},\ and\ \citenamefont {Tygert}}]{LibertyRandomized}%
  \BibitemOpen
  \bibfield  {author} {\bibinfo {author} {\bibfnamefont {E.}~\bibnamefont
  {Liberty}}, \bibinfo {author} {\bibfnamefont {F.}~\bibnamefont {Woolfe}},
  \bibinfo {author} {\bibfnamefont {P.-G.}\ \bibnamefont {Martinsson}},
  \bibinfo {author} {\bibfnamefont {V.}~\bibnamefont {Rokhlin}},\ and\ \bibinfo
  {author} {\bibfnamefont {M.}~\bibnamefont {Tygert}},\ }\bibfield  {title}
  {\bibinfo {title} {Randomized algorithms for the low-rank approximation of
  matrices},\ }\href@noop {} {\bibfield  {journal} {\bibinfo  {journal} {Proc.
  Natl. Acad. Sci. U.S.A.}\ }\textbf {\bibinfo {volume} {104}},\ \bibinfo
  {pages} {20167–20172} (\bibinfo {year} {2007})}\BibitemShut {NoStop}%
\bibitem [{\citenamefont {Woolfe}\ \emph {et~al.}(2008)\citenamefont {Woolfe},
  \citenamefont {Liberty}, \citenamefont {Rokhlin},\ and\ \citenamefont
  {Tygert}}]{Woolfe08Fast}%
  \BibitemOpen
  \bibfield  {author} {\bibinfo {author} {\bibfnamefont {F.}~\bibnamefont
  {Woolfe}}, \bibinfo {author} {\bibfnamefont {E.}~\bibnamefont {Liberty}},
  \bibinfo {author} {\bibfnamefont {V.}~\bibnamefont {Rokhlin}},\ and\ \bibinfo
  {author} {\bibfnamefont {M.}~\bibnamefont {Tygert}},\ }\bibfield  {title}
  {\bibinfo {title} {A fast randomized algorithm for the approximation of
  matrices},\ }\href
  {https://doi.org/https://doi.org/10.1016/j.acha.2007.12.002} {\bibfield
  {journal} {\bibinfo  {journal} {Applied and Computational Harmonic Analysis}\
  }\textbf {\bibinfo {volume} {25}},\ \bibinfo {pages} {335} (\bibinfo {year}
  {2008})}\BibitemShut {NoStop}%
\bibitem [{\citenamefont {Virtanen}\ \emph {et~al.}(2020)\citenamefont
  {Virtanen}, \citenamefont {Gommers}, \citenamefont {Oliphant}, \citenamefont
  {Haberland}, \citenamefont {Reddy}, \citenamefont {Cournapeau}, \citenamefont
  {Burovski}, \citenamefont {Peterson}, \citenamefont {Weckesser},
  \citenamefont {Bright}, \citenamefont {{van der Walt}}, \citenamefont
  {Brett}, \citenamefont {Wilson}, \citenamefont {Millman}, \citenamefont
  {Mayorov}, \citenamefont {Nelson}, \citenamefont {Jones}, \citenamefont
  {Kern}, \citenamefont {Larson}, \citenamefont {Carey}, \citenamefont {Polat},
  \citenamefont {Feng}, \citenamefont {Moore}, \citenamefont {{VanderPlas}},
  \citenamefont {Laxalde}, \citenamefont {Perktold}, \citenamefont {Cimrman},
  \citenamefont {Henriksen}, \citenamefont {Quintero}, \citenamefont {Harris},
  \citenamefont {Archibald}, \citenamefont {Ribeiro}, \citenamefont
  {Pedregosa}, \citenamefont {{van Mulbregt}},\ and\ \citenamefont {{SciPy 1.0
  Contributors}}}]{Virtanen20Scipy}%
  \BibitemOpen
  \bibfield  {author} {\bibinfo {author} {\bibfnamefont {P.}~\bibnamefont
  {Virtanen}}, \bibinfo {author} {\bibfnamefont {R.}~\bibnamefont {Gommers}},
  \bibinfo {author} {\bibfnamefont {T.~E.}\ \bibnamefont {Oliphant}}, \bibinfo
  {author} {\bibfnamefont {M.}~\bibnamefont {Haberland}}, \bibinfo {author}
  {\bibfnamefont {T.}~\bibnamefont {Reddy}}, \bibinfo {author} {\bibfnamefont
  {D.}~\bibnamefont {Cournapeau}}, \bibinfo {author} {\bibfnamefont
  {E.}~\bibnamefont {Burovski}}, \bibinfo {author} {\bibfnamefont
  {P.}~\bibnamefont {Peterson}}, \bibinfo {author} {\bibfnamefont
  {W.}~\bibnamefont {Weckesser}}, \bibinfo {author} {\bibfnamefont
  {J.}~\bibnamefont {Bright}}, \bibinfo {author} {\bibfnamefont {S.~J.}\
  \bibnamefont {{van der Walt}}}, \bibinfo {author} {\bibfnamefont
  {M.}~\bibnamefont {Brett}}, \bibinfo {author} {\bibfnamefont
  {J.}~\bibnamefont {Wilson}}, \bibinfo {author} {\bibfnamefont {K.~J.}\
  \bibnamefont {Millman}}, \bibinfo {author} {\bibfnamefont {N.}~\bibnamefont
  {Mayorov}}, \bibinfo {author} {\bibfnamefont {A.~R.~J.}\ \bibnamefont
  {Nelson}}, \bibinfo {author} {\bibfnamefont {E.}~\bibnamefont {Jones}},
  \bibinfo {author} {\bibfnamefont {R.}~\bibnamefont {Kern}}, \bibinfo {author}
  {\bibfnamefont {E.}~\bibnamefont {Larson}}, \bibinfo {author} {\bibfnamefont
  {C.~J.}\ \bibnamefont {Carey}}, \bibinfo {author} {\bibfnamefont
  {{\.I}.}~\bibnamefont {Polat}}, \bibinfo {author} {\bibfnamefont
  {Y.}~\bibnamefont {Feng}}, \bibinfo {author} {\bibfnamefont {E.~W.}\
  \bibnamefont {Moore}}, \bibinfo {author} {\bibfnamefont {J.}~\bibnamefont
  {{VanderPlas}}}, \bibinfo {author} {\bibfnamefont {D.}~\bibnamefont
  {Laxalde}}, \bibinfo {author} {\bibfnamefont {J.}~\bibnamefont {Perktold}},
  \bibinfo {author} {\bibfnamefont {R.}~\bibnamefont {Cimrman}}, \bibinfo
  {author} {\bibfnamefont {I.}~\bibnamefont {Henriksen}}, \bibinfo {author}
  {\bibfnamefont {E.~A.}\ \bibnamefont {Quintero}}, \bibinfo {author}
  {\bibfnamefont {C.~R.}\ \bibnamefont {Harris}}, \bibinfo {author}
  {\bibfnamefont {A.~M.}\ \bibnamefont {Archibald}}, \bibinfo {author}
  {\bibfnamefont {A.~H.}\ \bibnamefont {Ribeiro}}, \bibinfo {author}
  {\bibfnamefont {F.}~\bibnamefont {Pedregosa}}, \bibinfo {author}
  {\bibfnamefont {P.}~\bibnamefont {{van Mulbregt}}},\ and\ \bibinfo {author}
  {\bibnamefont {{SciPy 1.0 Contributors}}},\ }\bibfield  {title} {\bibinfo
  {title} {{{SciPy} 1.0: Fundamental Algorithms for Scientific Computing in
  Python}},\ }\href {https://doi.org/10.1038/s41592-019-0686-2} {\bibfield
  {journal} {\bibinfo  {journal} {Nature Methods}\ }\textbf {\bibinfo {volume}
  {17}},\ \bibinfo {pages} {261} (\bibinfo {year} {2020})}\BibitemShut
  {NoStop}%
\bibitem [{\citenamefont {Berrut}\ and\ \citenamefont
  {Trefethen}(2004)}]{BerrutBarycentric}%
  \BibitemOpen
  \bibfield  {author} {\bibinfo {author} {\bibfnamefont {J.-P.}\ \bibnamefont
  {Berrut}}\ and\ \bibinfo {author} {\bibfnamefont {L.~N.}\ \bibnamefont
  {Trefethen}},\ }\bibfield  {title} {\bibinfo {title} {Barycentric lagrange
  interpolation},\ }\href {https://doi.org/10.1137/S0036144502417715}
  {\bibfield  {journal} {\bibinfo  {journal} {SIAM Review}\ }\textbf {\bibinfo
  {volume} {46}},\ \bibinfo {pages} {501} (\bibinfo {year} {2004})},\ \Eprint
  {https://arxiv.org/abs/https://doi.org/10.1137/S0036144502417715}
  {https://doi.org/10.1137/S0036144502417715} \BibitemShut {NoStop}%
\bibitem [{\citenamefont {Hofreither}(2021)}]{Hofreither21Algorithm}%
  \BibitemOpen
  \bibfield  {author} {\bibinfo {author} {\bibfnamefont {C.}~\bibnamefont
  {Hofreither}},\ }\bibfield  {title} {\bibinfo {title} {An algorithm for best
  rational approximation based on barycentric rational interpolation},\
  }\href@noop {} {\bibfield  {journal} {\bibinfo  {journal} {Numerical
  Algorithms}\ ,\ \bibinfo {pages} {365}} (\bibinfo {year} {2021})}\BibitemShut
  {NoStop}%
\bibitem [{\citenamefont {Gimbutas}\ \emph {et~al.}(2020)\citenamefont
  {Gimbutas}, \citenamefont {Marshall},\ and\ \citenamefont
  {Rokhlin}}]{Gimbutas20Fast}%
  \BibitemOpen
  \bibfield  {author} {\bibinfo {author} {\bibfnamefont {Z.}~\bibnamefont
  {Gimbutas}}, \bibinfo {author} {\bibfnamefont {N.~F.}\ \bibnamefont
  {Marshall}},\ and\ \bibinfo {author} {\bibfnamefont {V.}~\bibnamefont
  {Rokhlin}},\ }\bibfield  {title} {\bibinfo {title} {A fast simple algorithm
  for computing the potential of charges on a line},\ }\href
  {https://doi.org/https://doi.org/10.1016/j.acha.2020.06.002} {\bibfield
  {journal} {\bibinfo  {journal} {Applied and Computational Harmonic Analysis}\
  }\textbf {\bibinfo {volume} {49}},\ \bibinfo {pages} {815} (\bibinfo {year}
  {2020})}\BibitemShut {NoStop}%
\bibitem [{Note3()}]{Note3}%
  \BibitemOpen
  \bibinfo {note} {The differences in the particle and hole component reflect
  algorithmic subtleties related to the order of sample points and do not
  contradict the equivalence of the particle and hole component upon
  substituting $\omega \to -\omega .$}\BibitemShut {NoStop}%
\bibitem [{\citenamefont {Park}\ \emph
  {et~al.}(2024{\natexlab{b}})\citenamefont {Park}, \citenamefont {Huang},
  \citenamefont {Zhu}, \citenamefont {Yang}, \citenamefont {Chan},\ and\
  \citenamefont {Lin}}]{ParkPseudomodesArxiv2024}%
  \BibitemOpen
  \bibfield  {author} {\bibinfo {author} {\bibfnamefont {G.}~\bibnamefont
  {Park}}, \bibinfo {author} {\bibfnamefont {Z.}~\bibnamefont {Huang}},
  \bibinfo {author} {\bibfnamefont {Y.}~\bibnamefont {Zhu}}, \bibinfo {author}
  {\bibfnamefont {C.}~\bibnamefont {Yang}}, \bibinfo {author} {\bibfnamefont
  {G.~K.-L.}\ \bibnamefont {Chan}},\ and\ \bibinfo {author} {\bibfnamefont
  {L.}~\bibnamefont {Lin}},\ }\href {https://arxiv.org/abs/2408.15529}
  {\bibinfo {title} {Quasi-lindblad pseudomode theory for open quantum
  systems}} (\bibinfo {year} {2024}{\natexlab{b}}),\ \Eprint
  {https://arxiv.org/abs/2408.15529} {arXiv:2408.15529 [quant-ph]} \BibitemShut
  {NoStop}%
\bibitem [{\citenamefont {Clavelli}\ and\ \citenamefont
  {Shapiro}(1973)}]{CLAVELLI1973490}%
  \BibitemOpen
  \bibfield  {author} {\bibinfo {author} {\bibfnamefont {L.}~\bibnamefont
  {Clavelli}}\ and\ \bibinfo {author} {\bibfnamefont {J.}~\bibnamefont
  {Shapiro}},\ }\bibfield  {title} {\bibinfo {title} {Pomeron factorization in
  general dual models},\ }\href
  {https://doi.org/https://doi.org/10.1016/0550-3213(73)90113-2} {\bibfield
  {journal} {\bibinfo  {journal} {Nuclear Physics B}\ }\textbf {\bibinfo
  {volume} {57}},\ \bibinfo {pages} {490} (\bibinfo {year} {1973})}\BibitemShut
  {NoStop}%
\end{thebibliography}%

\end{document}